\documentclass[article, a4]{amsart}
\usepackage {amsmath, amssymb, color, textcomp, amsfonts, mathtools}
\usepackage{mathtools}

\usepackage[top=0.8in, bottom=0.9in, left=0.9in, right=0.9in]{geometry}



\makeatletter
\newtheorem{thm}{Theorem}[section]
\newtheorem{cor}[thm]{Corollary}
\newtheorem{lem}[thm]{Lemma}
\newtheorem{defn}[thm]{Definition}
\newtheorem{prop}[thm]{Proposition}

\newtheorem{preremark}[thm]{Remark}
\newenvironment{remark}%
  {\begin{preremark}\upshape}{\end{preremark}}
\newtheorem{preexample}[thm]{Example}
\newenvironment{example}%
  {\begin{preexample}\upshape}{\end{preexample}}
\newtheorem{notation}[thm]{Notation}
\numberwithin{equation}{section}

\numberwithin{equation}{section}

\numberwithin{thm}{section}
\newtheorem{lem/defn}[thm]{Lemma/Definition}
\newtheorem{preex/defn}[thm]{Example/Definition}
\newenvironment{ex/defn}%
  {\begin{preex/defn}\upshape}{\end{preex/defn}}

\newcommand{\ten}{\otimes}

\begin{document}

\title[The two bosonizations of the CKP hierarchy: overview and character identities]{The two bosonizations of the CKP hierarchy: overview and character identities}
\author{Iana I. Anguelova}

\address{Department of Mathematics,  College of Charleston,
Charleston SC 29424 }
\email{anguelovai@cofc.edu}

\subjclass[2010]{81T40,  17B69, 17B68, 81R10}
\date{\today}

\keywords{ CKP hierarchy, bosonization, vertex algebras, symplectic fermions, partitions, character identities}

\begin{abstract}
We discuss the  Hirota bilinear equation for the CKP hierarchy  introduced in \cite{DJKM6},   and  its algebraic properties. We  review in parallel  the two  bosonizations of the CKP hierarchy:  one arising from a twisted  Heisenberg algebra (\cite{OrlovLeur}), and the second from an untwisted  Heisenberg algebra (\cite{AngCKPSecond}). In particular, we  recount  the decompositions into irreducible Heisenberg modules  and the (twisted) fermionic structures of the spaces spanned by the highest weight vectors under the two Heisenberg actions. We show that the  two bosonizations give rise to four different diagonalizable grading operators on the CKP Fock space, not all of them commuting among each other. We compute the various graded dimensions related to these four grading operators. We prove a sum-vs-product  identity relating the bosonic vs fermionic descriptions   under the untwisted Heisenberg action, utilizing the charge and the degree grading operators. As a corollary,  the resulting identities  relate the CKP hierarchy,   the Dyson  crank of a partition and  the Hammond-Lewis birank of a distinct integer bipartition.
\end{abstract}

\maketitle


\section{Introduction}
\label{sec:intro}

There are many ties between the area of integrable hierarchies on one side, and vertex algebras and  conformal field theory on the other. The fruitful relation between those two areas started with the realization that the Hirota bilinear equation associated to the  Kadomtsev-Petviashvili (KP) hierarchy can be concisely written in terms of vertex operators as follows (\cite{DJKM-KP}, \cite{KacRaina}, \cite{Kac}):
\[
Res_z \Big(\psi^+ (z)\otimes \psi^- (z)\Big) (\tau \ten \tau) =0,
\]
where $\psi^+ (z)$ and  $\psi ^-(z)$ are  two fermionic fields
with only nontrivial Operator Product Expansion (OPE)
\[
\psi^+ (z)\psi^- (w)\sim \frac{1}{z-w}\sim \psi^-(z)\psi^+(w).
\]
The vector $\tau$ on which these vertex operators  $\psi^+ (z)$ and  $\psi ^-(z)$ act belongs to (a subspace of) the Fock space which the  Clifford algebra  modes of these vertex operator create from the vacuum vector $|0\rangle$. The vector $\tau$ is often referred to as "the $\tau$ function".

As is well known, the advantage of this purely algebraic form of the Hirota equation (instead of the original Hirota equation, \cite{Hirota}), is twofold: first, it describes the entire hierarchy, not just the KP equation itself. And second,  one can easily prove that this KP  Hirota equation commutes with the action of the   $a_{\infty}$   Lie algebra (\cite{DJKM-KP}, \cite{KacRaina}, \cite{Miwa-book}), and as a result the solutions, i.e., the $\tau$ functions of the KP hierarchy, belong to the  orbit
\[
GL_{\infty}|0\rangle.
\]

Bosonization is the process of writing the vertex operators (in the KP case  $\psi^+ (z)$ and  $\psi ^-(z)$) in terms of exponentiated boson fields. It is a necessary process if one is to write the purely algebraic form of the Hirota equation into a hierarchy of actual differential equations, including  translating  the Fock space element $\tau$ into a function in the differential equation sense, over an infinite set of variables $x_1, x_2, \dots, x_n, \dots $.
As is well known, the bosonization process in the case of the KP hierarchy resulted in the
 boson-fermion correspondence, a vertex algebra isomorphism between the charged free fermions super vertex algebra and the  lattice super vertex algebra of the rank one odd lattice (see e.g. \cite{Kac}).

 In continuation of their work on the KP hierarchy,   Date, Jimbo, Kashiwara and Miwa  introduced two  new hierarchies related to the KP hierarchy: the BKP and the CKP hierarchies (\cite{DJKM-4} and \cite{DJKM6}).  These were initially defined  as  reductions from the KP hierarchy, by assuming conditions on the pseudo-differential operator $L$ in the Lax form used to define those hierarchies.
  For both of them Date, Jimbo, Kashiwara and Miwa suggested a Hirota bilinear equation, i.e., operator approach.
     The bosonization of the BKP case proceeded  similarly to the KP case, and from the point of view of vertex algebras   the bosonization of the BKP hierarchy resulted in the boson-fermion correspondence of type B (\cite{DJKM-4}, \cite{YouBKP}), which was later interpreted as an isomorphism of certain twisted vertex (chiral)  algebras (\cite{Ang-Varna2}, \cite{AngTVA}).

For the CKP hierarchy,   Date, Jimbo, Kashiwara and Miwa suggested in  \cite{DJKM6}  the following Hirota equation:
\begin{equation}\label{eqn:Hirotadef}
Res_z \big(\chi (z)\otimes \chi (-z)\big) (\tau \ten \tau) =0,
\end{equation}
where the field $\chi (z)$ has Operator Product Expansion (OPE)
\[
\chi(z)\chi(w)\sim \frac{1}{z+w}.
\]
If one is to rewrite this purely algebraic Hirota equation into a hierarchy of differential equations, once again one needs the process of bosonization. There are 3 stages to any bosonization:
\begin{enumerate}
\item Construct a bosonic Heisenberg current from the generating fields, hence obtaining a field representation of the Heisenberg algebra on the Fock space;
\item Decompose the Fock space into irreducible Heisenberg modules and determine the structure of the space spanned by the highest weight vectors under the Heisenberg action;
\item Use this decomposition to express the original generating fields in terms of exponential boson fields.
\end{enumerate}
The CKP hierarchy offered its first surprise among many when it was discovered that not one, but two different  bosonizations are possible. The first one was suggested by the original authors of the CKP hierarchy,  Date, Jimbo, Kashiwara and Miwa ( \cite{DJKM6}):  they suggested a twisted Heisenberg field, but did not complete the bosonization following from it.
In \cite{OrlovLeur} van de Leur, Orlov and Shiota  completed this twisted  bosonization and derived some applications of it. The second bosonization was made possible when the CKP hierarchy was related to the $\beta-\gamma$ system of conformal field theory (also called the boson ghost system), see \cite{AngMLB}. As a result of this relation, a second, untwisted  Heisenberg current  was constructed from the generating field $\chi (z)$. This opened the process of the second bosonization  of the CKP hierarchy, which was completed in \cite{AngCKPSecond}.

In this conference proceedings, we start by  discussing two important, albeit surprising, properties of the algebraic Hirota equation \eqref{eqn:Hirotadef} and its Fock space: that there are no finite sum solutions (Lemma \ref{lem:nosolutions}), but that there are series solutions in the series completion of the Fock space  (Proposition \ref{prop:symmetry} and its corollary). Specifically, as the name CKP suggests, Proposition \ref{prop:symmetry} shows that the Hirota equation \eqref{eqn:Hirotadef} commutes with the action of  the $c_{\infty}$ Lie algebra  (this was remarked upon and  used in \cite{OrlovLeur},  but without proof). Next, we
review the two bosonizations of the CKP hierarchy in parallel, by  addressing each of the three stages above. In particular, we discuss the Heisenberg decompositions, as well as the structure of  the  corresponding vector spaces spanned by the highest weight vectors for the two Heisenberg actions. In the untwisted case,  the vector space spanned by the highest weight vectors has a structure equivalent to  the symplectic fermion vertex algebra, as we showed in \cite{AngCKPSecond}. In the twisted case,  although in \cite{OrlovLeur} the authors don't use the language of vertex algebras, their calculations   show that the vector space spanned by the highest weight vectors has a structure of a twisted fermion vertex algebra (in the sense of \cite{ACJ}), see Theorem \ref{thm:symplVA} and Corollary \ref{cor:FDecomp}. In Section II  we also recount two of the gradings we will use in Section III: the charge grading, by the 0-mode  $h^{\mathbb{Z}}_0$ of the untwisted Heisenberg field, \eqref{eq:Heismodes}, and the degree grading, by the 0-mode $L_0$ of one of the Virasoro fields, \eqref{eqn:VirGrading}.

In Section III we introduce two new grading operators $L^t_0$ and $L^h_0$, derived from two of the other Virasoro fields, $L^{\chi} (z)$ (\eqref{eqn:Vir3})  and $L^h (z^2)$  (\eqref{eqn:Vir1}). We show that $L^t_0$ doesn't commute with the grading operators  $h^{\mathbb{Z}}_0$ and $L^h_0$ (Lemma \ref{lem:nonsiimultdiagon}).  $L^t_0$  commutes with $L_0$ though, and we derive the graded dimension with respect to these two gradings, see Proposition \ref{prop:q-t-trace}. The other three grading operators ($h^{\mathbb{Z}}_0$, $L_0$ and $L^h_0$) do commute among each other, and thus a three-parameter graded dimension can be formed. In order to calculate this three-parameter graded dimension we use the concept of  Hammond-Lewis birank of a bipartition (\cite{HL}, \cite{Garvan}). Based on  Corollary \ref{cor:FDecomp},  we show that there is a one-to-one correspondence between the set $\mathfrak{P}_{tdo}$  of distinct partitions with a triangular part describing the Heisenberg decomposition in the untwisted case, Theorem \ref{thm:HeisDecomp}, and the set of distinct integer bipartitions  $\mathfrak{BP}_{DI}$, see Proposition \ref{prop:counting-formula}. The Hammond-Lewis birank is in fact the charge of the highest weight vector uniquely assigned to such distinct integer bipartition. This one-to-one correspondence is not an isometry (it does not preserve the weights), but we derive a formula for  the degree of the highest weight vector assigned to  a given  distinct integer bipartition, which in turn relates the two weights, see \eqref{eqn:weightformula}. This formula allows us to calculate the three-parameter graded dimension, Theorem \ref{thm:triplecharacter}.

An important consequence of any bosonization is that  by calculating the graded dimensions on both the fermionic and the bosonic sides of the correspondence one can obtain identities relating certain product formulas to certain sum formulas. Such a sum-vs-product  identity perfectly illustrates the equality between the fermionic side (the product formulas) and the bosonic side (the sum formulas). In  the classical boson-fermion correspondence (of type A) the sum-vs-product identity relating the two sides is  the Jacobi triple product identity, as was proved  in \cite{Kac}. The sum-vs-product identity for the boson-fermion correspondence of type D-A is also the Jacobi truple product identity, in a slightly different form, as shown in \cite{AngD-A}.  Here, as is typical for the CKP quirks, this identity  is complicated by the fact that the degree operator $L_0$,  which is the most natural grading operator   to use for the CKP hierarchy,  doesn't act as uniformly on the symplectic fermion side with which the highest weight vectors space $\mathit{F^{hwv}_{\chi}}$ identifies. Also, in the CKP case, the "fermionic side" is not purely fermionic, but instead one can view it as a fermion times a boson (see Corollary \ref{cor:FDecomp})
Nevertheless, we  derive the relevant, if more complicated,  sum-vs-product formula for the CKP correspondence in Theorem \ref{thm:sum-vs-product}. The identity that follows  is somewhat surprising, see comment after Theorem  \ref{thm:sum-vs-product}, but we show in the Appendix that it can also be obtained  as a specialization of the Ramanujan Psi summation formula (see e.g. \cite{hardy1940ramanujan}, \cite{AndrewsRam-1}, \cite{AndrewsRam}).
The  graded dimension of the highest weight vectors space $\mathit{F^{hwv}_{\chi}}$ connects the CKP hierarchy  to  the  Dyson's crank of a partition (\cite{Garvan-crank}), via its generating function.   A certain sum of the number of partitions with a given Dyson crank allows us  to answer as to  the number of Heisenberg highest weight vectors with given degree and charge.
Finally, Corollary \ref{cor:dysoncrankcharacter} gives a quadruple identity relating in one of the equalities the Dyson crank with the Hammond-Lewis birank.

\section{The  CKP hierarchy and its two bosonizations: overview}

In \cite{DJKM6}    Date, Jimbo, Kashiwara and Miwa suggested the following Hirota equation associated with  the CKP hierarchy:
\begin{equation}\label{eqn:HirotaC}
Res_z \big(\chi (z)\otimes \chi (-z)\big) (\tau \ten \tau) =0,
\end{equation}
where  the twisted neutral boson field $\chi (z)$, indexed as
\begin{equation}
\chi (z) = \sum _{n\in \mathbb{Z}+1/2} \chi _n z^{-n-1/2},
\end{equation}
has OPE
\begin{equation}
\label{equation:OPE-C}
\chi(z)\chi(w)\sim \frac{1}{z+w}.
\end{equation}
 In terms of commutation relations for the modes $\chi_n, \ n\in \mathbb{Z}+1/2$,  this OPE is equivalent to
\begin{equation}
\label{eqn:Com-C}
[\chi_m, \chi_n]=(-1)^{m-\frac{1}{2}}\delta _{m, -n}1.
\end{equation}
The modes  $\chi_n, \ n\in \mathbb{Z}+1/2$  form a Lie algebra which we denote by $L_{\chi}$. The tau function $\tau$ in the above equation \eqref{eqn:HirotaC} is an element of (a completion of)   the Fock space $\mathit{F_{\chi}}$:
\begin{defn} Denote by  $\mathit{F_{\chi}}$  the induced vacuum module of $L_{\chi}$ with  vacuum vector $|0\rangle $, such that for $n>0$ we have $\chi_n|0\rangle=0$.
The vector space $\mathit{F_{\chi}}$ has a basis
\begin{equation}\label{eqn:monomials}
\{|0\rangle, \ \left(\chi _{-j_k}\right)^{m_k}\dots \left(\chi _{-j_2}\right)^{m_2}\left(\chi _{-j_1}\right)^{m_1}|0\rangle \  \arrowvert \ \ j_k>\dots > j_2 > j_1 > 0, \ j_i\in \mathbb{Z}+\frac{1}{2}, \ m_i > 0, m_i\in \mathbb{Z}, \ i=1, 2, \dots, k\}.
\end{equation}
\end{defn}
\begin{defn}
Define the Hirota operator by
\begin{equation}
S^C= Res_z \ \chi (z)\otimes \chi (-z)
\end{equation}
In modes $S^C$ translates to
\begin{equation}\label{eqn:Hirotamodes}
S^C = \sum _{n\in \mathbb{Z}+1/2} (-1)^{n-\frac{1}{2}}\chi _n \ten \chi _{-n} =  \chi _{\frac{1}{2}} \ten \chi _{-{\frac{1}{2}}} - \chi _{-\frac{1}{2}} \ten \chi _{{\frac{1}{2}}} -  \chi _{\frac{3}{2}} \ten \chi _{-{\frac{3}{2}}} + \chi _{-\frac{3}{2}} \ten \chi _{{\frac{3}{2}}}+\dots .
\end{equation}
\end{defn}
One of the  first  surprises of the CKP hierarchy and its Fock space is  the fact that there are no actual elements of $\mathit{F_{\chi}}$, besides the vacuum vector $|0\rangle$, that solve the algebraic Hirota equation:
\begin{lem}\label{lem:nosolutions} If $v\in \mathit{F_{\chi}}$ solves the Hirota equation \eqref{eqn:HirotaC},  $
Res_z \big(\chi (z)\otimes \chi (-z)\big) (v \ten v) =0$, \
then $v=|0\rangle$.
\end{lem}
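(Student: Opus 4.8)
The plan is to use the energy (degree) grading on $\mathit{F_\chi}$ to reduce to a homogeneous top component, and then to isolate a single, manifestly nonzero pure tensor inside $S^C(v\otimes v)$ sitting in an extremal bidegree, where no cancellation can occur.

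First I would equip $\mathit{F_\chi}$ with the standard grading in which $|0\rangle$ has degree $0$ and each creation mode $\chi_{-j}$ ($j>0$) raises the degree by $j$, so that the monomial in \eqref{eqn:monomials} has degree $\sum_i j_i m_i$; this is the eigenvalue grading of the operator $L_0$ referenced earlier in the paper. Since $\chi_n$ lowers the degree by $n$, each summand $\chi_n\otimes\chi_{-n}$ of $S^C$ preserves the total degree on $\mathit{F_\chi}\otimes\mathit{F_\chi}$, so $S^C$ is homogeneous of total degree $0$. Writing $v=\sum_{d\ge 0}v_d$ with $v_d$ homogeneous of degree $d$, and letting $D$ be the top degree (so $v_D\neq 0$), the component of $S^C(v\otimes v)$ in the maximal total degree $2D$ is exactly $S^C(v_D\otimes v_D)$, because $d+d'=2D$ with $d,d'\le D$ forces $d=d'=D$. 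Hence the Hirota equation forces $S^C(v_D\otimes v_D)=0$, and it suffices to show that this is impossible when $D>0$.

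Next I would rewrite the Hirota operator by splitting the sum over positive and negative modes; using \eqref{eqn:Com-C} to handle the sign $(-1)^{n-1/2}$, one obtains $S^C=\sum_{n>0}(-1)^{n-1/2}\big(\chi_n\otimes\chi_{-n}-\chi_{-n}\otimes\chi_n\big)$, consistent with \eqref{eqn:Hirotamodes}. For a nonzero homogeneous $w:=v_D$ of degree $D>0$, I would then examine the bidegree decomposition of $S^C(w\otimes w)$: the term $\chi_n w\otimes\chi_{-n}w$ sits in bidegree $(D-n,D+n)$ and the term $\chi_{-n}w\otimes\chi_n w$ in bidegree $(D+n,D-n)$. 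Consequently, for any fixed $n>0$ the component of $S^C(w\otimes w)$ in the bidegree $(D-n,D+n)$ (the one with strictly smaller first entry) receives a contribution from one single summand only, namely $(-1)^{n-1/2}\,\chi_n w\otimes\chi_{-n}w$. It then remains to choose $n$ so that this pure tensor is nonzero. Since $w$ has positive degree it is not a multiple of $|0\rangle$, so some creation mode $\chi_{-n}$ ($n>0$) occurs in the monomial expansion of $w$; for such an $n$ the annihilation mode $\chi_n$ lowers the corresponding power by a nonzero scalar and maps distinct basis monomials to distinct ones, so $\chi_n w\neq 0$. On the other hand $\chi_{-n}$ sends each basis monomial to another basis monomial and is therefore injective, so $\chi_{-n}w\neq 0$ as well. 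A tensor product of two nonzero vectors is nonzero, so the isolated bidegree $(D-n,D+n)$ component of $S^C(w\otimes w)$ does not vanish, contradicting $S^C(v_D\otimes v_D)=0$. Therefore $D=0$, $v$ is a scalar multiple of $|0\rangle$, and (after the usual normalization) $v=|0\rangle$.

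I expect the main obstacle to be this last, combinatorial step: verifying that no cancellation can occur, i.e.\ that $\chi_n$ does not annihilate $w$ for the chosen index $n$ and that the extremal bidegree genuinely isolates a single summand. Everything else is bookkeeping with the grading; the crux is that passing to the extremal bidegree turns the potentially complicated vector $S^C(w\otimes w)$ into one explicit pure tensor whose nonvanishing follows from injectivity of the creation operators on the monomial basis.
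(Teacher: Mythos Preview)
Your argument is correct. Both your proof and the paper's work by isolating an extremal component of $S^C(v\otimes v)$ that cannot be cancelled, but you organize the extremality differently. The paper picks the largest mode index $N$ appearing in $v$, writes $v=\sum_{i=0}^{m}\chi_{-N}^{\,i}P_i|0\rangle$ with $P_m\neq 0$, and observes that only the $n=\pm N$ summands of $S^C$ can raise the $\chi_{-N}$-power beyond $m$; the tensor component carrying $\chi_{-N}^{m+1}P_m|0\rangle$ in one factor then survives uncancelled. You instead use the $L_0$ (energy) grading to reduce first to the top homogeneous piece $w=v_D$, and then the bidegree on $\mathit{F_\chi}\otimes\mathit{F_\chi}$ to single out the lone summand $(-1)^{n-1/2}\chi_n w\otimes\chi_{-n}w$ in bidegree $(D-n,D+n)$. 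Your two-step reduction is a bit more systematic and ties the argument directly to the Virasoro grading already present in the paper; the paper's approach is more hands-on but avoids the preliminary passage to a homogeneous component. One small point of phrasing: when you say $\chi_n$ ``maps distinct basis monomials to distinct ones,'' you should note that monomials not containing $\chi_{-n}$ are sent to $0$; the conclusion $\chi_n w\neq 0$ still follows because $\chi_n$ is injective on the span of monomials that do contain $\chi_{-n}$, and that span meets $w$ nontrivially by your choice of $n$.
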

\begin{proof}
Let $v\in \mathit{F_{\chi}}$, and let $v\neq |0\rangle$. Then $v$ is a sum of monomials of the form \eqref{eqn:monomials}. Let  $N>0, \ N\in \mathbb{Z}+1/2$ be the largest $N$ such that there is a monomial  among the summands of $v$  which contains $\chi _{-N}$ (i.e., $-N$ is the lowest index among all the indexes of $\chi_j$ present in $v$). Then we can write $v$ in the form
\[
v=\chi_{-N}^mP_m\big( \chi _{-j_k}, \dots \chi _{-j_2},  \chi _{-j_1}\big)|0\rangle +\dots \chi_{-N}P_1\big( \chi _{-j_k}, \dots \chi _{-j_2},  \chi _{-j_1}\big)|0\rangle + P_0\big( \chi _{-j_k}, \dots \chi _{-j_2},  \chi _{-j_1}\big)|0\rangle,
\]
where $P_m, \dots , P_1, P_0$ are polynomials in the variables $\chi _{-j_k}, \dots  \chi _{-j_2}, \chi _{-j_1}$ with  $j_k, \dots j_2, j_1$ all strictly lower than $N$, $P_m\neq 0$, and $m\geq 1$.
Then we have
\begin{align*}
S^C& (v\ten v)= \sum _{n\in \mathbb{Z}+1/2} (-1)^{n-\frac{1}{2}}\chi _n \ten \chi _{-n} v\ten v \\
 &= (-1)^{N-\frac{1}{2}} \big(\chi_N \ten \chi_{-N} -  \chi_{-N}\ten \chi_N \big) v\ten v + \left(\sum _{n=-N+1}^{N-1} (-1)^{n-\frac{1}{2}}\chi _n \ten \chi _{-n}\right) v\ten v \\
&= (-1)^{N-\frac{1}{2}} \Big(\chi_{-N}^{m-1}P_m\big( \chi _{-j_k}, \dots ,   \chi _{-j_1}\big)|0\rangle +\dots P_1|0\rangle\Big)\ten  \Big(\chi_{-N}^{m+1}P_m\big( \chi _{-j_k}, \dots ,  \chi _{-j_1}\big)|0\rangle +\dots +\chi_{-N}P_0|0\rangle\Big) \\
&\ \ \ - (-1)^{N-\frac{1}{2}} \Big(\chi_{-N}^{m+1}P_m\big( \chi _{-j_k}, \dots ,  \chi _{-j_1}\big)|0\rangle +\dots + \chi_{-N}P_0|0\rangle\Big) \ten \Big(\chi_{-N}^{m-1}P_m\big( \chi _{-j_k}, \dots ,  \chi _{-j_1}\big)|0\rangle +\dots P_1|0\rangle\Big)\\
& \ \ \ \ \ +\left(\sum _{n=-N+1}^{N-1} (-1)^{n-\frac{1}{2}}\chi _n \ten \chi _{-n}\right) v\ten v.
\end{align*}
Since all the $j_k, \dots , j_2, j_1$ are strictly lower than $N$, the sum $ \left(\sum _{n=-N+1}^{N-1} (-1)^{n-\frac{1}{2}}\chi _n \ten \chi _{-n}\right) v\ten v$ will have terms that contain at most $\chi_{-N}^{m}$ in each of the tensor products. Thus there is no other similar term to cancel the nonzero $\chi_{-N}^{m+1}P_m\big( \chi _{-j_k}, \dots \chi _{-j_2},  \chi _{-j_1}\big)|0\rangle\ten \chi_{-N}^{m-1}P_m\big( \chi _{-j_k}, \dots \chi _{-j_2},  \chi _{-j_1}\big)|0\rangle $. Hence the sum of tensor products above is clearly nonzero.
\end{proof}
   This shows that there are no finite-sum solutions, in contrast to the KP case where every monomial in the charged free fermion Fock space is actually a solution to the corresponding KP Hirota equation.  Thus one has to necessarily go to a completion $\widetilde{\mathit{F_{\chi}}}$ of  $\mathit{F_{\chi}}$, where one considers series  of monomials instead of finite sums. Since  $\mathit{F_{\chi}}$  can be viewed as isomorphic to  a polynomial algebra, $\mathit{F_{\chi}}$ is dense in such a completion $\widetilde{\mathit{F_{\chi}}}$. On the other hand, in $\widetilde{\mathit{F_{\chi}}}$ we expect to have many solutions, due to the following representation theory considerations.
 As is known from e.g. \cite{WangKac}, \cite{WangDuality}, \cite{ACJ}, the field  $\chi (z)$ is  related to the double-infinite rank Lie algebra $c_{\infty}$. Recall (see \cite{Kac-Lie}), the Lie algebra  $c_{\infty}$ is defined via the  Lie algebra  $\bar{a}_{\infty}$  of infinite matrices of the form
\begin{equation}
\bar{a}_{\infty}=\{(a_{ij}) | \ i, j\in \mathbb{Z}, \ a_{ij} =0 \ \text{for} |i-j|\gg 0 \}.
\end{equation}
As usual denote the elementary matrices by $E_{ij}$.

The algebra $\bar{c}_{\infty}$ is the subalgebra of $\bar{a}_{\infty}$ consisting of the infinite matrices preserving the bilinear form $(v_i; v_j)=(-1)^i \delta_{i, 1-j}$, i.e.,
\begin{equation}
\bar{c}_{\infty}=\{(a_{ij})\in \bar{a}_{\infty} | \ a_{ij}=(-1)^{i+j-1}a_{1-j, 1-i} \}.
\end{equation}
The algebra $c_{\infty}$ is the central extension of
$\bar{c}_{\infty}$ by a central element $c$, $c_{\infty}=\bar{c}_{\infty}\oplus \mathbb{C} c$, with cocycle $C$
given by
\begin{align*}
C(E_{ij},E_{ji})&=-C(E_{ji},E_{ij})=1,\quad \text{if} \enspace i\leq 0,\enspace j\geq 1 \\
C(E_{ij},E_{kl})&=0\quad \text{ in all other cases}.
\end{align*}
The  commutation relations for the elementary matrices in the centrally extended algebra  are
\begin{align*}
[E_{ij},E_{kl}]=\delta_{jk}E_{il}-\delta_{li}E_{kj}+C(E_{ij},E_{kl})c.
\end{align*}
The elementary matrices though are not in $c_\infty$, instead the generators for the algebra $c_\infty$ can be written in terms of these elementary matrices as:
\[
\{ (-1)^j E_{i, j} -(-1)^i E_{1-j, 1-i}, \  i, j \in \mathbb{Z}; \ \text{and} \ \ c\}.
\]
We can arrange the non-central generators in a generating series
\begin{equation}
E^C(z, w) =\sum _{i, j\in \mathbb{Z}} ((-1)^jE_{ij}-(-1)^iE_{1-j, 1-i})z^{i-1}w^{-j}.
\end{equation}
The generating series  $E^C(z,w)$ obeys the  relations: $E^C(z,w) = E^C(w,z)$ (and thus one can view it as  bosonic), and
\begin{align*}
[E^C(z_1, w_1), &E^C(z_2, w_2)]=  E^C(z_1, w_2)\delta(z_2 + w_1) -E^C(z_2, w_1)\delta(z_1+ w_2) \\
&\hspace{1.7cm} - E^C(w_2,w_1)\delta(z_1 + z_2) + E^C(z_1,z_2)\delta(w_2 + w_1)   \\
&+ 2 \iota_{z_1,w_2} \frac{1}{z_1+ w_2} \iota_{w_1,z_2} \frac{1}{w_1  + z_2}c
- 2 \iota_{ w_2, z_1}\frac{1}{ w_2 + z_1} \iota_{z_2,w_1} \frac{1}{z_2 + w_1 }c \\
& + 2 \iota_{z_1,z_2 } \frac{1}{   z_1+ z_2} \iota_{w_1,w_2 } \frac{1}{w_1 + w_2 }c
-  2 \iota_{ z_2,z_1} \frac{1}{   z_2+ z_1} \iota_{ w_2,w_1} \frac{1}{w_2 + w_1 }c.
\end{align*}
Hence we can show that
\begin{prop}
The assignment $E(z, w)\to -:\chi(z)\chi (w):$, \ $c\to -\frac{1}{2}Id_{\mathit{F_C}}$ gives a representation of the Lie algebra $c_{\infty}$ on the Fock space $\mathit{F_{\chi}}$
\end{prop}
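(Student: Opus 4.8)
The plan is to verify directly that the proposed assignment respects the two structural relations listed for the generating series $E^C(z,w)$: the symmetry $E^C(z,w)=E^C(w,z)$ and the displayed bracket relation immediately preceding the statement, while simultaneously pinning down the scalar to which $c$ must be sent. Since $\chi$ is a bosonic (neutral) field, $[\chi(z),\chi(w)]$ is a scalar-valued distribution and normal ordering symmetrizes, so $:\!\chi(z)\chi(w)\!:\,=\,:\!\chi(w)\chi(z)\!:$; hence $-:\!\chi(z)\chi(w)\!:$ is automatically symmetric in $z,w$ and matches $E^C(z,w)=E^C(w,z)$ with no further work. The substance of the proof is therefore the bracket relation together with the identification of $c$.

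First I would record the two contraction data produced by \eqref{equation:OPE-C}--\eqref{eqn:Com-C}: the one-sided contraction $\langle\chi(z)\chi(w)\rangle=\iota_{z,w}\tfrac{1}{z+w}$, and the full field commutator $[\chi(z),\chi(w)]=\delta(z+w)$, where $\delta(z+w)=\iota_{z,w}\tfrac1{z+w}-\iota_{w,z}\tfrac1{z+w}$ is the formal delta. A point worth flagging at the outset is that, in contrast to the usual $\delta(z-w)$, this $\delta(z+w)$ is \emph{antisymmetric} under $z\leftrightarrow w$; this sign is exactly what will reconcile the orientations of the delta factors on the two sides of the bracket relation.

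Next I would compute $[-:\!\chi(z_1)\chi(w_1)\!:,\,-:\!\chi(z_2)\chi(w_2)\!:]$ by Wick's theorem for the commutator of two normal-ordered quadratics in a free boson: the two overall minus signs cancel, the fully normal-ordered quartic terms cancel by symmetry, and what survives splits into single- and double-contraction parts. The four single contractions each pair one field of the first bilinear with one field of the second, leaving a normal-ordered bilinear times the difference of the two expansion orders, i.e.\ times a $\delta(z+w)$; rewriting each surviving $:\!\chi\chi\!:$ as $-E^C$ and using the symmetry of $E^C$ together with the antisymmetry of $\delta(z+w)$ reproduces precisely the four terms $E^C(z_1,w_2)\delta(z_2+w_1)$, $-E^C(z_2,w_1)\delta(z_1+w_2)$, $-E^C(w_2,w_1)\delta(z_1+z_2)$ and $E^C(z_1,z_2)\delta(w_2+w_1)$. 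The two double contractions in each ordering produce the four scalar terms built from products of the form $\iota\,\tfrac1{z+w}\,\iota\,\tfrac1{z'+w'}$; comparing these against the central part of the target relation, which carries the prefactor $2c$, fixes the value of the central scalar and yields $c\to-\tfrac12\,\mathrm{Id}_{\mathit{F_{\chi}}}$.

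The step I expect to be the main obstacle is precisely this last comparison of the double-contraction scalars. One must keep scrupulous track of the expansion directions $\iota_{\cdot,\cdot}$ and of the signs arising from the antisymmetry of $\delta(z+w)$ and from re-ordering the scalar products, since it is the interaction of these with the factor $2$ in the cocycle term that determines the exact central value. By contrast, the single-contraction matching is essentially bookkeeping once the antisymmetry of $\delta(z+w)$ and the symmetry of $-:\!\chi\chi\!:$ are in place, whereas in the scalar comparison a single misread expansion convention would silently alter the constant.
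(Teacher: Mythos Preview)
Your approach is correct and is exactly the computation the paper has in mind: the proposition is stated in the paper without a written proof (the phrase ``Hence we can show that'' together with the subsequent reference to \cite{ACJ}, Proposition~6.2, is all that is offered), and the implicit argument is precisely the Wick-theorem verification you outline---check the symmetry of $-{:}\chi(z)\chi(w){:}$, compute the commutator of two such normal-ordered bilinears, match the four single-contraction terms to the $E^C\cdot\delta$ terms of the displayed bracket relation, and read off the value of $c$ from the double-contraction scalars. Your caution about the antisymmetry of $\delta(z+w)$ and about bookkeeping the expansion directions $\iota_{\cdot,\cdot}$ is well placed, since (as the paper itself remarks) getting the sign of the central charge right is the whole point of restating this proposition here.
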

Note that this is an important  correction from \cite{ACJ}, Proposition 6.2. In particular, the central charge is negative, which is very important, as it implies that the module $\mathit{F_{\chi}}$ is not an integrable module for the action of $c_{\infty}$. This in itself  implies that  we will need to consider a series completion of $\mathit{F_{\chi}}$ for the purposes of exponentiating the action of $c_{\infty}$ to the corresponding group; which  in light of Lemma \ref{lem:nosolutions} is entirely not surprising.
\begin{prop}\label{prop:symmetry}
The Hirota operator $S^C= Res_z \ \chi (z)\otimes \chi (-z)$ commutes with the action of $c_{\infty}$ on the  space $\mathit{F_{\chi}}\otimes \mathit{F_{\chi}}$, i.e.,
 \[
 \left[ E^C(z_1, w_1)\ten 1 +1\ten E^C(z_1, w_1),\  S^C\right] =0,
 \]
where for ease of notation we identify  $E(z_1, w_1)$ with its representation $-:\chi(z_1)\chi (w_1):$ on $\mathit{F_{\chi}}$. In addition,
\[
S^C \left(|0\rangle \otimes |0\rangle\right) =0.
\]
\end{prop}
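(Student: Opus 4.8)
The plan is to handle the two assertions separately. The vacuum identity is immediate: using the mode expansion \eqref{eqn:Hirotamodes}, $S^C=\sum_{n\in\mathbb{Z}+1/2}(-1)^{n-1/2}\chi_n\ten\chi_{-n}$, and applying it to $|0\rangle\ten|0\rangle$, in every summand either $n>0$ or $-n>0$ (since $n\in\mathbb{Z}+1/2$ is never $0$), so one of the two tensor factors annihilates its copy of the vacuum. Thus $S^C(|0\rangle\ten|0\rangle)=0$ termwise, with no cancellation needed.

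For the commutation statement I would reduce everything to the bracket of $E^C(z_1,w_1)=-:\chi(z_1)\chi(w_1):$ with a single field $\chi(u)$. From the mode relations \eqref{eqn:Com-C} one first records the field bracket $[\chi(z),\chi(w)]=\sum_{j\in\mathbb{Z}}(-1)^j z^{-j-1}w^{j}$, the formal delta attached to the OPE \eqref{equation:OPE-C}. Since normal ordering alters a product only by a scalar, $[:\chi(z_1)\chi(w_1):,\chi(u)]=[\chi(z_1)\chi(w_1),\chi(u)]$, so the Leibniz rule $[AB,C]=A[B,C]+[A,C]B$ gives
\[
[E^C(z_1,w_1),\chi(u)]=-\chi(z_1)\,[\chi(w_1),\chi(u)]-\chi(w_1)\,[\chi(z_1),\chi(u)].
\]
Writing $\Delta E:=E^C(z_1,w_1)\ten 1+1\ten E^C(z_1,w_1)$ and inserting $S^C=\Res_z\,\chi(z)\ten\chi(-z)$, the commutator expands as
\[
[\Delta E,S^C]=\Res_z\Big([E^C(z_1,w_1),\chi(z)]\ten\chi(-z)+\chi(z)\ten[E^C(z_1,w_1),\chi(-z)]\Big),
\]
which after substituting the bracket above splits into four residues, each pairing a bracket $[\chi(z_1\text{ or }w_1),\chi(\pm z)]$ against a field $\chi(\mp z)$. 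Extracting $\Res_z$ in modes collapses each of these to one of $\pm\chi(z_1)\ten\chi(w_1)$ or $\pm\chi(w_1)\ten\chi(z_1)$.

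The heart of the matter — and the only step I expect to be genuinely delicate — is the sign bookkeeping. The two residues coming from the second tensor factor carry the reflected field $\chi(-z)=\sum_k\chi_k(-1)^{-k-1/2}z^{-k-1/2}$, whose sign factors $(-1)^{-k-1/2}$, combined with the half-integer indexing $(-1)^{m-1/2}$ in \eqref{eqn:Com-C}, are exactly what is needed to make the four terms cancel in pairs: the two contributions proportional to $\chi(z_1)\ten\chi(w_1)$ (one from each tensor factor) cancel, and likewise the two proportional to $\chi(w_1)\ten\chi(z_1)$. It is precisely this interplay of the $z\mapsto -z$ reflection built into $S^C$ with the $+$ sign in the OPE $\chi(z)\chi(w)\sim 1/(z+w)$ that produces $[\Delta E,S^C]=0$; with any other sign choice the four terms would add rather than cancel. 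The residues are finite coefficient extractions on $\mathit{F_{\chi}}\ten\mathit{F_{\chi}}$, so no convergence issue arises, and since the central generator of $c_{\infty}$ acts by the scalar $-\tfrac12\,\mathrm{Id}$ it commutes with $S^C$ trivially; hence the vanishing of the bracket with the non-central generating series $E^C(z_1,w_1)$ already yields the full $c_{\infty}$-invariance.
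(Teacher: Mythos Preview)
Your proposal is correct and follows essentially the same route as the paper's proof: both reduce the commutator to four residue terms via the bracket $[-:\chi(z_1)\chi(w_1):,\chi(u)]$, then show these four terms cancel in pairs as $\chi(w_1)\otimes\chi(z_1)+\chi(z_1)\otimes\chi(w_1)-\chi(z_1)\otimes\chi(w_1)-\chi(w_1)\otimes\chi(z_1)=0$. The only difference is packaging: the paper carries out the sign analysis using the formal delta function $\delta(z;\lambda w)$ and its residue and antisymmetry properties ($\Res_z f(z)\delta(z;\lambda w)=f(\lambda w)$ and $\delta(z_1;-z_2)=-\delta(z_2;-z_1)$), whereas you work directly in modes with the half-integer sign factors $(-1)^{m-1/2}$ --- but these are the same computation, and your identification of the $z\mapsto -z$ reflection in $S^C$ interacting with the $+$ sign in the OPE as the mechanism for cancellation is exactly right.
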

A statement similar to this proposition is mentioned in \cite{OrlovLeur}, but without proof.
\begin{proof}
The second property  $S^C \left(|0\rangle \otimes |0\rangle\right) =0$  is trivially satisfied, as the modes representation  \eqref{eqn:Hirotamodes} of the Hirota operator  shows, since  for
 $n>0$ we have  $\chi _n  |0\rangle =0$, and for $n<0$ we have  $\chi _{-n}  |0\rangle =0$.

To prove the first property, we will use  the formal delta function notation (the formal delta-function at $z= w$, for $\lambda =1$, is defined by e.g. \cite{Kac}, it is defined for general $\lambda$ in  \cite{ACJ}):
$$
\delta(z; \lambda w): = \sum_{n\in\mathbb Z}\lambda ^{-n-1} z^nw^{-n-1}.
$$
By abuse of notation one sometimes  writes $\delta(z-\lambda w)$,  even though $\delta(z; \lambda w)$  depends on two formal variables $z$ and $w$, and a parameter $\lambda$. This is especially confusing for $\lambda =-1$, as then we have the rather peculiar fact that  $\delta(z_1+z_2)=\delta(z_1;  -z_2)=-\delta(z_2+z_1)=-\delta(z_2;  -z_1)$.

We calculate by Wick's Theorem
\begin{align*}
[ E^C(z_1, w_1)&\ten 1 +1\ten E^C(z_1, w_1),\  S^C]  \\
 &= Res_z \Big(\left[-:\chi(z_1)\chi (w_1):\ , \  \chi (z)\right] \otimes \chi (-z)\Big) + Res_z \Big(\chi (z) \otimes  \left[-:\chi(z_1)\chi (w_1): \ , \  \chi (-z)\right]\Big)\\
& = -Res_z \Big( \delta (z_1;  - z)\chi (w_1) \otimes \chi (-z)\Big) -Res_z \Big( \delta (w_1; - z)\chi (z_1) \otimes \chi (-z)\Big) \\
& -Res_z \Big( \chi (z)\otimes \delta (z_1;  z)\chi (w_1)\Big) -Res_z \Big( \chi (z)\otimes \delta (w_1;   z)\chi (z_1)\Big)\\
& = -Res_z \Big( \chi (w_1) \otimes \delta (z_1;  - z)\chi (-z)\Big) -Res_z \Big( \chi (z_1) \otimes \delta (w_1; - z)\chi (-z)\Big) \\
& -Res_z \Big( \delta (z_1;   z)\chi (z)\otimes \chi (w_1)\Big) -Res_z \Big( \delta (w_1;  z)\chi (z)\otimes \chi (z_1)\Big)
\end{align*}
Now we use the properties of the delta functions established in \cite{ACJ}, in particular that
\[
\text{Res}_z f(z) \delta(z; \lambda w) =f(\lambda w),
\]
as well as the above-mentioned fact that $\delta(z_1;  -z_2)=-\delta(z_2;  -z_1)$, but  $\delta(z_1;  z_2)= \delta(z_2;   z_1)$,  and we get
\[
[ E^C(z_1 , w_1)\ten 1 +1\ten E^C(z_1, w_1),\  S^C] =  \chi (w_1) \otimes \chi (z_1) +\chi (z_1) \otimes \chi (w_1) - \chi (z_1)\otimes \chi (w_1) - \chi (w_1)\otimes \chi (z_1) =0.
\]
\end{proof}
This proposition ensures that when we exponentiate the action of  $c_{\infty}$ to a  corresponding group  action on the series completion  $\widetilde{\mathit{F_{\chi}}}$ of  $\mathit{F_{\chi}}$, each element of the resulting group orbit of the vacuum vector $ |0\rangle$  will be a solution  to the Hirota equation \eqref{eqn:HirotaC}:
\begin{cor}
Let $\tau =\exp{\left(-\sum c_{m n}:\chi _{m}\chi _{n}:\right)}|0\rangle \in \widetilde{\mathit{F_{\chi}}}$. Then $\tau$ is a solution of the Hirota equation \eqref{eqn:HirotaC}, i.e.,
\[
Res_z \big(\chi (z)\otimes \chi (-z)\big) (\tau \ten \tau) =0.
\]
\end{cor}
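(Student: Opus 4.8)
The plan is to recognize $\tau$ as a group element applied to the vacuum, and then to transport the commutation relation of Proposition~\ref{prop:symmetry} through the exponential, finishing with the second assertion $S^C(|0\rangle\otimes|0\rangle)=0$ of that same proposition.

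First I would observe that the exponent $X := -\sum c_{mn} :\chi_m\chi_n:$ lies in the image of the representation of $c_\infty$ on $\mathit{F_{\chi}}$. Expanding the generating series under the assignment $E^C(z,w)\mapsto -:\chi(z)\chi(w):$ into modes shows that the normally ordered quadratics $:\chi_m\chi_n:$ are exactly the images of the non-central generators of $c_\infty$; hence any (suitably convergent) linear combination of them is again represented by an element of $c_\infty$, and we may write $\tau=g\,|0\rangle$ with $g=\exp(X)$.

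Next, Proposition~\ref{prop:symmetry} provides the formal identity $[E^C(z_1,w_1)\otimes 1+1\otimes E^C(z_1,w_1),\,S^C]=0$ in the variables $z_1,w_1$. Reading it mode by mode and forming the same linear combination that defines $X$, I obtain
\[
[\,X\otimes 1+1\otimes X,\ S^C\,]=0 .
\]
Writing $\Delta(X):=X\otimes 1+1\otimes X$ for the diagonal action on $\mathit{F_{\chi}}\otimes\mathit{F_{\chi}}$, and using that $X\otimes 1$ commutes with $1\otimes X$, I would exponentiate: $\exp(\Delta(X))=(\exp X)\otimes(\exp X)=g\otimes g$. Since $\Delta(X)$ commutes with $S^C$, so does its exponential, via the adjoint-action expansion
\[
\exp(\Delta(X))\,S^C\,\exp(-\Delta(X))=S^C+[\Delta(X),S^C]+\tfrac{1}{2!}[\Delta(X),[\Delta(X),S^C]]+\cdots=S^C ,
\]
so that $(g\otimes g)\,S^C=S^C\,(g\otimes g)$. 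Applying both sides to $|0\rangle\otimes|0\rangle$ then gives
\[
S^C(\tau\otimes\tau)=S^C(g\otimes g)(|0\rangle\otimes|0\rangle)=(g\otimes g)\,S^C(|0\rangle\otimes|0\rangle)=0 .
\]

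The step I expect to be the main obstacle is the analytic rather than the algebraic part: justifying that this group-theoretic argument survives passage to the completion $\widetilde{\mathit{F_{\chi}}}$. Because the central charge is negative the module is non-integrable, so $g=\exp(X)$ genuinely leaves $\mathit{F_{\chi}}$ and lands in $\widetilde{\mathit{F_{\chi}}}$, consistent with Lemma~\ref{lem:nosolutions}, which rules out finite-sum solutions. One must therefore check that $g\otimes g$ is well defined on the completion, that the adjoint-series manipulation above converges there, and that the infinite sum defining $S^C$ may be interchanged with $g\otimes g$ in the topology of $\widetilde{\mathit{F_{\chi}}}$. The purely formal commutator computation is the easy part; controlling these infinite processes in the completion is where the care is required.
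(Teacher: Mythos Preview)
Your argument is exactly the intended one: the paper does not give a separate proof of this Corollary, presenting it instead as the immediate consequence of Proposition~\ref{prop:symmetry} obtained by exponentiating the commutation $[\Delta(E^C),S^C]=0$ and applying the result to $|0\rangle\otimes|0\rangle$. Your write-up in fact supplies more detail than the paper does, including the honest acknowledgement that the convergence issues in $\widetilde{\mathit{F_{\chi}}}$ are taken for granted rather than verified.
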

Thus the Hirota equation \eqref{eqn:HirotaC} is guaranteed to have solutions in $\widetilde{\mathit{F_{\chi}}}$.
In order to write such solutions as $\tau$ functions solving a hierarchy of actual differential equations, we need to bosonize  the purely algebraic Hirota equation.  Any bosonization has three natural stages:
\begin{enumerate}
\item Construct a bosonic Heisenberg current from the generating fields, hence obtaining a field representation of the Heisenberg algebra on the Fock space;
\item Decompose the Fock space into irreducible Heisenberg modules and determine the structure of the space spanned by the highest weight vectors under the Heisenberg action;
\item Use this decomposition to express the original generating fields in terms of exponential boson fields.
\end{enumerate}
The surprise that not one, but two different bosonizations exist in the CKP case,  stems from the fact that there is not one, but two different Heisenberg fields generated by the field $\chi (z)$ and its descendant field $\chi (-z)$. The first, twisted, Heisenberg representation was suggested in the original paper \cite{DJKM6} introducing the CKP hierarchy, and the bosonization initiated by the twisted Heisenberg current was  studied in \cite{OrlovLeur}.  For the purposes of exposition, in this conference proceedings we will  summarize their results regarding the twisted bosonization, as well as our own. The existence of a second, untwisted Heisenberg field was established in \cite{AngMLB}, and the bosonization initiated by the untwisted Heisenberg current was completed in \cite{AngCKPSecond}. The following proposition summarizes the construction of the two Heisenberg fields from the generating field $\chi (z)$:
\begin{prop}\label{prop:Heis-chi} (\cite{AngMLB})
I. Let
\[
h_\chi^{\mathbb{Z}+1/2}(z)= \frac{1}{2}:\chi (z)\chi(-z):.
\]
We have $h_\chi^{\mathbb{Z}+1/2}(-z)=h_\chi^{\mathbb{Z}+1/2}(z)$, and we index $h_\chi^{\mathbb{Z}+1/2} (z)$ as
$h_\chi^{\mathbb{Z}+1/2} (z)=\sum _{n\in \mathbb{Z}+1/2} h^t_{n} z^{-2n-1}$. The field $h_\chi^{\mathbb{Z}+1/2} (z)$ has OPE with itself given by:
\begin{equation}
\label{eqn:HeisOPEsC-t}
 h^{\mathbb{Z}+1/2}_\chi(z)h^{\mathbb{Z}+1/2}_\chi (w)\sim -\frac{z^2 +w^2}{2(z^2 -w^2)^2}\sim -\frac{1}{4}\frac{1}{(z-w)^2} - \frac{1}{4}\frac{1}{(z+w)^2} ,
\end{equation}
and its  modes, $h^t_n, \ n\in \mathbb{Z}+1/2$, generate a \textbf{twisted} Heisenberg algebra $\mathcal{H}_{\mathbb{Z}+1/2}$ with relations \\ $[h^t_m, h^t_n]=-m\delta _{m+n,0}1$, \ $m,n\in \mathbb{Z}+1/2$.\\
II. Let
\[
h_\chi^{\mathbb{Z}}(z)= \frac{1}{4z}\left(:\chi (z)\chi(z):- :\chi (-z)\chi(-z):\right).
\]
 We have $h_\chi^{\mathbb{Z}}(-z)=h_\chi^{\mathbb{Z}}(z)$, and we index $h_\chi^{\mathbb{Z}} (z)$ as
$h_\chi^{\mathbb{Z}} (z)=\sum _{n\in \mathbb{Z}} h^{\mathbb{Z}}_{n} z^{-2n-2}$. The field $h_\chi^{\mathbb{Z}} (z)$ has OPE with itself given by:
\begin{equation}
\label{eqn:HeisOPEsC-ut}
 h_\chi^{\mathbb{Z}}(z)h_\chi^{\mathbb{Z}} (w)\sim -\frac{1}{(z^2 -w^2)^2},
\end{equation}
and its  modes, $h^{\mathbb{Z}}_n, \ n\in \mathbb{Z}$, generate an \textbf{untwisted} Heisenberg algebra $\mathcal{H}_{\mathbb{Z}}$ with relations $[h^{\mathbb{Z}}_m, h^{\mathbb{Z}}_n]=-m\delta _{m+n,0}1$, \ $m,n\in \mathbb{Z}$.
 \end{prop}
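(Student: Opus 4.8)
The plan is to handle both currents in parallel in three steps: verify the $z\mapsto -z$ symmetries, compute the self-OPEs by Wick's theorem, and then read the mode commutation relations off the singular parts. I would begin with the symmetries, which are immediate from the fact that $\chi(z)$ is bosonic: since the modes obey the commutation relations \eqref{eqn:Com-C}, the normally ordered product is symmetric, $:\chi(a)\chi(b):\,=\,:\chi(b)\chi(a):$. For part I this gives $h_\chi^{\mathbb{Z}+1/2}(-z)=\tfrac12:\chi(-z)\chi(z):=\tfrac12:\chi(z)\chi(-z):=h_\chi^{\mathbb{Z}+1/2}(z)$, and for part II the prefactor $\tfrac1{4z}$ and the bracket $:\chi(z)\chi(z):-\,:\chi(-z)\chi(-z):$ are each odd in $z$, so their product is even. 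These parities are exactly what permit the stated expansions in powers of $z^2$.

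The central computation is the self-OPE, obtained from Wick's theorem with the single contraction $\chi(a)\chi(b)\to\iota_{a,b}\frac1{a+b}$ dictated by \eqref{equation:OPE-C}. For part I, the product $:\chi(z)\chi(-z)::\chi(w)\chi(-w):$ has two fully contracted terms, $-\frac1{(z+w)^2}$ and $-\frac1{(z-w)^2}$, so after the factor $\tfrac14$ the c-number part is exactly $-\tfrac14\frac1{(z-w)^2}-\tfrac14\frac1{(z+w)^2}$; placing the two fractions over the common denominator $(z^2-w^2)^2$ rewrites this as $-\frac{z^2+w^2}{2(z^2-w^2)^2}$. For part II the same bookkeeping applied to $:\chi(\pm z)\chi(\pm z):$ yields the c-number part $\frac1{4zw}\bigl(\frac1{(z+w)^2}-\frac1{(z-w)^2}\bigr)$, which collapses to $-\frac1{(z^2-w^2)^2}$ via $\frac1{(z+w)^2}-\frac1{(z-w)^2}=-4zw/(z^2-w^2)^2$.

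The step I expect to take the most care is showing that the singly contracted, operator-valued terms contribute nothing to the singular part, so that each OPE is purely central as stated. In both parts these terms assemble into pieces of the form $\frac1{z-w}\bigl(:\chi(-z)\chi(w):-\,:\chi(z)\chi(-w):\bigr)$ and $\frac1{z+w}\bigl(:\chi(-z)\chi(-w):-\,:\chi(z)\chi(w):\bigr)$. By the bosonic symmetry of the normal ordering, each bracket vanishes precisely where its denominator does (at $z=w$ and at $z=-w$ respectively), so every apparent operator pole is removable and the full operator contribution is regular. This is the structural reason the two fields generate honest Heisenberg algebras rather than currents with operator-valued brackets, and it is the part of the argument I would write out most carefully.

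Finally I would extract the mode brackets from the central OPEs. Writing $h^t_m=\Res_z z^{2m}h_\chi^{\mathbb{Z}+1/2}(z)$ and $h^{\mathbb{Z}}_m=\Res_z z^{2m+1}h_\chi^{\mathbb{Z}}(z)$, the commutator $[h_m,h_n]$ equals the double residue $\Res_z\Res_w$ of the corresponding monomial in $z$ and $w$ times the field commutator $[h(z),h(w)]$, which is the formal-distribution version of the singular OPE. Using the $\delta(z;\lambda w)$ calculus of \cite{ACJ} already invoked for Proposition \ref{prop:symmetry}, the pole $\frac1{(z-w)^2}$ supplies $\partial_w\delta(z;w)$ and the pole $\frac1{(z+w)^2}$ supplies $-\partial_w\delta(z;-w)$; performing the residues and tracking the sign $(-1)^{2m}=-1$ produced by the half-integer indexing in part I, the two poles add to give $[h^t_m,h^t_n]=-m\delta_{m+n,0}$. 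Part II is identical in spirit; equivalently, the substitution $Z=z^2,\ W=w^2$ turns $h_\chi^{\mathbb{Z}}$ into an ordinary current with OPE $-1/(Z-W)^2$, whose modes satisfy $[h^{\mathbb{Z}}_m,h^{\mathbb{Z}}_n]=-m\delta_{m+n,0}$ once the double winding of $z\mapsto z^2$ is accounted for. This reproduces both stated relations together with the twisted ($\mathbb{Z}+1/2$) versus untwisted ($\mathbb{Z}$) mode structure.
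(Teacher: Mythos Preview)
Your proposal is correct and is the standard route: Wick's theorem for the double contractions gives the stated c-number singularities, the single contractions are rendered regular by the bosonic symmetry $:\chi(a)\chi(b):\,=\,:\chi(b)\chi(a):$, and the Heisenberg relations then follow from the delta-function calculus. The paper itself does not give a proof of this proposition; it is quoted from \cite{AngMLB}, so there is no in-paper argument to compare against---your Wick-theorem computation is exactly the expected proof and matches what that reference does.

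One small imprecision worth tightening when you write this up: in Part~II the single-contraction terms do assemble into the same two bracket shapes you wrote, but with an overall sign flip on the $\tfrac{1}{z+w}$ piece relative to Part~I (and an extra factor of $\tfrac{1}{zw}$ in front); the cancellation argument is unaffected since the bracket still vanishes at $z=-w$. Also, your remark about ``$(-1)^{2m}=-1$'' in the half-integer case is morally right but slightly misplaced---the relevant parity is that $(-z)^{2n-1}=z^{2n-1}$ for $n\in\mathbb{Z}+\tfrac12$ when evaluating $\mathrm{Res}_w\,w^{2n}\partial_w\delta(z;-w)$, which is what makes the two pole contributions add rather than cancel. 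With those cosmetic fixes your argument is complete.
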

Next we follow with the decomposition in terms of irreducible Heisenberg modules.
The reason a bosonization procedure of the algebraic Hirota equation is essential in order to produce actual differential equations lies in the well known fact   (see e.g. \cite{KacRaina}, \cite{FLM}) that any irreducible highest weight module of the untwisted Heisenberg
algebra $\mathcal{H}_{\mathbb{Z}}$  is isomorphic to the polynomial algebra with infinitely many variables $\mathit{B_\lambda}\cong \mathbb{C}[x_1,
x_2, \dots , x_n, \dots ]$ where $v\mapsto 1$ and we choose the scaling:
\begin{equation}
h^{\mathbb{Z}}_n\mapsto i\partial _{x_{n}}, \quad h^{\mathbb{Z}}_{-n} \mapsto
inx_n\cdot, \quad \text{for any} \ \ n\in \mathbb{N}, \quad h^{\mathbb{Z}}_0\mapsto \lambda\cdot , \ \lambda \in \mathbb{C}.
\end{equation}
Similarly for the twisted Heisenberg algebra:
$\mathcal{H}_{\mathbb{Z}+1/2}$ has (up-to isomorphism) only one irreducible highest weight module $B_{1/2}\cong \mathbb{C}[t_1, t_3, \dots , t_{2n-1}, \dots ]$, via (we choose the same scaling as in \cite{OrlovLeur}):
\begin{equation}
h^t_{\frac{2n-1}{2}}\mapsto \partial _{t_{2n-1}}, \quad h^t_{-\frac{2n-1}{2}} \mapsto
-\frac{2n-1}{2}t_{2n-1}\cdot, \quad \text{for any} \ \ n\in \mathbb{N}.
\end{equation}

Before we proceed with the decomposition in terms of irreducible Heisenberg modules we first need to recall the grading operators that act on  $\mathit{F_{\chi}}$. In \cite{AngCKPSecond}  we introduced two important gradings on  $\mathit{F_{\chi}}$, one derived from  the action of $h^{\mathbb{Z}}_0$ of the untwisted Heisenberg field (the charge grading), and the second from one of the families of Virasoro fields that we discussed in \cite{AngMLB} (the degree grading). There are also two additional  grading operators which we will introduce in the next section.

 The \emph{charge}   grading $chg$ is derived from the action of the "charge" operator $h^{\mathbb{Z}}_0$:
 \begin{equation}
\label{eq:Heismodes}
h^{\mathbb{Z}}_0=\sum _{k\in \mathbb{Z}_{\geq 0}+1/2}:\chi_{-k} \chi_{k}: =\chi_{-\frac{1}{2}}\chi_{\frac{1}{2}} +\chi_{-\frac{3}{2}}\chi_{\frac{3}{2}} +\dots,
\end{equation}
where $:\ :$ denotes the normal ordered product defined in the usual way (see e.g. \cite{AngCKPSecond}).

$h^{\mathbb{Z}}_0$ is diagonalizable  on $\mathit{F_{\chi}}$ and thus
 it gives $\mathit{F_{\chi}}$ a $\mathbb{Z}$ grading, which we  called \emph{charge} and denote $chg$ (as it is similar to the charge grading in the usual boson-fermion correspondence,  of type A), by
 \begin{equation}
chg \big(|0\rangle \big) =0; \quad
chg  \Big(\left(\chi _{-j_k}\right)^{m_k}\dots \left(\chi _{-j_2}\right)^{m_2}\left(\chi _{-j_1}\right)^{m_1}|0\rangle \Big) =\sum_{j_i\in 2\mathbb{Z} +1/2} m_i - \sum_{j_i\in 2\mathbb{Z} -1/2} m_i.
\end{equation}
Example: $chg \big( \chi_{-\frac{1}{2}} |0\rangle \big) =1$; \   $chg \big( \chi_{-\frac{3}{2}} |0\rangle \big) =-1$;\  $chg \big( \chi_{-\frac{3}{2}} \chi_{-\frac{1}{2}} |0\rangle \big) =0$.

Denote the linear span of monomials  of charge $n$ by $\mathit{F^{(n)}_{\chi}}$. The Fock space $\mathit{F_{\chi}}$ has a charge decomposition
\[
\mathit{F_{\chi}} =\oplus_{n\in \mathbb{Z}} \mathit{F^{(n)}_{\chi}}, \quad \text{with}\quad
h^{\mathbb{Z}}_0 v =chg (v)\cdot v =nv, \quad \text{for \ any} \ v\in \mathit{F^{(n)}_{\chi}}.
\]

The second grading we used in \cite{AngCKPSecond} is the \emph{degree} grading, resulting from the action of one of the Virasoro fields. Namely, we considered the Virasoro field with central charge $c=-\frac{1}{4}$ given by
\begin{equation}
\label{eqn:Vir2}
L^{-\frac{1}{4}} (z^2)= \sum_{n\in \mathbb{Z}}L_n (z^2)^{-n-2} = \sum_{n\in \mathbb{Z}}\Big(\sum_{k+l=n}  \frac{k+1-3l}{4}:\chi_{2k+\frac{1}{2}}\chi_{2l-\frac{1}{2}}:\Big)(z^2)^{-n-2},
\end{equation}
in particular
\begin{equation}
\label{eqn:VirGrading}
L_0 = \frac{1}{2}\Big( \frac{1}{2}:\chi_{-\frac{1}{2}}\chi_{\frac{1}{2}}: -\frac{3}{2}:\chi_{-\frac{3}{2}}\chi_{\frac{3}{2}}: +\frac{5}{2}:\chi_{-\frac{5}{2}}\chi_{\frac{5}{2}}: -\dots \Big).
\end{equation}
Hence
\begin{equation}
L_0 \Big(\left(\chi _{-j_k}\right)^{m_k}\dots \left(\chi _{-j_2}\right)^{m_2}\left(\chi _{-j_1}\right)^{m_1}|0\rangle \Big)= \frac{1}{2}\left(m_k\cdot j_k +\dots m_2\cdot j_2 +m_1\cdot j_1\right)\Big(\left(\chi _{-j_k}\right)^{m_k}\dots \left(\chi _{-j_2}\right)^{m_2}\left(\chi _{-j_1}\right)^{m_1}|0\rangle \Big),
\end{equation}
$j_k>\dots > j_2 >j_1 > 0, \ j_i\in \mathbb{Z}+\frac{1}{2}$,\  $m_i > 0, m_i\in \mathbb{Z}, \ i=1, 2, \dots, k$.
The degree grading is a $\frac{1}{2}\mathbb{Z}$ grading, given by the action of $2L_0$, i.e.,
\[
2L_0 v =deg (v)\cdot v,
\]
 where $v$ is any monomial $\left(\chi _{-j_k}\right)^{m_k}\dots \left(\chi _{-j_2}\right)^{m_2}\left(\chi _{-j_1}\right)^{m_1}|0\rangle$. The degree grading is also used in \cite{OrlovLeur}, although without its connection to the Virasoro field.

Next, we will need some notations for the corresponding indexing sets in the decompositions.
\begin{notation}
Let $\mathcal{ODP}$ denote the set of distinct partitions of the type
\begin{equation}
\mathcal{ODP}=\{\mathfrak{p} =(\lambda_1, \lambda_2, \dots , \lambda_k) \ |\  \lambda_1>\lambda_2>\dots >\lambda_k, \lambda_i \in \frac{1}{2}+\mathbb{Z}_{\geq 0}, \ i=1, \dots , k \}.
\end{equation}
Denote by $T_m$  the $m$-th triangular number---  $T_m:=1+2+\dots +m =\frac{m(m+1)}{2}$, with $T_0 =0$. Let
$\mathfrak{P}_{tdo}$ denote the set  of distinct partitions of the type
\begin{equation}
\mathfrak{P}_{tdo}=\{\mathfrak{p} =(T_m, \lambda_1, \lambda_2, \dots , \lambda_k) \ |\ T_m-\text{triangular\ number}, \ \lambda_1>\lambda_2>\dots >\lambda_k, \lambda_i \in \frac{1}{2}+\mathbb{Z}_{\geq 0}, \ i=1, \dots , k \}.
\end{equation}
As usual, the weight $| \mathfrak{p} | $   of a partition $\mathfrak{p}$ is the sum of its parts, $| \mathfrak{p} |: =T_m +  \lambda_1 + \lambda_2 + \dots + \lambda_k$.
\end{notation}
 We can now formulate the decomposition in terms of Heisenberg modules.
\begin{thm} \label{thm:HeisDecomp}
I. For the action of the twisted Heisenberg algebra $\mathcal{H}_{\mathbb{Z}+1/2}$ on $\mathit{F_{\chi}}$,  the number of highest weight vectors  of degree $n\in \frac{1}{2}\mathbb{Z}$  equals the number of partitions $\mathfrak{p}\in \mathcal{ODP}$ of weight $n$. Thus as twisted Heisenberg modules
\begin{equation}
\mathit{F_{\chi}}\cong  \oplus_{\mathfrak{p}\in \mathcal{ODP}}   \mathbb{C}[t_1, t_3, \dots , t_{2n-1}, \dots ].
\end{equation}
II. (\cite{AngCKPSecond})
For the action of the untwisted Heisenberg algebra $\mathcal{H}_{\mathbb{Z}}$ on $\mathit{F_{\chi}}$,  the number of highest weight vectors  of degree $n\in \frac{1}{2}\mathbb{Z}$  equals the number of partitions $\mathfrak{p}\in \mathfrak{P}_{tdo}$ of weight $n$. Thus as untwisted Heisenberg modules
\begin{equation}
\mathit{F_{\chi}}\cong  \oplus_{\mathfrak{p}\in \mathfrak{P}_{tdo}}   \mathbb{C}[x_1, x_2, \dots , x_n, \dots ].
\end{equation}
\end{thm}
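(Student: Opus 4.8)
The plan is to prove the theorem by a \emph{graded-dimension} computation, matching the Heisenberg decomposition of $\mathit{F_{\chi}}$ against the generating function of the relevant partition set. Part II is precisely the result of \cite{AngCKPSecond}, so I will concentrate on Part I (the twisted action) and indicate at the end how the same scheme yields Part II with the triangular bookkeeping of $\mathfrak{P}_{tdo}$.

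First I would record the structural facts that make the Fock space transparent. By the commutation relations \eqref{eqn:Com-C} the creation modes $\chi_{-j}$, $j\in\mathbb{Z}_{\geq0}+\frac12$, mutually commute (since $[\chi_{-j},\chi_{-j'}]=0$ whenever $j+j'\neq0$), so $\mathit{F_{\chi}}$ is the polynomial algebra on these variables. Using \eqref{eqn:VirGrading}, which assigns $\deg\chi_{-j}=j$ for the grading $2L_0$, the degree-graded dimension is
\[
\dim_q \mathit{F_{\chi}} \;=\; \prod_{k=1}^{\infty}\frac{1}{1-q^{(2k-1)/2}},
\]
all exponents lying in $\mathbb{Z}_{\geq0}+\tfrac12$ together with their sums, i.e. in $\tfrac12\mathbb{Z}_{\geq0}$.

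The crux is complete reducibility. The modes $h^t_n$, $n\in\mathbb{Z}+\frac12$, from Proposition \ref{prop:Heis-chi} satisfy $[h^t_m,h^t_n]=-m\,\delta_{m+n,0}$ with the nonzero scalar $-m$ (there is no zero mode, since $0\notin\mathbb{Z}+\frac12$), and a short computation shows each $h^t_n$ is homogeneous of $2L_0$-degree $-2n$; in particular $\mathit{F_{\chi}}$ is a restricted, positive-energy $\mathcal{H}_{\mathbb{Z}+1/2}$-module with finite-dimensional graded pieces, bounded below. By the Stone--von Neumann theorem in this algebraic (Heisenberg) setting, $\mathit{F_{\chi}}$ is therefore a direct sum of copies of the unique irreducible Fock module $B_{1/2}\cong\mathbb{C}[t_1,t_3,\dots]$, the multiplicity space being the space of highest weight vectors $\Omega=\{v\in\mathit{F_{\chi}}: h^t_n v=0\text{ for all }n>0\}$. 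Because $2L_0$ is homogeneous for the Heisenberg action, it preserves this decomposition and the graded dimension factors as $\dim_q\mathit{F_{\chi}}=(\dim_q\Omega)\,(\dim_q B_{1/2})$.

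It remains to compute $\dim_q B_{1/2}$ and the quotient. Since $h^t_{-(2n-1)/2}$ creates $t_{2n-1}$ and has $2L_0$-degree $2n-1$, one gets $\dim_q B_{1/2}=\prod_{n\geq1}(1-q^{2n-1})^{-1}$, and hence
\[
\dim_q\Omega=\frac{\dim_q \mathit{F_{\chi}}}{\dim_q B_{1/2}}=\prod_{k=1}^{\infty}\frac{1-q^{2k-1}}{1-q^{(2k-1)/2}}=\prod_{k=1}^{\infty}\bigl(1+q^{(2k-1)/2}\bigr)=\sum_{\mathfrak{p}\in\mathcal{ODP}}q^{|\mathfrak{p}|},
\]
using $1+x=(1-x^2)/(1-x)$ with $x=q^{(2k-1)/2}$ in the middle step, and recognizing the last product as the generating function for distinct partitions into half-odd-integer parts. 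Reading off the coefficient of $q^{n}$ gives exactly the claimed count of highest weight vectors of degree $n$, and the Heisenberg decomposition of Part I follows.

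The genuinely nontrivial input is the complete-reducibility step: one must invoke (and verify the hypotheses of) the Stone--von Neumann theorem for $\mathcal{H}_{\mathbb{Z}+1/2}$, namely that the central scalar is nonzero and that the module is a restricted positive-energy module with finite graded multiplicities, so that the multiplicity space is honestly computed by the character ratio. Once this is in place the partition identity is a one-line Euler manipulation. For Part II the same three steps apply to the untwisted current $h^{\mathbb{Z}}_\chi(z)$, except that the presence of the zero mode $h^{\mathbb{Z}}_0$ (the charge $chg$) splits $\mathit{F_{\chi}}$ into charge sectors; the minimal-degree highest weight vector in a given sector carries a triangular degree $T_m$, and it is exactly this extra label that promotes $\mathcal{ODP}$ to $\mathfrak{P}_{tdo}$, reproducing the result of \cite{AngCKPSecond}.
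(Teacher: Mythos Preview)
Your proof is correct. The paper itself omits the argument entirely, stating only that Part II is established in \cite{AngCKPSecond} and that ``the proof of the first part is similar to the proof of the second part, and so we omit it.'' Your character-ratio approach is valid: the hypotheses you list (nonzero central scalar, restricted positive-energy module with finite-dimensional graded pieces, $L_0$-homogeneity of the Heisenberg modes) are exactly what is needed for complete reducibility, and the Euler manipulation is correct.

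As to comparison: the very identity you derive,
\[
\dim_q\Omega \;=\; \prod_{k\geq1}\bigl(1+q^{(2k-1)/2}\bigr) \;=\; \sum_{\mathfrak{p}\in\mathcal{ODP}}q^{|\mathfrak{p}|},
\]
does appear in the paper, but only \emph{after} the theorem, in the Remark following Proposition~\ref{prop:q-t-trace}, where it is presented as a ``not terribly interesting'' consistency check (``which we knew already''). The paper's intended route, judging from the explicit tables of highest weight vectors given in the Example after Theorem~\ref{thm:HeisDecomp} and from the reliance on \cite{AngCKPSecond}, is more constructive: one exhibits the highest weight vectors directly and indexes them by the appropriate partitions. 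Your argument is cleaner and shorter, at the cost of being nonconstructive --- it gives the count at each degree without producing the vectors themselves. Both are legitimate; yours is the standard vertex-algebra maneuver, while the paper's (implicit) approach supplies the concrete data used later in Section~3.
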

The Heisenberg decomposition in the twisted case (the first part of the theorem above) is not stated in \cite{OrlovLeur}, but can be implied from the calculations there. The second part of this theorem is established in \cite{AngCKPSecond}. The proof of the first part is similar to the proof of the second part, and so we omit it.
\begin{example} The first few highest weight vectors for $\mathcal{H}_{\mathbb{Z}}$ arranged by degree are
\begin{displaymath}
\begin{array}{c|c}
1 & |0\rangle \hspace{\stretch{1}}\\
q^{\frac{1}{2}} & \chi_{-\frac{1}{2}}|0\rangle  \hspace{\stretch{1}}\\
q & \chi_{-\frac{1}{2}}^2|0\rangle  \hspace{\stretch{1}}\\
2q^{\frac{3}{2}} & \chi_{-\frac{1}{2}}^3|0\rangle ;\ \  \chi_{-\frac{3}{2}}|0\rangle  \hspace{\stretch{1}}\\
q^2 & \chi_{-\frac{1}{2}}^4|0\rangle   \hspace{\stretch{1}}\\
2q^{\frac{5}{2}} & \chi_{-\frac{1}{2}}^5|0\rangle ; \ \ \chi_{-\frac{3}{2}} \chi_{-\frac{1}{2}}^2|0\rangle  +2\chi_{-\frac{5}{2}}|0\rangle  \hspace{\stretch{1}}\\
3q^3 & \chi_{-\frac{1}{2}}^6|0\rangle ;\ \ \chi_{-\frac{3}{2}}\chi_{-\frac{1}{2}}^3|0\rangle +3 \chi_{-\frac{5}{2}}\chi_{-\frac{1}{2}}|0\rangle; \ \  \chi_{-\frac{3}{2}}^2|0\rangle  \hspace{\stretch{1}}
 \end{array}
 \end{displaymath}
Two important families of examples  of highest weight vectors for $\mathcal{H}_{\mathbb{Z}}$ are the families  $\chi_{-\frac{1}{2}}^n|0\rangle$ and \ $\chi_{-\frac{3}{2}}^n|0\rangle$. They are the "minimal degree" highest weight vectors: at each negative fixed charge $-n<0$, the vector $\chi _{-3/2}^n |0\rangle$ is the vector with minimal degree with  charge $-n$; and at each fixed charge $n>0$, the vector $\chi _{-3/2}^n |0\rangle$ is the vector with minimal degree of that charge $n$.

There are no similar families  of highest weight vectors for $\mathcal{H}_{\mathbb{Z}+1/2}$, as there is no equivalent charge operator in $\mathcal{H}_{\mathbb{Z}+1/2}$. The first few highest weight vectors by degree are
\begin{displaymath}
\begin{array}{c|c}
1 & |0\rangle \hspace{\stretch{1}}\\
q^{\frac{1}{2}} & \chi_{-\frac{1}{2}}|0\rangle  \hspace{\stretch{1}}\\
q & \text{none}  \hspace{\stretch{1}}\\
q^{\frac{3}{2}} & \chi_{-\frac{3}{2}}|0\rangle -\frac{1}{3}\chi_{-\frac{1}{2}}^3|0\rangle   \hspace{\stretch{1}}\\
q^2 & \chi_{-\frac{3}{2}}\chi_{-\frac{1}{2}}|0\rangle-\frac{1}{6}\chi_{-\frac{1}{2}}^4|0\rangle   \hspace{\stretch{1}}\\
q^{\frac{5}{2}} & \chi_{-\frac{5}{2}}|0\rangle -\chi_{-\frac{3}{2}}\chi_{-\frac{1}{2}}^2|0\rangle +\frac{1}{10}\chi_{-\frac{1}{2}}^5|0\rangle   \hspace{\stretch{1}}
 \end{array}
 \end{displaymath}
\end{example}

Next, we proceed with the structure of the spaces spanned by the corresponding highest weight vectors.
\begin{notation}
Denote by $\mathit{F^{hwv}_\chi}$ the vector space spanned by the highest weight vectors for the untwisted Heisenberg algebra representation on $\mathit{F_{\chi}}$, and by $\mathit{F^{t-hwv}_\chi}$ the vector space spanned by the highest weight vectors for the twisted Heisenberg algebra representation on $\mathit{F_{\chi}}$.
\end{notation}
From the Proposition above, we have the following isomorphisms as vectors spaces:
\begin{align}\label{eqn:isoHWV-Polyn}
\mathit{F_\chi}&\cong \mathit{F^{hwv}_\chi}\ten \mathbb{C}[x_1,
x_2, \dots , x_n, \dots ],\\
\label{eqn:isoHWV-Polyn2}
\mathit{F_\chi}&\cong \mathit{F^{t-hwv}_\chi}\ten \mathbb{C}[t_1,
t_3, \dots , t_{2n-1}, \dots ].
\end{align}
Each of the spaces $\mathit{F^{hwv}_\chi}$ and $\mathit{F^{t-hwv}_\chi}$ has  additional structure, which we need in order to complete the corresponding bosonizations.
\begin{defn}
Let
\begin{equation}
V^-(z) = \exp \Big(-\sum_{n>0}\frac{1}{n}h^{\mathbb{Z}}_n z^{-2n}\Big);\quad \quad V^-(z)^{-1} = \exp \Big(\sum_{n>0}\frac{1}{n} h^{\mathbb{Z}}_n z^{-2n}\Big);
\end{equation}
and
\begin{equation}
V^+(z) = \exp \Big(\sum_{n>0}\frac{1}{n}h^{\mathbb{Z}}_{-n} z^{2n}\Big);\quad \quad
V^+(z)^{-1} = \exp \Big(-\sum_{n>0}\frac{1}{n} h^{\mathbb{Z}}_{-n} z^{2n}\Big)
\end{equation}
Define
\begin{equation}
\beta_\chi (z^2) =\frac{\chi(z) -\chi (-z)}{2z}; \quad \gamma_\chi (z^2) =\frac{\chi(z) +\chi (-z)}{2}.
\end{equation}
and
\begin{equation} \label{eqn:Hbeta-Hgamma}
H^{\beta} (z^2) = V^+(z)^{-1}\beta_\chi(z^2)z^{-2h^{\mathbb{Z}}_0}V^-(z)^{-1}, \quad \quad H^{\gamma} (z^2) = V^+(z)\gamma_\chi (z^2)z^{2h^{\mathbb{Z}}_0}V^-(z).
\end{equation}
( $V^-(z)$ and $V^+(z)$ are actually functions of $z^2$, so the notation is unambiguous).

Similarly denote
\begin{equation}
V_t^-(z) = \exp \Big(\sum_{n>0}\frac{2}{2n-1}h^t_{\frac{2n-1}{2}} z^{-2n+1}\Big);\quad \quad V_t^-(z)^{-1} = \exp \Big(-\sum_{n>0}\frac{2}{2n-1} h^t_{\frac{2n-1}{2}} z^{-2n+1}\Big);
\end{equation}
\begin{equation}
V_t^+(z) = \exp \Big(-\sum_{n>0}\frac{2}{2n-1}h^t_{-\frac{2n-1}{2}} z^{2n-1}\Big);\quad \quad V_t^+(z)^{-1} = \exp \Big(\sum_{n>0}\frac{2}{2n-1} h^t_{-\frac{2n-1}{2}} z^{2n-1}\Big);
\end{equation}
and
\begin{equation}
H^{\chi} (z) =V_t^+(z)^{-1} \chi (z) V_t^-(z)^{-1}.
\end{equation}
\end{defn}
\begin{thm} \label{thm:symplVA}
I.  (\cite{AngCKPSecond}) The vector space  $\mathit{F^{hwv}_\chi}$ spanned by the highest weight vectors for the untwisted Heisenberg algebra  $\mathcal{H}_{\mathbb{Z}}$ has a structure of a super vertex algebra,  strongly generated by the fields $H^{\beta}(z)$ and $H^{\gamma}(z)$, with vacuum vector $|0\rangle$, translation operator $T=L^{hwv}_{-1}$, and vertex operator map induced by \begin{equation}
Y( \chi _{-1/2}|0\rangle, z)=H^{\gamma} (z), \quad Y( \chi _{-3/2}|0\rangle, z)=H^{\beta} (z).
\end{equation}
This vertex algebra structure is a realization of  the symplectic fermion vertex algebra, indicated by the OPEs:
\begin{align}
&H^{\beta} (z) H^{\gamma} (w) \sim \frac{1}{(z -w)^2},  \quad  H^{\gamma} (z)H^{\beta} (w)\sim -\frac{1}{(z -w)^2}; \\
&H^{\beta} (z) H^{\beta} (w)\sim  0; \quad H^{\gamma} (z) H^{\gamma} (w) \sim 0.
\end{align}
II. The vector space  $\mathit{F^{t-hwv}_\chi}$ spanned by the highest weight vectors for the twisted Heisenberg algebra  $\mathcal{H}_{\mathbb{Z}+1/2}$ has a structure of an $N=2$  twisted vertex algebra,   generated by the field $H^{\chi}(z)$,  with vacuum vector $|0\rangle$, and vertex operator map induced by \begin{equation}
Y( \chi _{-1/2}|0\rangle, z)=H^{\chi} (z).
\end{equation}
This twisted vertex algebra structure is twisted  fermionic, indicated by the OPEs:
\begin{equation}\label{eqn:twistedfermionOPE}
H^{\chi} (z) H^{\chi} (w) \sim \frac{z-w}{(z +w)^2}.
\end{equation}
\end{thm}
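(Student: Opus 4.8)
The plan is to prove Part II, since Part I is quoted from \cite{AngCKPSecond}; the argument runs parallel to the untwisted case but the bookkeeping is governed by the twisted Heisenberg algebra $\mathcal{H}_{\mathbb{Z}+1/2}$ rather than $\mathcal{H}_{\mathbb{Z}}$. The first step is to check that $H^{\chi}(z)$ really is a field on $\mathit{F^{t-hwv}_\chi}$, i.e.\ that its modes commute with every twisted Heisenberg mode $h^t_n$ and hence preserve the highest weight space. For this I would compute $[h^t_n,\chi(z)]$ from the OPE $h^{\mathbb{Z}+1/2}_\chi(z')\chi(z)$: applying Wick's theorem to $\tfrac12{:}\chi(z')\chi(-z'){:}\,\chi(z)$ and using the two contractions $\chi(z')\chi(z)\sim\frac{1}{z'+z}$ and $\chi(-z')\chi(z)\sim\frac{1}{z-z'}$ produces the singular part, and the point is that the conjugating exponentials $V_t^{\pm}(z)$ are normalized (via the scaling in Proposition \ref{prop:Heis-chi} and the module realization) precisely so that the conjugation $V_t^+(z)^{-1}(\,\cdot\,)V_t^-(z)^{-1}$ cancels this Heisenberg action term by term. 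The delta-function manipulations are the same ones used in Proposition \ref{prop:symmetry}, and the $\lambda=-1$ subtlety there must be respected throughout.

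Granting that $H^{\chi}(z)$ commutes with the Heisenberg action, the OPE is obtained by inverting the dressing. Writing $\chi(z)=V_t^+(z)H^{\chi}(z)V_t^-(z)$ and forming $\chi(z)\chi(w)$, I would move $H^{\chi}(z)$ rightward past the intervening $V_t^{\mp}$ factors (permitted, since it commutes with every $h^t_n$), collecting
\[
\chi(z)\chi(w)=V_t^+(z)V_t^-(z)V_t^+(w)V_t^-(w)\,H^{\chi}(z)H^{\chi}(w).
\]
The only reordering that produces a scalar is moving the annihilation factor $V_t^-(z)$ past the creation factor $V_t^+(w)$; the Weyl relation $e^Ae^B=e^Be^Ae^{[A,B]}$ together with $[h^t_m,h^t_n]=-m\delta_{m+n,0}$ gives $[A,B]=\sum_{m>0}\frac{2}{2m-1}(w/z)^{2m-1}=\ln\frac{z+w}{z-w}$, so the scalar is exactly $\frac{z+w}{z-w}$, while the surviving normal-ordered Heisenberg vertex operator $W(z,w)=V_t^+(z)V_t^+(w)V_t^-(z)V_t^-(w)$ is regular and acts as the identity to leading order on $\mathit{F^{t-hwv}_\chi}$.

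Comparing with the known product $\chi(z)\chi(w)=\frac{1}{z+w}+{:}\chi(z)\chi(w){:}$ and solving $H^{\chi}(z)H^{\chi}(w)=\frac{z-w}{z+w}\,W(z,w)^{-1}\chi(z)\chi(w)$, the singular part acquires the factor $\frac{z-w}{z+w}$ on top of the contraction $\frac{1}{z+w}$, yielding
\[
H^{\chi}(z)H^{\chi}(w)\sim\frac{z-w}{(z+w)^2},
\]
which is the claimed twisted-fermionic OPE. To package this as an $N=2$ twisted vertex algebra in the sense of \cite{ACJ}, I would then verify the structural axioms: $|0\rangle$ is the vacuum, $H^{\chi}(z)|0\rangle$ is regular with limit $\chi_{-1/2}|0\rangle$ as $z\to 0$ so that $Y(\chi_{-1/2}|0\rangle,z)=H^{\chi}(z)$ is consistent, and the single generating field together with the OPE above closes under the twisted operator product, with $\mathit{F^{t-hwv}_\chi}$ spanned by the modes of $H^{\chi}$ acting on $|0\rangle$ (this last point being exactly the twisted Heisenberg decomposition of Theorem \ref{thm:HeisDecomp}). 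The main obstacle is the first step: because the twisted Heisenberg current is built from $\chi$ itself, so that $\chi$ and $h^t$ are not independent free fields, one must carefully track the commutators $[\chi,V_t^{\pm}]$ and the attendant delta-function and $\iota$-expansions, and it is there — rather than in the final resummation — that the normalization of $V_t^{\pm}$ has to be matched exactly.
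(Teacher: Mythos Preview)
The paper does not actually give a proof of Part~II: it states that the argument is ``similar to the proof of the first part, and quite technical, so we omit it,'' and refers to \cite{OrlovLeur} for the OPE computation. Your sketch is therefore more than the paper itself provides, and the strategy---verify that $H^{\chi}(z)$ commutes with every $h^t_n$ so that it acts on $\mathit{F^{t-hwv}_\chi}$, then undress $\chi(z)\chi(w)$ by the twisted Heisenberg vertex operators and read off the residual contraction---is exactly the standard one, parallel to what is done for Part~I in \cite{AngCKPSecond}.

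One step deserves more care. Your reordering correctly produces the scalar $\tfrac{z+w}{z-w}$ and hence the leading singular term $\tfrac{z-w}{(z+w)^2}$, but the phrase ``$W(z,w)$ is regular and acts as the identity to leading order on $\mathit{F^{t-hwv}_\chi}$'' does not by itself dispose of the simple pole. In
\[
H^{\chi}(z)H^{\chi}(w)=\frac{z-w}{z+w}\,W(z,w)^{-1}\Big(\frac{1}{z+w}+{:}\chi(z)\chi(w){:}\Big),
\]
both the first-order Taylor coefficient of $W(z,w)^{-1}$ at $z=-w$ and the value ${:}\chi(-w)\chi(w){:}=2h^{\mathbb{Z}+1/2}_{\chi}(w)$ contribute Heisenberg operators to the residue at the simple pole; these do not vanish as operators on $\mathit{F_\chi}$, and one must check explicitly that their combination reduces to the scalar $1$ (so that the full singular part is exactly $\tfrac{z-w}{(z+w)^2}=\tfrac{1}{z+w}-\tfrac{2w}{(z+w)^2}$). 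This is the computation carried out in \cite{OrlovLeur}, and it is the technical point the paper alludes to when it calls the omitted proof ``quite technical.'' Once that cancellation is verified, the remaining vertex-algebra axioms follow from Theorem~\ref{thm:HeisDecomp} as you outline.
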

The first part of this theorem was proved in \cite{AngCKPSecond}. The second part uses the notion of a twisted vertex algebra developed in \cite{AngTVA} and \cite{ACJ}. Its proof is similar to the proof of the first part, and  quite technical, so we omit it. The basic OPE indicated in \eqref{eqn:twistedfermionOPE} can be read from the calculations in \cite{OrlovLeur}.
\begin{cor} \label{cor:FDecomp}  I. Define (\cite{Abe})
\begin{align*}
 \mathit{SF}: = \{H^{\beta}_{(m_k)}\dots H^{\beta}_{(m_2)}&H^{\beta}_{(m_1)}H^{\gamma}_{(n_s)}\dots H^{\gamma}_{(n_2)}H^{\gamma}_{(n_1)}|0\rangle \  \arrowvert \\ & \arrowvert \  m_k<\dots m_2 <m_1, \  n_s<\dots n_2<n_1; \  \  m_i, n_j \in \mathbb{Z}_{<0}, \ i=1, 2, \dots, k; j=1, 2, \dots s\}.
\end{align*}
We have as vertex algebras
\begin{equation}
\mathit{F^{hwv}_\chi}  \cong  \mathit{SF};
\end{equation}
and as vector spaces
\begin{equation}
 \mathit{F_\chi}\cong \mathit{F^{hwv}_\chi}\ten \mathbb{C}[x_1,
x_2, \dots , x_n, \dots ]\cong \mathit{SF} \ten \mathbb{C}[x_1,
x_2, \dots , x_n, \dots ].
\end{equation}
II. Define
\begin{equation}
 \mathit{SF^t}: = \{H^{\chi}_{(n_s)}\dots H^{\chi}_{(n_2)}H^{\chi}_{(n_1)}|0\rangle \   \arrowvert \   n_s<\dots n_2<n_1; \  n_j \in \mathbb{Z}+1/2, n_j<0, \ j=1, 2, \dots s\}.
\end{equation}
We have as twisted vertex algebras
\begin{equation}
\mathit{F^{t-hwv}_\chi}  \cong  \mathit{SF^t};
\end{equation}
and as vector spaces
\begin{equation}
 \mathit{F_\chi}\cong \mathit{F^{t-hwv}_\chi}\ten \mathbb{C}[t_1,
t_3, \dots , t_{2n-1}, \dots ]\cong \mathit{SF^t} \ten \mathbb{C}[t_1,
t_3, \dots , t_{2n-1}, \dots ].
\end{equation}
\end{cor}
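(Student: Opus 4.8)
The plan is to derive both parts of the corollary directly from Theorem \ref{thm:symplVA}, combined with the tensor-product isomorphisms \eqref{eqn:isoHWV-Polyn} and \eqref{eqn:isoHWV-Polyn2}, by exhibiting the listed monomials as a Poincar\'e--Birkhoff--Witt (PBW) basis of the relevant (twisted) vertex algebra. In each part the vertex-algebra isomorphism is the substantive claim; the vector-space isomorphisms then follow immediately by tensoring with the Heisenberg polynomial factor already identified in Theorem \ref{thm:HeisDecomp}.

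For Part I, by Theorem \ref{thm:symplVA}.I the space $\mathit{F^{hwv}_\chi}$ is the symplectic fermion vertex algebra, strongly generated by the odd fields $H^{\beta}(z)$ and $H^{\gamma}(z)$. First I would expand these in modes, $H^{\beta}(z)=\sum_m H^{\beta}_{(m)}z^{-m-1}$ and $H^{\gamma}(z)=\sum_n H^{\gamma}_{(n)}z^{-n-1}$, and read off the mode relations from the four OPEs. The vanishing OPEs $H^{\beta}H^{\beta}\sim 0$ and $H^{\gamma}H^{\gamma}\sim 0$ give $\{H^{\beta}_{(m)},H^{\beta}_{(m')}\}=0$ and $\{H^{\gamma}_{(n)},H^{\gamma}_{(n')}\}=0$; in particular $(H^{\beta}_{(m)})^2=(H^{\gamma}_{(n)})^2=0$, which is exactly what forces the indices in an admissible monomial to be \emph{strictly} decreasing rather than merely non-increasing. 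The mixed OPE $H^{\beta}(z)H^{\gamma}(w)\sim (z-w)^{-2}$ yields, via the Borcherds commutator formula, $\{H^{\beta}_{(m)},H^{\gamma}_{(n)}\}=m\,\delta_{m+n,0}$, so the generating modes close on a Clifford-type algebra. Together with the vacuum-annihilation conditions $H^{\beta}_{(m)}|0\rangle=H^{\gamma}_{(n)}|0\rangle=0$ for $m,n\geq 0$, the reconstruction theorem for vertex superalgebras shows that the ordered monomials in $\mathit{SF}$ (negative modes, strictly decreasing in each family) span $\mathit{F^{hwv}_\chi}$; their linear independence is precisely the PBW statement for symplectic fermions recorded in \cite{Abe}, which is why $\mathit{SF}$ is introduced with that reference. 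This gives $\mathit{F^{hwv}_\chi}\cong \mathit{SF}$, and \eqref{eqn:isoHWV-Polyn} supplies the asserted chain of vector-space isomorphisms.

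For Part II I would argue in parallel, now inside the twisted vertex algebra framework of \cite{AngTVA} and \cite{ACJ}. By Theorem \ref{thm:symplVA}.II the space $\mathit{F^{t-hwv}_\chi}$ is generated by the single twisted-fermionic field $H^{\chi}(z)$, whose half-integer mode expansion $H^{\chi}(z)=\sum_{n\in\mathbb{Z}+1/2}H^{\chi}_{(n)}z^{-n-1}$ reflects the twist. Extracting the mode relations from the OPE \eqref{eqn:twistedfermionOPE}, $H^{\chi}(z)H^{\chi}(w)\sim \frac{z-w}{(z+w)^2}$, I would show that the modes anticommute and that each squares to zero, so that the creating modes $H^{\chi}_{(n)}$ with $n<0$, $n\in\mathbb{Z}+1/2$, again produce only strictly decreasing monomials. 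The resulting space is the Fock space of a single neutral twisted fermion, whose standard ordered monomial basis is exactly $\mathit{SF^t}$. Invoking the spanning statement of the twisted reconstruction together with linear independence then gives $\mathit{F^{t-hwv}_\chi}\cong \mathit{SF^t}$, and \eqref{eqn:isoHWV-Polyn2} yields the final vector-space isomorphisms.

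The extraction of mode relations and the spanning argument are routine, being identical in spirit to the classical free-fermion PBW computation. The main obstacle is the twisted case in Part II: one cannot simply cite \cite{Abe}, so both the notion of an ordered monomial basis and the reconstruction/spanning step must be taken in the twisted vertex algebra sense of \cite{AngTVA}, \cite{ACJ}, where the half-integer indexing and the nonstandard singularity $(z+w)^{-2}$ in \eqref{eqn:twistedfermionOPE} require that the anticommutation and vacuum-annihilation relations be checked with the twisted mode conventions rather than the untwisted ones. Once spanning is established, linear independence is most safely secured by matching the graded dimension of $\mathit{SF^t}$ against the $\mathcal{ODP}$ partition count of Theorem \ref{thm:HeisDecomp}.I, after which the twisted vertex algebra isomorphism follows.
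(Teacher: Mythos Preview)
Your proposal is correct and follows the same logical route the paper takes: the corollary is stated without proof as an immediate consequence of Theorem \ref{thm:symplVA} together with the tensor decompositions \eqref{eqn:isoHWV-Polyn} and \eqref{eqn:isoHWV-Polyn2}, and you have simply spelled out the PBW/reconstruction details that the paper leaves implicit (citing \cite{Abe} for Part I and the twisted framework of \cite{AngTVA}, \cite{ACJ} for Part II). Your suggestion to confirm linear independence in the twisted case by matching graded dimensions against the $\mathcal{ODP}$ count of Theorem \ref{thm:HeisDecomp}.I is a sensible safeguard and is consistent with how the paper uses that counting elsewhere.
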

The Theorem above allows us to express the generating field in terms of the (correspondingly twisted or untwisted) Heisenberg fields, plus the (twisted or untwisted) symplectic fermion fields, thereby completing the 3rd stage of the bosonization process. From \cite{OrlovLeur}, we   have for the twisted bosonization
\begin{equation} \label{eqn:chi-bos}
\chi(z)=V_t^+(z)H^{\chi} (z)V_t^-(z).
\end{equation}
Unfortunately, we do not know if the field $H^{\chi} (z)$ can be bosonized further.  In \cite{OrlovLeur}  super, or fermionic, variables were introduced to describe the space $\mathit{F^{t-hwv}_\chi}$. I.e., introduce the Grassmann variables $t_{\frac{2n-1}{2}}$, $n\in \mathbb{Z}$, so that
\begin{equation}
t_{\frac{2m-1}{2}}t_{\frac{2n-1}{2}} =-t_{\frac{2n-1}{2}}t_{\frac{2m-1}{2}}, \ \text{for}\ m\neq n; \quad \left(t_{\frac{2n-1}{2}}\right)^2 =0.
\end{equation}
Consequently, we consider the Grassmann algebra generated by the Grassmann variables  $t_{\frac{2n-1}{2}}$, $n\in \mathbb{Z}$, which  we will denote as $\Lambda[t_{\frac{1}2},
t_{\frac{3}2}, \dots , t_{\frac{2n-1}{2}}, \dots ]$. The Grassman variables $t_{\frac{2n-1}{2}}$, $n\in \mathbb{Z}$, commute with the commutative variables $t_{2n-1}$, $n\in \mathbb{Z}$. We of course have $\Lambda[t_{\frac{1}2},
t_{\frac{3}2}, \dots , t_{\frac{2n-1}{2}}, \dots ]\cong \mathit{SF^t}$.
\begin{thm} (\cite{OrlovLeur}
The twisted bosonization of the CKP hierarchy is the  isomorphism $\sigma^t$ between the twisted vertex algebra generated by the field $\chi (z)$ and its descendant $\chi (-z)$ on  the Fock space $\mathit{F_{\chi}}$ and the twisted vertex algebra generated by the field $\sigma^t \chi(z)(\sigma^t)^{-1}:=V_{\sigma t}^+(z)H^{\sigma \chi} (z)V_{\sigma t}^-(z)$, where the invertible map $\sigma^t$ is defined as
\begin{equation}
\sigma^t: \mathit{F_{\chi}} \to \Lambda[t_{\frac{1}2},
t_{\frac{3}2}, \dots , t_{\frac{2n-1}{2}}, \dots ] \ten \mathbb{C}[t_1,
t_3, \dots , t_{2n-1}, \dots ], \quad |0\rangle \mapsto 1
\end{equation}
by
\begin{align}
\sigma^t h^t_{\frac{2n-1}{2}}(\sigma^t)^{-1}: &=\partial_{t_{2n-1}}, \quad \sigma^t h^t_{-\frac{2n-1}{2}}(\sigma^t)^{-1}: =-\frac{2n-1}{2}t_{2n-1}\cdot, \quad \text{for any} \ \ n\in \mathbb{N}; \\
\sigma^t H^{\chi}_{(n)}(\sigma^t)^{-1} &= H^{\sigma \chi}_{(n)}: =\partial _{t_{\frac{2n-1}{2}}}, \quad \sigma^t H^{\chi}_{(-n)}(\sigma^t)^{-1}: =-\frac{2n-1}{2}t_{\frac{2n-1}{2}}\cdot, \quad \text{for any} \ \ n\in \mathbb{N}.
\end{align}
We have
\begin{equation}
V_{\sigma t}^+(z) = \sigma^t V_{ t}^+(z) (\sigma^t)^{-1} : = \exp \Big(\sum_{n>0} t_{2n-1} z^{2n-1}\Big), \quad
V_{\sigma t}^-(z) = \sigma^t V_{ t}^-(z) (\sigma^t)^{-1} : = \exp \Big(\sum_{n>0}\frac{2}{2n-1}\partial_{t_{2n-1}} z^{-2n+1}\Big);
\end{equation}
and
\begin{equation}
H^{\sigma \chi} (z) = \sigma^t H^{\chi} (z) (\sigma^t)^{-1}: =\sum_{n>0} \partial _{t_{\frac{2n-1}{2}}} z^{-n} -\sum_{n>0} \frac{2n-1}{2}t_{\frac{2n-1}{2}}z^n.
\end{equation}
\end{thm}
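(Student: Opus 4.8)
The plan is to assemble $\sigma^t$ out of two module isomorphisms that have already been established, and then to verify that the resulting linear bijection intertwines the two vertex operator maps. First I would invoke the vector space decomposition $\mathit{F_\chi}\cong \mathit{F^{t-hwv}_\chi}\ten \mathbb{C}[t_1, t_3,\dots]$ from \eqref{eqn:isoHWV-Polyn2}, together with the identifications $\mathit{F^{t-hwv}_\chi}\cong \mathit{SF^t}\cong \Lambda[t_{\frac12}, t_{\frac32},\dots]$ supplied by Corollary \ref{cor:FDecomp} and the remark immediately preceding the theorem. On the polynomial tensor factor, $\sigma^t$ is nothing but the standard realization of the unique irreducible $\mathcal{H}_{\mathbb{Z}+1/2}$-module $B_{1/2}$, with $h^t_{\frac{2n-1}{2}}\mapsto \partial_{t_{2n-1}}$ and $h^t_{-\frac{2n-1}{2}}\mapsto -\frac{2n-1}{2}t_{2n-1}\cdot$; on the Grassmann factor it is the analogous realization of the twisted fermionic vertex algebra $\mathit{SF^t}$, sending the creation modes $H^{\chi}_{(-n)}$ to Grassmann multiplication by $t_{\frac{2n-1}{2}}$ and the annihilation modes $H^{\chi}_{(n)}$ to Grassmann differentiation. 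Since the monomials in these two commuting families of operators furnish a basis on each side (the PBW-type basis of $\mathit{SF^t}$ from Corollary \ref{cor:FDecomp} and the monomial basis of the polynomial algebra), the map $\sigma^t$ with $|0\rangle\mapsto 1$ is a well-defined linear bijection and the stated conjugation formulas for the Heisenberg and fermionic modes hold by construction.

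Next I would read off the vertex-operator-level formulas. The expressions for $V_{\sigma t}^+(z)$ and $V_{\sigma t}^-(z)$ follow by applying $\sigma^t$ term by term to the exponential definitions of $V_t^+(z)$ and $V_t^-(z)$ and substituting the Heisenberg conjugation rules; likewise $H^{\sigma \chi}(z)=\sum_{n>0}\partial_{t_{\frac{2n-1}{2}}}z^{-n}-\sum_{n>0}\frac{2n-1}{2}t_{\frac{2n-1}{2}}z^n$ is obtained by applying $\sigma^t$ to the mode expansion of $H^{\chi}(z)$ and using the fermionic conjugation rules. At this point the identity $\sigma^t\chi(z)(\sigma^t)^{-1}=V_{\sigma t}^+(z)H^{\sigma\chi}(z)V_{\sigma t}^-(z)$ is forced by \eqref{eqn:chi-bos}, and preservation of the vacuum and of the translation operator is immediate.

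The substantive step, and the one I expect to be the main obstacle, is confirming that $\sigma^t$ is genuinely an isomorphism of \emph{twisted vertex algebras}, i.e. that the bosonized field reproduces the defining OPE \eqref{equation:OPE-C}. Concretely I would compute the contraction of the two Heisenberg exponential factors arising in $\chi(z)\chi(w)=V_t^+(z)H^{\chi}(z)V_t^-(z)V_t^+(w)H^{\chi}(w)V_t^-(w)$: using $[h^t_m,h^t_n]=-m\delta_{m+n,0}1$ one checks that the only nontrivial pairing is between $V_t^-(z)$ and $V_t^+(w)$, and summing the resulting series $\sum_{n>0}\frac{2}{2n-1}(w/z)^{2n-1}=\log\frac{z+w}{z-w}$ yields the scalar factor $\frac{z+w}{z-w}$. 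Because the fermionic modes act only on the $\mathit{SF^t}$ factor while $V_t^{\pm}$ act on the polynomial factor, the two commute, so this factor multiplies the twisted fermionic OPE $H^{\chi}(z)H^{\chi}(w)\sim \frac{z-w}{(z+w)^2}$ of \eqref{eqn:twistedfermionOPE}, and the product $\frac{z+w}{z-w}\cdot\frac{z-w}{(z+w)^2}=\frac{1}{z+w}$ recovers exactly the OPE of $\chi(z)$. The delicate points here are bookkeeping ones, namely keeping the twisted (half-integer) indexing and the expansion region $\iota_{z,w}$ consistent so that $\frac{z+w}{z-w}$ and $\frac{z-w}{(z+w)^2}$ are expanded compatibly, rather than conceptual ones; once these are handled, the intertwining of the vertex operator maps, and hence the theorem, follows.
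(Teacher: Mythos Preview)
The paper does not supply its own proof of this theorem: it is stated with the attribution ``(\cite{OrlovLeur})'' and the text moves directly to the untwisted bosonization. What the paper does do is assemble, in the preceding paragraphs, exactly the ingredients you invoke --- the decomposition \eqref{eqn:isoHWV-Polyn2}, the identification $\mathit{F^{t-hwv}_\chi}\cong \mathit{SF^t}\cong \Lambda[t_{\frac12},t_{\frac32},\dots]$ of Corollary~\ref{cor:FDecomp}, and the factorization $\chi(z)=V_t^+(z)H^{\chi}(z)V_t^-(z)$ of \eqref{eqn:chi-bos} --- and then records the theorem as a summary of \cite{OrlovLeur}. So there is nothing to compare your argument against except that reference.

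Your reconstruction is sound and is precisely the argument those ingredients are set up to support. Two small cautions. First, the commutation of $H^{\chi}$ with $V_t^{\pm}$ is not merely that ``fermions act on one tensor factor and bosons on the other''; it is the content of Theorem~\ref{thm:symplVA}~II that $H^{\chi}(z)$ preserves $\mathit{F^{t-hwv}_\chi}$, hence commutes with the $\mathcal{H}_{\mathbb{Z}+1/2}$-action, and you should cite it as such. Second, in the OPE check the scalar $(z+w)/(z-w)$ you obtain has a pole at $z=w$, not at $z=-w$; the full identity $\chi(z)\chi(w)\sim 1/(z+w)$ only emerges once you also note that the normal-ordered fermionic factor $:H^{\chi}(z)H^{\chi}(w):$ vanishes at $z=w$ (fermionic antisymmetry), cancelling that spurious pole. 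You flag this as bookkeeping, and it is, but it is the one place the computation can go wrong if the expansion conventions $\iota_{z,w}$ versus $\iota_{z,-w}$ are mixed up.
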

Now we move to the second, untwisted bosonization.  We have
\[
\chi(z) =\gamma_\chi(z^2) +z \beta_\chi (z^2),
\]
and the fields $\beta_\chi(z^2)$ and $\gamma_\chi (z^2)$
can be written as
\begin{equation} \label{eqn:beta-gamma-bos}
\beta_\chi(z^2)=V^+(z)H^{\beta} (z^2)V^-(z)z^{2h_0}, \quad \quad  \gamma_\chi (z^2) =V^+(z)^{-1}H^{\gamma} (z^2)V^-(z)^{-1}z^{-2h_0}.
\end{equation}
In this case of the untwisted bosonization, even though they are again fermionic, as in the twisted case, the fields $H^{\beta} (z)$ and $H^{\gamma} (z)$ can be bosonized further, via the identification with the symplectic fermions (for more details see \cite{AngCKPSecond}), namely
\begin{align}\label{eqn:ident1}
H^{\beta} (z) &\to e^{-\alpha}_y(z) =\exp (-\sum _{n\ge 1}y_n z^n)\exp (\sum _{n\ge 1}\frac{\partial}{n\partial y_n} z^{-n})e^{-\alpha}z^{-h^y_0}\\ \label{eqn:ident2}
H^{\gamma} (z) &\to \partial_z e^{\alpha}_y(z) =:h^y (z)\exp (\sum _{n\ge 1} y_n z^n)\exp (-\sum _{n\ge 1}\frac{\partial}{n\partial y_n} z^{-n})e^{\alpha}z^{h^y_0}:.
\end{align}
where the lattice fields $e^{\alpha}_y (z)$, $e^{-\alpha}_y (z)$  act on
the bosonic  vector space $\mathbb{C}[e^{\alpha}, e^{-\alpha}]\ten \mathbb{C}[y_1, y_2, \dots , y_n\dots]$  by
\begin{align*}
e^{\alpha}_y(z)& =\exp (\sum _{n\ge 1} y_n z^n)\exp (-\sum _{n\ge 1}\frac{\partial}{n\partial y_n} z^{-n})e^{\alpha}z^{\partial_{\alpha}},\\
e^{-\alpha}_y(z)& =\exp (-\sum _{n\ge 1}y_n z^n)\exp (\sum _{n\ge 1}\frac{\partial}{n\partial y_n} z^{-n})e^{-\alpha}z^{-\partial_{\alpha}},
\end{align*}
We use the index $y$ in $e^{\alpha}_y (z)$, $e^{-\alpha}_y (z)$
 to indicate these are  the exponentiated boson fields acting on the variables $y_1, y_2, \dots , y_n\dots$. We introduce similarly the Heisenberg field $h^y (z)$,
\begin{equation}
h^y (z) = \sum _{n\ge 1}\frac{\partial}{\partial y_n} z^{-n-1} +h^y_0 z^{-1} + \sum _{n\ge 1}n y_n z^{n-1},
\end{equation}
where $h^y_0$ acts on $\mathbb{C}[e^{\alpha}, e^{-\alpha}]\ten \mathbb{C}[y_1, y_2, \dots , y_n\dots]$  by
$h^y_0 e^{m\alpha} P(y_1, y_2, \dots , y_n\dots) = m e^{m\alpha} P(y_1, y_2, \dots , y_n\dots)$.

Finally, the complete second bosonization of the CKP hierarchy is summarized in the following:
\begin{thm}\label{thm:completebosonization}(\cite{AngCKPSecond})
The generating field $\chi (z)$ of the CKP hierarchy can be written as
\[
\chi(z) =\gamma_\chi(z^2) +z \beta_\chi (z^2),
\]
where the fields $\beta_\chi (z)$ and $\gamma_\chi(z)$ can be bosonized as follows:
\begin{align} \label{eqn:finalcompl1}
\sigma: \ \beta_\chi (z) &\mapsto  \ \exp \Big(i\sum_{n>0}(x_n +iy_n) z^{n}\Big)\exp \Big(-i\sum_{n>0}\frac{1}{n}\left(\frac{\partial}{\partial x_n}+i\frac{\partial}{\partial y_n}\right) z^{-n}\Big)e^{-\alpha},\\ \label{eqn:finalcompl2}
\sigma: \ \gamma_\chi(z) &\mapsto \  :\exp \Big(-i\sum_{n>0}(x_n +iy_n)z^{n}\Big) h^y (z) \exp \Big(i\sum_{n>0}\frac{1}{n}\left(\frac{\partial}{\partial x_n}+i\frac{\partial}{\partial y_n}\right) z^{-n}\Big) e^{\alpha}:.
\end{align}
The Fock space $\mathit{F_{\chi}}$ is mapped via the bosonization map $\sigma$ to a subspace of the bosonic space $\mathbb{C}[e^{\alpha}, e^{-\alpha}]\ten \mathbb{C}[x_1, x_2, \dots , x_n, \dots ;y_1, y_2, \dots , y_n\dots]$, with \ $|0\rangle \mapsto 1$.
The Hirota equation \eqref{eqn:HirotaC} is equivalent to
\begin{equation}
Res_z \Big(\beta_\chi (z)\otimes \gamma_\chi (z)- \gamma_\chi (z)\otimes \beta_\chi (z)\Big) =0.
\end{equation}
\end{thm}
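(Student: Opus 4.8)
The decomposition $\chi(z)=\gamma_\chi(z^2)+z\beta_\chi(z^2)$ is immediate: adding the defining identities $\gamma_\chi(z^2)=\tfrac12\big(\chi(z)+\chi(-z)\big)$ and $z\beta_\chi(z^2)=\tfrac12\big(\chi(z)-\chi(-z)\big)$ recovers $\chi(z)$. The real content is the explicit bosonization of $\beta_\chi$ and $\gamma_\chi$, and the plan is to obtain it by composing two bosonizations already in hand. I would start from \eqref{eqn:beta-gamma-bos}, which writes $\beta_\chi$ and $\gamma_\chi$ in terms of $H^{\beta}$, $H^{\gamma}$ conjugated by the untwisted Heisenberg vertex operators $V^+(z)$, $V^-(z)$ together with the charge factors $z^{\pm 2h^{\mathbb{Z}}_0}$, and then substitute the symplectic-fermion bosonizations \eqref{eqn:ident1} and \eqref{eqn:ident2} of $H^{\beta}$ and $H^{\gamma}$. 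This reduces the theorem to a single exponential-combining computation.

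The observation that makes this clean is that the two substituted pieces act on different tensor factors of $\mathit{F_\chi}\cong \mathit{F^{hwv}_\chi}\ten\mathbb{C}[x_1,x_2,\dots]$: the operators $V^+(z)$, $V^-(z)$ act only on $\mathbb{C}[x_n]$ through $h^{\mathbb{Z}}_n\mapsto i\partial_{x_n}$, $h^{\mathbb{Z}}_{-n}\mapsto inx_n$, while the fields $e^{\alpha}_y(z)$, $e^{-\alpha}_y(z)$, $h^y(z)$ act only on $\mathit{F^{hwv}_\chi}\cong\mathbb{C}[e^\alpha,e^{-\alpha}]\ten\mathbb{C}[y_n]$. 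Hence they commute and the exponentials simply multiply, with no cross-contractions. Under the stated scaling one has $V^+(z)\mapsto \exp(i\sum_{n>0}x_n z^{2n})$ and $V^-(z)\mapsto \exp(-i\sum_{n>0}\tfrac1n\partial_{x_n}z^{-2n})$; multiplying the positive-mode pieces gives $\exp(i\sum x_n z^{2n})\exp(-\sum y_n z^{2n})=\exp\big(i\sum(x_n+iy_n)z^{2n}\big)$, while the negative-mode pieces give $\exp(\sum\tfrac1n\partial_{y_n}z^{-2n})\exp(-i\sum\tfrac1n\partial_{x_n}z^{-2n})=\exp\big(-i\sum\tfrac1n(\partial_{x_n}+i\partial_{y_n})z^{-2n}\big)$, the complex combinations $x_n+iy_n$ and $\partial_{x_n}+i\partial_{y_n}$ appearing precisely because $i^2=-1$. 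After the rescaling $z^2\to z$ this is exactly \eqref{eqn:finalcompl1}; the computation for $\gamma_\chi$ is identical except that one uses $V^+(z)^{-1}$, $V^-(z)^{-1}$ and carries the extra current $h^y(z)$ coming from $\partial_z e^{\alpha}_y$, producing \eqref{eqn:finalcompl2}.

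The step I expect to be the main obstacle is the bookkeeping of the charge factors: $z^{\pm 2h^{\mathbb{Z}}_0}$ from \eqref{eqn:beta-gamma-bos} and the factors $z^{\pm 2h^y_0}$ hidden inside the lattice fields $e^{\alpha}_y(z)$, $e^{-\alpha}_y(z)$. For the final formulas to contain only a bare $e^{\mp\alpha}$, with no surviving power of $z$, one must show these cancel, which amounts to verifying that the untwisted-Heisenberg charge $h^{\mathbb{Z}}_0$ is identified with the lattice momentum $h^y_0$ under $\mathit{F^{hwv}_\chi}\cong\mathit{SF}\cong\mathbb{C}[e^\alpha,e^{-\alpha}]\ten\mathbb{C}[y_n]$; this follows by tracking the charge grading through Corollary \ref{cor:FDecomp}. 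One must also check that the normal ordering in the $\gamma_\chi$ formula is preserved under the composition and that $|0\rangle\mapsto 1$, after which the assertion that $\sigma$ maps $\mathit{F_\chi}$ into the stated bosonic space is read off directly.

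For the final assertion, that the Hirota equation \eqref{eqn:HirotaC} is equivalent to $Res_z\big(\beta_\chi(z)\ten\gamma_\chi(z)-\gamma_\chi(z)\ten\beta_\chi(z)\big)=0$, I would substitute $\chi(\pm z)=\gamma_\chi(z^2)\pm z\beta_\chi(z^2)$ into $\chi(z)\ten\chi(-z)$ and expand. The terms $\gamma_\chi(z^2)\ten\gamma_\chi(z^2)$ and $z^2\,\beta_\chi(z^2)\ten\beta_\chi(z^2)$ are even in $z$ and so contribute nothing to $Res_z$; the surviving odd part is $z\big(\beta_\chi(z^2)\ten\gamma_\chi(z^2)-\gamma_\chi(z^2)\ten\beta_\chi(z^2)\big)$, and the change of variable $w=z^2$, using $Res_z\, z\, g(z^2)=Res_w\, g(w)$, yields exactly the claimed bilinear identity. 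This last part is elementary once the parity observation in $z$ is made.
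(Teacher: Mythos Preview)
Your proposal is correct, and it is exactly the composition that the paper has set up: the paper states Theorem~\ref{thm:completebosonization} with a citation to \cite{AngCKPSecond} and gives no proof here, but the surrounding text assembles precisely the ingredients you use, namely \eqref{eqn:beta-gamma-bos} together with the symplectic-fermion identifications \eqref{eqn:ident1}--\eqref{eqn:ident2}, so your argument is the intended one. Your handling of the exponentials, the observation that the $x$- and $y$-pieces act on different tensor factors, the identification $h^{\mathbb{Z}}_0\leftrightarrow h^y_0$ needed to cancel the $z$-powers, and the parity argument for the Hirota equivalence are all sound.
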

 This theorem  allows us to write the purely algebraic Hirota equation as an infinite hierarchy  of actual differential equations. One proceeds similarly to the exposition in \cite{KacRaina}, Chapter 7, by employing the Hirota derivatives technique. These calculations and the types of solutions produced will be detailed in a separate article, as they require some length. In the next section, we will instead focus on one of the applications of these bosonizations, namely  the various characters (graded dimensions) and the identities one can derive from the comparison between the purely bosonic vs the boson+symplectic fermion sides.

\section{Graded dimensions and character identities}

As an application to any bosonization, one can typically obtain identities relating certain product formulas on the fermionic side  to certain sum formulas on the bosonic side.
Recall for instance that  the Jacobi triple product identity can be obtained from such a bosonization --- it was derived  in \cite{Kac} for the classical boson-fermion correspondence of type A, and in \cite{AngD-A} for the bosonization of type D-A. Such a sum-vs-product  identity perfectly illustrates the equality between the fermionic side (the product formulas) and the bosonic side (the sum formulas). Here, as is typical for the CKP quirks, we will show that the sum-vs-product identity representing the untwisted bosonization is much more complicated due to the fact that the degree operator $L_0$ that we had to use for the Heisenberg decomposition doesn't act uniformly on the symplectic fermion side with which the highest weight vectors space $\mathit{F^{hwv}_{\chi}}$ identifies. But first,  we will explore the fact that due to the two bosonizations, in the CKP case we have additional grading operators, besides the charge and the degree gradings.

 The degree grading on the Fock space $\mathit{F_{\chi}}$ is given by the operator $L_0$, but in
fact, we can consider a  one-parameter $\lambda$-degree grading on $\mathit{F_{\chi}}$. Recall, the degree grading is induced by the action of the 0-mode $L_0$ of the Virasoro field $L^{-\frac{1}{4}} (z^2)$. This Virasoro field is an element of the  more general family of Virasoro fields $L^{\lambda} (z)$,  translated   from the $\beta-\gamma$ system (\cite{AngMLB}, for simplicity parameter $\mu$ is set to 0):
\begin{equation}\label{eqn:Vir-lambda}
L^{\lambda} (z)=\lambda :\left(\partial_z\beta (z)\right)\gamma (z): +(\lambda +1):\beta (z)\left(\partial_z\gamma (z)\right): .
\end{equation}
We have
\begin{equation}
\label{eqn:VirGradingLambda}
L^\lambda_0 = -\sum_{n\in \mathbb{Z}}\left(\lambda  +k\right) :\chi_{-2k+\frac{1}{2}}\chi_{2k-\frac{1}{2}}:.
\end{equation}
We want to note that for both the operators $L^\lambda_0$ and  $h^{\mathbb{Z}}_0$ the monomials from \eqref{eqn:monomials}   form an eigenspace basis, as we showed in the previous section.

We made the particular  choice to use $\lambda =-\frac{1}{4}$ for $L_0$ not only because it would  simplify the notation --- for general $\lambda$ the behavior of the operator $L^\lambda_0$ is very similar to that of $L_0$. But also this  is the most natural choice of grading operator for $\mathit{F_{\chi}}$, in particular, as we will see it is the only choice from the family  $L^\lambda_0$ that commutes with the operator $L^t$ which we will introduce below.

The third grading on $\mathit{F_{\chi}}$, corresponding most closely  to the twisted bosonization, is obtained from the Virasoro field  $L^{\chi} (z)$ with central charge $c=1$ (\cite{FMS}, \cite{AngMLB}, for simplicity we set $\kappa =0$):
\begin{equation}
   \label{eqn:Vir3}
   L^{\chi} (z)=\left(-\frac{1}{2z^2}:h^{\mathbb{Z}+1/2} (z)^2: +\frac{1}{16z^4}\right).
   \end{equation}
We have
\begin{equation}
\label{eqn:VirGradingchi}
L^{\chi}_0 = -\frac{1}{2}\sum_{n\in \mathbb{Z}+1/2} :h^t_{-n}h^t_n: +\frac{1}{16}.
\end{equation}
For simplicity we will consider the modified grading operator $L^t_0$, where
\begin{equation}
L^t_0 = L^{\chi}_0  -\frac{1}{16} = -\sum_{\substack{n\in \mathbb{Z}+1/2\\ n>0}} :h^t_{-n}h^t_n:.
\end{equation}
$L^t_0$ is diagonalizable, due to Proposition \ref{thm:HeisDecomp}, but we indeed are always forced  to use the Heisenberg decomposition of Theorem \ref{thm:HeisDecomp} to calculate the action of $L^t_0$ on the elements of $\mathit{F_{\chi}}$ (example  of such calculation is given below, in the proof of Lemma \ref{lem:nonsiimultdiagon}). This is because, unlike for the operators $L^\lambda_0$ and  $h^{\mathbb{Z}}_0$, the monomials from \eqref{eqn:monomials} do not form an eigenspace basis (an example is the monomial $\chi_{-\frac{3}{2}}|0\rangle$, which is not an eigenvector for $L^t_0$).

There is a fourth grading on $\mathit{F_{\chi}}$,   induced by the yet  another family of Virasoro fields on $\mathit{F_{\chi}}$: for any $a, b \in \mathbb{C}$ the field
\begin{equation}
\label{eqn:Vir1}
L^{h} (z^2)=-\frac{1}{2}:h^{\mathbb{Z}} (z^2)^2: +a \partial_{z^2} h^{\mathbb{Z}} (z^2) +\frac{b}{z^2}h^{\mathbb{Z}}(z^2) +\frac{2ab- b^2}{2z^4}.
\end{equation}
is a Virasoro field with central charge $1+12a^2$, the central charge is independent of $b$ (\cite{AngMLB}).
We have
\begin{equation}
\label{eqn:VirGrading-h}
L^{h}_0 = -\frac{1}{2}(h^{\mathbb{Z}}_0)^2 -\sum_{n\in \mathbb{Z}_{+}} h^{\mathbb{Z}}_{-n}h^{\mathbb{Z}}_n +\left(b-a-an\right)h^{\mathbb{Z}}_{n}.
\end{equation}
For simplicity here we will only consider  the case  $a=b=0$.
$L^h_0$ is also diagonalizable, again due to Theorem \ref{thm:HeisDecomp}, its second part. The monomials from \eqref{eqn:monomials} do not form an eigenspace basis here either: an example is the monomial $\chi_{-\frac{5}{2}}|0\rangle$, which is not an eigenvector for $L^h_0$.
\begin{remark}
The three Virasoro fields are different, neither is part of the family of any of the other two, as we explained in \cite{AngMLB}.
\end{remark}

Since we have multiple gradings, we can consider multiparameter characters (graded dimensions). The potential obstruction to such a multiparameter character is the lack of commuting of the respective grading operators. In our case, the grading operators $L_0$ and $h^{\mathbb{Z}}_0$  commute among themselves. Consequently,  in \cite{AngCKPSecond} we already introduced the  following character with respect to the $L_0$ and $h^{\mathbb{Z}}_0$ grading operators:
\begin{equation}
dim_{q, z}^{L_0, h^{\mathbb{Z}}_0} \mathit{F_{\chi}}:= tr_{\mathit{F_{\chi}}}q^{2L_0}z^{h^{\mathbb{Z}}_0}.
\end{equation}
But, the third grading operator  $L^t_0$ does not commute with the charge grading operator $h^{\mathbb{Z}}_0$, and consequently,  with the 4th grading by $L^h_0$ either:
\begin{lem}\label{lem:nonsiimultdiagon}
The operators  $L^t_0$ and $h^{\mathbb{Z}}_0$ do not commute. Therefore $L^t_0$ and $h^{\mathbb{Z}}_0$ cannot be simultaneously diagonalized. The same is true for  the operators $L^t_0$ and $L^h_0$.
\end{lem}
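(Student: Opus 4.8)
The plan is to prove non-commuting by exhibiting a single vector on which the relevant commutator does not vanish; since each of $L^t_0$, $h^{\mathbb{Z}}_0$ and $L^h_0$ is diagonalizable (by Theorem~\ref{thm:HeisDecomp} and the diagonalizability of $h^{\mathbb{Z}}_0$), a nonzero commutator immediately rules out simultaneous diagonalizability for the corresponding pair. The natural place to look is the lowest-degree homogeneous subspace on which $L^t_0$ fails to be diagonal in the monomial basis, namely the subspace of $2L_0$-degree $\tfrac32$, which by Theorem~\ref{thm:HeisDecomp} is two-dimensional with monomial basis $\{\chi_{-3/2}|0\rangle,\ \chi_{-1/2}^3|0\rangle\}$. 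I take the test vector $v=\chi_{-3/2}|0\rangle$. It is a charge eigenvector ($h^{\mathbb{Z}}_0 v=-v$) and, as I will check, an $L^h_0$-eigenvector, but it is \emph{not} an $L^t_0$-eigenvector; this mismatch is precisely what drives both commutators.

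First I note that all three operators $h^{\mathbb{Z}}_0$, $L^h_0$, $L^t_0$ commute with $L_0$: for the first two because the untwisted Heisenberg modes preserve degree, and for $L^t_0$ because each twisted mode $h^t_n$ shifts the $2L_0$-degree by $-2n$, so the shifts cancel in each summand $h^t_{-n}h^t_n$. Hence applying any of them to $v$ keeps us inside the degree-$\tfrac32$ subspace. The key computation is therefore $L^t_0 v$, which I would obtain by diagonalizing $L^t_0$ on this subspace through the twisted Heisenberg structure of Proposition~\ref{prop:Heis-chi} and Theorem~\ref{thm:HeisDecomp}. The twisted highest weight vector $w=\chi_{-3/2}|0\rangle-\tfrac13\chi_{-1/2}^3|0\rangle$ satisfies $L^t_0 w=0$, while the descendant $p:=h^t_{-1/2}\,\chi_{-1/2}|0\rangle$ of the degree-$\tfrac12$ highest weight vector $\chi_{-1/2}|0\rangle$ satisfies $L^t_0 p=\tfrac12 p$, using $[L^t_0,h^t_{-1/2}]=\tfrac12 h^t_{-1/2}$ (a direct consequence of the twisted relations $[h^t_m,h^t_n]=-m\delta_{m+n,0}$). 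Expanding the mode $h^t_{-1/2}$ from $h_\chi^{\mathbb{Z}+1/2}(z)=\tfrac12:\chi(z)\chi(-z):$ gives $p=-\chi_{-3/2}|0\rangle+\tfrac12\chi_{-1/2}^3|0\rangle$, and inverting the change of basis yields $v=3w+2p$, whence $L^t_0 v=p$. The crucial feature is that $p$ carries a nonzero charge-$(+3)$ component $\tfrac12\chi_{-1/2}^3|0\rangle$, whereas $v$ is of pure charge $-1$.

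From here both statements are short one-line verifications. For $h^{\mathbb{Z}}_0$: using $L^t_0 v=p$ and the charges $-1,+3$ of the two basis monomials, $[L^t_0,h^{\mathbb{Z}}_0]v=L^t_0(-v)-h^{\mathbb{Z}}_0 p=-2\chi_{-1/2}^3|0\rangle\neq 0$. For $L^h_0$ with $a=b=0$: since $h^{\mathbb{Z}}_n\chi_{-3/2}|0\rangle=0=h^{\mathbb{Z}}_n\chi_{-1/2}^3|0\rangle$ for $n\geq 1$ (by the degree count), the operator $L^h_0=-\tfrac12(h^{\mathbb{Z}}_0)^2-\sum_{n>0}h^{\mathbb{Z}}_{-n}h^{\mathbb{Z}}_n$ acts diagonally on the basis with eigenvalues $-\tfrac12$ on $\chi_{-3/2}|0\rangle$ and $-\tfrac92$ on $\chi_{-1/2}^3|0\rangle$; combining this with $L^t_0 v=p$ gives $[L^t_0,L^h_0]v=2\chi_{-1/2}^3|0\rangle\neq 0$. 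Because $L^t_0$, $h^{\mathbb{Z}}_0$ and $L^h_0$ are each diagonalizable, these nonzero commutators show that neither pair can be simultaneously diagonalized, which is the assertion of the lemma.

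I expect the main obstacle to be computational rather than conceptual: obtaining the explicit action of the twisted mode $h^t_{-1/2}$ on $\chi_{-1/2}|0\rangle$, which requires expanding $:\chi(z)\chi(-z):$ while carefully tracking the sign factors $(-1)^{k-1/2}$ produced by $\chi(-z)$, together with the non-anticommutative bracket $[\chi_m,\chi_n]=(-1)^{m-1/2}\delta_{m,-n}$. Once $L^t_0 v=p$ is pinned down in the monomial basis, the two commutators follow immediately from the charge and $L^h_0$-eigenvalue bookkeeping above.
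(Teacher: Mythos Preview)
Your proposal is correct and follows essentially the same approach as the paper: both work in the two-dimensional $2L_0$-degree $\tfrac32$ subspace, compute the twisted Heisenberg descendant $p=h^t_{-1/2}\chi_{-1/2}|0\rangle=-\chi_{-3/2}|0\rangle+\tfrac12\chi_{-1/2}^3|0\rangle$, and use the $L^t_0$-eigenbasis $\{w,p\}$ to detect the non-commuting. The only cosmetic difference is your choice of test vector $v=\chi_{-3/2}|0\rangle$ (an $h^{\mathbb{Z}}_0$- and $L^h_0$-eigenvector) versus the paper's $v_t=w$ (an $L^t_0$-eigenvector), and you carry out the $L^h_0$ case explicitly where the paper simply says it is similar.
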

\begin{proof}
Consider the highest weight vector $v_t = \chi_{-\frac{3}{2}}|0\rangle -\frac{1}{3}\chi_{-\frac{1}{2}}^3|0\rangle$ for the twisted Heisenberg algebra
$\mathcal{H}_{\mathbb{Z}+1/2}$. We have from the definition of $L^t_0$ that on highest weight vectors it acts as 0, and so
\[
L^t_0 \left(  \chi_{-\frac{3}{2}}|0\rangle -\frac{1}{3}\chi_{-\frac{1}{2}}^3|0\rangle \right) = L^t_0 v_t =0.
\]
Therefore $h^{\mathbb{Z}}_0 L^t_0 v_t =0$.
But
\[
h^{\mathbb{Z}}_0 v_t =h^{\mathbb{Z}}_0  \left(  \chi_{-\frac{3}{2}}|0\rangle -\frac{1}{3}\chi_{-\frac{1}{2}}^3|0\rangle \right) = -1\cdot \chi_{-\frac{3}{2}}|0\rangle -3\cdot \frac{1}{3}\chi_{-\frac{1}{2}}^3|0\rangle = - \chi_{-\frac{3}{2}}|0\rangle -\chi_{-\frac{1}{2}}^3|0\rangle.
\]
As we mentioned above, $L^t_0$ is diagonalizable,  but we have to use the Heisenberg decomposition of Theorem \ref{thm:HeisDecomp} to calculate the action of $L^t_0$ on non-highest weight vectors, such as the vector $\left(- \chi_{-\frac{3}{2}}|0\rangle -\chi_{-\frac{1}{2}}^3|0\rangle\right)$. In particular, we have two basis vectors of degree 3, as there are 2 odd partitions of degree 3, namely $(3)$ and $(1, 1, 1)$. Thus we have to use as basis eigenvectors of degree 3 the highest weight vector $v_t$ of degree 3, and the non-highest-weight vector
\[
h_{-\frac{1}{2}} \chi_{-\frac{1}{2}}|0\rangle =\left(\frac{1}{2}\chi_{-\frac{1}{2}}^2 -\chi_{-\frac{3}{2}}\chi_{\frac{1}{2}}\right)\chi_{-\frac{1}{2}}|0\rangle =\frac{1}{2}\chi_{-\frac{1}{2}}^3|0\rangle - \chi_{-\frac{3}{2}}|0\rangle.
 \]
 Note  we used that $ \chi_{-\frac{1}{2}}|0\rangle$ is the single  highest weight vector of degree 1, and we have from
 Proposition \ref{prop:Heis-chi} that
 \[
 h_{-\frac{1}{2}} =\frac{1}{2}\sum_{n\in \mathbb{Z}+1/2}(-1)^{-n-1/2}\chi_{-n-1}\chi_n =\frac{1}{2}\chi_{-\frac{1}{2}}^2 -\chi_{-\frac{3}{2}}\chi_{\frac{1}{2}} + \chi_{-\frac{5}{2}}\chi_{\frac{3}{2}}-\dots .
 \]
 We can write
\begin{align*}
- \chi_{-\frac{3}{2}}|0\rangle -\chi_{-\frac{1}{2}}^3|0\rangle = 7v_t -8h_{-\frac{1}{2}} \chi_{-\frac{1}{2}}|0\rangle.
\end{align*}
And thus since $[h_{\frac{1}{2}}, h_{-\frac{1}{2}}] =-\frac{1}{2}$ we get
\begin{align*}
L^t_0 \left(- \chi_{-\frac{3}{2}}|0\rangle -\chi_{-\frac{1}{2}}^3|0\rangle\right) = L^t_0 \left(7v_t -8h_{-\frac{1}{2}} \chi_{-\frac{1}{2}}|0\rangle\right) = 0 - 4h_{-\frac{1}{2}} \chi_{-\frac{1}{2}}|0\rangle.
\end{align*}
In any case, $L^t_0 h^{\mathbb{Z}}_0 v_t \neq 0$, therefore $L^t_0 h^{\mathbb{Z}}_0\neq  h^{\mathbb{Z}}_0 L^t_0$ \ on $\mathit{F_{\chi}}$. The proof that the operators $L^t_0$ and $L^h_0$ do not commute is similar, and so we omit it.
\end{proof}
Thus we cannot consider a character $dim_{t, z}^{L^t_0, h^{\mathbb{Z}}_0} \mathit{F_{\chi}}$, as such a trace $tr_{\mathit{F_{\chi}}}t^{L^t_0}z^{h^{\mathbb{Z}}_0}$ would require the simultaneous diagonalization of  $L^t_0$ and $h^{\mathbb{Z}}_0$, and so does not exist.
 Consequently,  a simultaneous diagonalization of all 4 grading operators,  including the  $L^t_0$ grading,  is not possible; and  as a result a 4-parameter graded dimension with these 4 grading operators does not exist either.

But we have the following Lemma, which will ensure that the operators $L_0$ and $L^t_0$ can be simultaneously diagonalized:
\begin{lem} (\cite{OrlovLeur})
For any $v\in \mathit{F_{\chi}}$ a homogeneous element of given degree we have
\begin{equation}
\label{eqn:degreechange}
deg (h^t_{-\frac{2n-1}{2}} v) = (2n-1) +deg (v) , \quad \forall \ n\in \mathbb{Z}_+.
\end{equation}
An eigenvector basis for the operator $L_0$ consists of all the highest weight vectors  for the twisted Heisenberg algebra $\mathcal{H}_{\mathbb{Z}+1/2}$, together with the twisted Heisenberg monomials induced from them:
\[
h^t_{-\frac{2n_s-1}{2}}\dots h^t_{-\frac{2n_2-1}{2}}h^t_{-\frac{2n_1-1}{2}}v_t,
\]
where $v_t$  is any highest weight vector  for the twisted Heisenberg algebra $\mathcal{H}_{\mathbb{Z}+1/2}$, $n_1, n_2, \dots n_s \in \mathbb{Z}_+$ are not necessarily different.
\end{lem}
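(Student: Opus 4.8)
The plan is to prove the two assertions in turn: the degree shift formula \eqref{eqn:degreechange} by a direct weight count, and the eigenbasis claim by feeding that formula into the twisted Heisenberg decomposition of Theorem~\ref{thm:HeisDecomp}.

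First I would record the elementary commutator $[2L_0,\chi_c]=-c\,\chi_c$, valid for every $c\in\mathbb{Z}+1/2$. This follows in one line from the explicit expression \eqref{eqn:VirGrading} for $L_0$ together with the mode relations \eqref{eqn:Com-C}: a normally ordered bilinear $:\chi_{-k}\chi_k:$ interacts with $\chi_c$ only through the contraction forced by $\delta_{k,|c|}$, and tracking the signs shows that $\chi_{-j}$ raises the degree $2L_0$ by $j$ while $\chi_j$ lowers it by $j$. Next I would recall from Proposition~\ref{prop:Heis-chi}---concretely, from the expansion of $h_{-1/2}$ displayed in the proof of Lemma~\ref{lem:nonsiimultdiagon}---that $h^t_{-\frac{2n-1}{2}}$ is a sum of normally ordered bilinears $:\chi_a\chi_b:$ with $a+b=-(2n-1)$. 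Each such bilinear shifts the $2L_0$-eigenvalue by $-(a+b)=2n-1$, so $h^t_{-\frac{2n-1}{2}}$ is homogeneous of $2L_0$-degree $2n-1$, equivalently $[L_0,h^t_{-\frac{2n-1}{2}}]=\tfrac{2n-1}{2}\,h^t_{-\frac{2n-1}{2}}$. Applying this to a homogeneous $v$ yields exactly \eqref{eqn:degreechange}.

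For the eigenbasis I would combine Theorem~\ref{thm:HeisDecomp}(I) with the standard Fock-space (PBW) basis of each irreducible twisted Heisenberg module. Each summand is isomorphic to $B_{1/2}\cong\mathbb{C}[t_1,t_3,\dots]$ and is freely generated from its highest weight vector $v_t$ by the creation operators $h^t_{-\frac{2n-1}{2}}$, $n\in\mathbb{Z}_+$; hence the monomials $h^t_{-\frac{2n_s-1}{2}}\cdots h^t_{-\frac{2n_1-1}{2}}v_t$ form a basis of that summand, and ranging over all highest weight vectors they form a basis of $\mathit{F_\chi}$. To see that this basis diagonalizes $L_0$, I would choose each $v_t$ to be $L_0$-homogeneous and then invoke the first assertion: applying the creation operators keeps a vector homogeneous, with final $2L_0$-eigenvalue $\deg(v_t)+\sum_i(2n_i-1)$, so every basis monomial is an $L_0$-eigenvector.

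The two computational ingredients---the weight count for \eqref{eqn:degreechange} and the PBW basis of a Heisenberg Fock module---are routine. The one point genuinely needing care, and the step I would flag as the main obstacle, is the legitimacy of choosing the highest weight vectors homogeneous. This rests on the observation that $\mathit{F^{t-hwv}_\chi}$ is the common kernel of the annihilation operators $h^t_n$ ($n>0$), each of which is $2L_0$-homogeneous (by the same count, now $[2L_0,h^t_n]=-2n\,h^t_n$); a common kernel of homogeneous operators is a graded subspace of the $2L_0$-graded space $\mathit{F_\chi}$, and therefore admits a homogeneous basis. Since the examples following Theorem~\ref{thm:HeisDecomp} already display the highest weight vectors as degree-homogeneous, this confirms the choice is available and completes the argument.
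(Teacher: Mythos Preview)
Your proof is correct and follows the same approach as the paper: the paper's proof consists of a single sentence invoking ``direct observation'' of the explicit formula $h^t_{-\frac{2n-1}{2}}=\frac{1}{2}\sum_{k+l=-(2n-1)}(-1)^{-l-1/2}\chi_k\chi_l$ from Proposition~\ref{prop:Heis-chi}, leaving the degree count and the eigenbasis claim entirely implicit. Your version spells out the commutator $[2L_0,\chi_c]=-c\chi_c$, the resulting homogeneity of $h^t_{-\frac{2n-1}{2}}$, and---more carefully than the paper---the justification that the highest weight vectors can be chosen $L_0$-homogeneous because $\mathit{F^{t-hwv}_\chi}$ is a graded subspace; this last point is a genuine detail the paper does not address.
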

\begin{proof}
By direct observation of the form of $h^t_{-\frac{2n_s-1}{2}}$ by Proposition \ref{prop:Heis-chi}, namely
\[
h^t_{-\frac{2n-1}{2}} =\frac{1}{2}\sum_{\substack{k, l\in \mathbb{Z}+1/2\\ k+l =-2n+1}} (-1)^{-l-1/2}\chi_{k}\chi_l .
\]
\end{proof}

Therefore we can consider the 2 parameter character induced by  the grading operators $L_0$ and $L^t_0$
\begin{equation}
dim_{q, t}^{L_0, L^t_0} \mathit{F_{\chi}}:= tr_{\mathit{F_{\chi}}}q^{2L_0}t^{L^t_0}.
\end{equation}
By applying the Lemma above, and Theorem \ref{thm:HeisDecomp} (recall $\mathcal{ODP}$ stands for the set of all distinct partitions with parts in  $\mathbb{Z} +1/2$), we  get
\begin{prop}\label{prop:q-t-trace}
\begin{equation}
dim_{q, t}^{L_0, L^t_0} \mathit{F_{\chi}}= \sum_{\mathfrak{p}\in \mathcal{ODP}} \frac{q^{|\mathfrak{p}|}}{\prod_{n\geq 1} (1-q^{2n-1}t^{\frac{2n-1}{2}})}
\end{equation}
\end{prop}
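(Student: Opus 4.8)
The plan is to compute the two-parameter trace $\mathrm{tr}_{\mathit{F_{\chi}}} q^{2L_0} t^{L^t_0}$ by exploiting the eigenbasis supplied by the preceding Lemma, which decomposes $\mathit{F_{\chi}}$ as a direct sum of twisted Heisenberg modules, each built from a highest weight vector $v_t$ by applying the lowering generators $h^t_{-\frac{2n-1}{2}}$. The strategy is to factor the trace according to this tensor-product structure: since $\mathit{F_{\chi}} \cong \mathit{F^{t-hwv}_\chi} \ten \mathbb{C}[t_1, t_3, \dots]$ as in \eqref{eqn:isoHWV-Polyn2}, and the basis vectors $h^t_{-\frac{2n_s-1}{2}}\dots h^t_{-\frac{2n_1-1}{2}} v_t$ diagonalize both $L_0$ and $L^t_0$ simultaneously, the trace splits as a product of the contribution from the highest weight vectors $v_t$ and the contribution from the free action of the twisted Heisenberg creation operators.

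First I would pin down how each operator acts on such a basis vector. For $L^t_0$, recall from Lemma \ref{lem:nonsiimultdiagon}'s setup that $L^t_0$ annihilates every highest weight vector $v_t$, and by the commutation relations $[h^t_m, h^t_n] = -m\delta_{m+n,0}$ one computes that each creation operator $h^t_{-\frac{2n-1}{2}}$ contributes a fixed eigenvalue shift to $L^t_0$; applying $h^t_{-\frac{2n-1}{2}}$ raises the $L^t_0$-eigenvalue by $\frac{2n-1}{2}$, so that a monomial with factors indexed by $n_1, \dots, n_s$ has $L^t_0$-eigenvalue $\sum_j \frac{2n_j-1}{2}$. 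For $2L_0$, the degree-change formula \eqref{eqn:degreechange} from the Lemma gives exactly that each $h^t_{-\frac{2n-1}{2}}$ raises the degree by $(2n-1)$, while the highest weight vector $v_t$ contributes its own degree $|\mathfrak{p}|$ via the bijection of Theorem \ref{thm:HeisDecomp}(I) between such vectors and partitions $\mathfrak{p} \in \mathcal{ODP}$.

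With these two eigenvalue rules in hand, the computation becomes a bookkeeping exercise. The sum over highest weight vectors contributes $\sum_{\mathfrak{p}\in\mathcal{ODP}} q^{|\mathfrak{p}|}$, since by Theorem \ref{thm:HeisDecomp}(I) the number of highest weight vectors of degree $n$ equals the number of partitions in $\mathcal{ODP}$ of weight $n$, and each such vector carries $L^t_0$-eigenvalue zero. The free creation part contributes, for each $n \geq 1$, a geometric series $\sum_{m\geq 0} \big(q^{2n-1} t^{\frac{2n-1}{2}}\big)^m = \frac{1}{1 - q^{2n-1} t^{\frac{2n-1}{2}}}$, because the generator $h^t_{-\frac{2n-1}{2}}$ may be applied any number of times and each application multiplies the monomial's contribution by $q^{2n-1}t^{\frac{2n-1}{2}}$. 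Multiplying the highest-weight factor by the product over $n \geq 1$ of these geometric series yields precisely
\[
\sum_{\mathfrak{p}\in \mathcal{ODP}} \frac{q^{|\mathfrak{p}|}}{\prod_{n\geq 1} (1-q^{2n-1}t^{\frac{2n-1}{2}})}.
\]
The main obstacle I anticipate is verifying that $L_0$ and $L^t_0$ genuinely share this eigenbasis and that their eigenvalues add independently across the tensor factors — that is, confirming the clean multiplicativity of the trace. This rests entirely on the preceding Lemma guaranteeing the joint eigenbasis and on the degree-shift formula \eqref{eqn:degreechange}; once those are granted, the remaining argument is the standard factorization of a Heisenberg-module character into a highest-weight sum times a partition-generating product, with the only CKP-specific subtlety being that the two gradings scale the same creation operator by different exponents ($2n-1$ versus $\frac{2n-1}{2}$), which is exactly what produces the mixed factor $q^{2n-1}t^{\frac{2n-1}{2}}$ in the denominator.
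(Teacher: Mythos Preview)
Your proposal is correct and follows essentially the same strategy as the paper: decompose $\mathit{F_{\chi}}$ into irreducible twisted Heisenberg modules indexed by $\mathcal{ODP}$, compute the $2L_0$- and $L^t_0$-eigenvalues on the basis $h^t_{-\frac{2n_s-1}{2}}\cdots h^t_{-\frac{2n_1-1}{2}}v_t$, and sum the resulting geometric series. The only difference is that the paper establishes the key commutator $[L^t_0, h^t_{-\frac{2n-1}{2}}] = \frac{2n-1}{2}\, h^t_{-\frac{2n-1}{2}}$ via an OPE/Wick's theorem computation of $L^{\chi}(z^2) h_\chi^{\mathbb{Z}+1/2}(w)$, whereas you obtain it more directly from the Heisenberg relations $[h^t_m,h^t_n]=-m\delta_{m+n,0}$ applied to the explicit expression $L^t_0 = -\sum_{n>0} h^t_{-n}h^t_n$; your route is shorter and equally valid.
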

\begin{proof}
From the decomposition of Theorem \ref{thm:HeisDecomp},  $\mathit{F_{\chi}}$ is a
a direct sum of irreducible highest weight Heisenberg $\mathcal{H}_{\mathbb{Z}+1/2}$  modules, each isomorphic to
\[
\mathbb{C}[h^t_{-\frac{1}{2}}, h^t_{-\frac{3}{2}}, \dots , h^t_{-\frac{2n-1}{2}}, \dots ] \cdot v_t,
\]
where $v_t$  is a highest weight vector  for  $\mathcal{H}_{\mathbb{Z}+1/2}$.
On any element such as
\[
h^t_{-\frac{2n_s-1}{2}}\dots h^t_{-\frac{2n_2-1}{2}}h^t_{-\frac{2n_1-1}{2}}v_t
\]
here  $n_1, n_2, \dots n_s \in \mathbb{Z}_+$ are not necessarily different, we have
\[
2L_0 h^t_{-\frac{2n_s-1}{2}}\dots h^t_{-\frac{2n_2-1}{2}}h^t_{-\frac{2n_1-1}{2}}v_t = \big( (2n_s -1) +\dots (2n_2 -1) +(2n_1 -1)\big) h^t_{-\frac{2n_s-1}{2}}\dots h^t_{-\frac{2n_2-1}{2}}h^t_{-\frac{2n_1-1}{2}}v_t.
\]
To calculate the action of $L^t_0$ on such an element, we need the following Lemma:
\begin{lem}
The following relations hold:
\begin{equation}
[ L^t_0, h^t_{-\frac{2n-1}{2}} ] = \frac{2n-1}{2}\cdot  h^t_{-\frac{2n-1}{2}}, \quad \forall \ n\in \mathbb{Z},
\end{equation}
\end{lem}
\begin{proof} We can use Wick's Theorem here (the version generalized to twisted vertex algebras, see \cite{ACJ}), and taking into account that the field $ h_\chi^{\mathbb{Z}+1/2} (z)$ really depends only on $z^2$, see Proposition \ref{prop:Heis-chi}, we calculate:
\begin{align*}
L^{\chi} (z^2)h_\chi^{\mathbb{Z}+1/2} (w) & \sim -\frac{1}{2z^2}:h^{\mathbb{Z}+1/2} (z)^2: h_\chi^{\mathbb{Z}+1/2} (w) \sim \frac{1}{z^2}\cdot \frac{z^2 +w^2}{2(z^2 -w^2)^2} h_\chi^{\mathbb{Z}+1/2} (z)\\
& \sim \left(\frac{1}{w^2} -\frac{z^2-w^2}{w^4}+\dots\right) \left(\frac{1}{2(z^2 -w^2)} +\frac{w^2}{(z^2 -w^2)^2}\right)   h_\chi^{\mathbb{Z}+1/2} (z) \\
& \sim \left(-\frac{1}{2w^2(z^2 -w^2)} +\frac{1}{(z^2 -w^2)^2}\right)   h_\chi^{\mathbb{Z}+1/2} (z)\\
& \sim -\frac{1}{2w^2(z^2 -w^2)}  h_\chi^{\mathbb{Z}+1/2} (w) +\frac{1}{(z^2 -w^2)^2}  h_\chi^{\mathbb{Z}+1/2} (w) + \frac{1}{z^2 -w^2}  \partial_{w^2} h_\chi^{\mathbb{Z}+1/2} (w).
\end{align*}
And so collecting the term at $z^{-4}$ we get
\begin{equation*}
[ L^t_0,\  h_\chi^{\mathbb{Z}+1/2} (w)] = -\frac{1}{2} h_\chi^{\mathbb{Z}+1/2} (w) +   h_\chi^{\mathbb{Z}+1/2} (w) + w^2 \partial_{w^2} h_\chi^{\mathbb{Z}+1/2} (w).
\end{equation*}
In terms of modes we have
\[
\sum_{n\in \mathbb{Z}+1/2} [ L^t_0, h^t_n w^{-2n-1}] = \frac{1}{2}\sum_{n\in \mathbb{Z}+1/2}  h^t_n w^{-2n-1} +\frac{-2n-1}{2}\sum_{n\in \mathbb{Z}+1/2}  h^t_n (w^2)^{\frac{-2n-1}{2}};
\]
thus
\[
[ L^t_0, h^t_n ] =-n h^t_n, \quad \text{for\ any}\quad n\in \mathbb{Z}+1/2.
\]
\end{proof}
Now this allows us to go back to the proof of the proposition, as this shows  that
\[
L^t_0 h^t_{-\frac{2n_s-1}{2}}\dots h^t_{-\frac{2n_2-1}{2}}h^t_{-\frac{2n_1-1}{2}}v_t = \frac{1}{2}\big( (2n_s -1) +\dots (2n_2 -1) +(2n_1 -1)\big) h^t_{-\frac{2n_s-1}{2}}\dots h^t_{-\frac{2n_2-1}{2}}h^t_{-\frac{2n_1-1}{2}}v_t.
\]
Thus the graded dimension of the irreducible module $\mathbb{C}[h^t_{-\frac{1}{2}}, h^t_{-\frac{3}{2}}, \dots , h^t_{-\frac{2n-1}{2}}, \dots ] \cdot v_t$ is
\[
dim_{q, t}^{L_0, L^t_0} \mathbb{C}[h^t_{-\frac{1}{2}}, h^t_{-\frac{3}{2}}, \dots , h^t_{-\frac{2n-1}{2}}, \dots ] \cdot v_t =\frac{q^{deg(v_t)}}{\prod_{n\geq 1} (1-q^{2n-1}t^{\frac{2n-1}{2}})}.
\]
Hence the proof is completed by  Theorem \ref{thm:HeisDecomp}, as the highest weight vectors $v_t$ in the decomposition are indexed by $\mathcal{ODP}$.
\end{proof}
\begin{remark}
Setting $t=1$ we obtain the not terribly interesting result that
\begin{equation}
dim_{q}^{L_0} \mathit{F_{\chi}} = \sum_{\mathfrak{p}\in \mathcal{ODP}} \frac{q^{|\mathfrak{p}|}}{\prod_{n\geq 1} (1-q^{2n-1})}.
\end{equation}
This result is not surprising at all, as of course we have
\[
\sum_{\mathfrak{p}\in \mathcal{ODP}} q^{|\mathfrak{p}|} = \prod_{n\geq 1} (1+q^{\frac{2n-1}{2}}),
\]
and thus
\[
dim_{q}^{L_0} \mathit{F_{\chi}} =\sum_{\mathfrak{p}\in \mathcal{ODP}} \frac{q^{|\mathfrak{p}|}}{\prod_{n\geq 1} (1-q^{2n-1})} =\frac{ \prod_{n\geq 1} (1+q^{\frac{2n-1}{2}})}{\prod_{n\geq 1} (1-q^{2n-1})} =\frac{1}{ \prod_{n\geq 1} (1-q^{\frac{2n-1}{2}})},
\]
which we knew  already.
\end{remark}

Since $L^t_0$ doesn't commute with any of the other two grading operators, we now exclude $L^t_0$ and look at
the combinations between the other three grading operators. Ultimately, we want to derive
the tri-parameter graded dimension
\begin{equation}
dim_{q, z, r}^{L_0, h^{\mathbb{Z}}_0, L^h_0} \mathit{F_{\chi}}:= tr_{\mathit{F_{\chi}}}q^{2L_0}z^{h^{\mathbb{Z}}_0}r^{L^h_0}.
\end{equation}
For this graded dimension we will need to introduce some additional partition notation. In particular, we want to make the connection between this character and the birank of a bipartition (two-colored partition), introduced in \cite{HL}, see also \cite{Garvan}.
In fact, there are two important sets of bipartitions we need to consider.
The first set of bipartitions is the set $\mathfrak{BP}_{HI}$, consisting of bipartitions
$(\pi_1 \ | \  \pi_2)$, such that $\pi_1$ and $\pi_2$ are partitions with parts in $\mathbb{Z}_+ +1/2$.

To each   each monomial $\left(\chi _{-j_k}\right)^{m_k}\dots \left(\chi _{-j_2}\right)^{m_2}\left(\chi _{-j_1}\right)^{m_1}|0\rangle$ from the basis  \eqref{eqn:monomials} of $\mathit{F_\chi}$ we can assign a bipartition $(\pi_1 \ | \  \pi_2)\in \mathfrak{BP}_{HI}$ as follows:  the partition $\pi_1$ would consists of the $j_s \in 2\mathbb{Z} -1/2$, and $\pi_2$ would consists of the $j_s \in 2\mathbb{Z} +1/2$.
\begin{example}
To the monomial $\chi_{-\frac{9}{2}}^3\chi_{-\frac{5}{2}}^2\chi_{-\frac{3}{2}}\chi_{-\frac{1}{2}} |0\rangle$ we associate the bipartition  $(\pi_1 \ | \  \pi_2)$ with
\[
 \pi_1 = \left( \frac{3}{2}\right), \quad \pi_2 =\left(\frac{9}{2},\  \frac{9}{2},\  \frac{9}{2}, \ \frac{5}{2}, \ \frac{5}{2},\  \frac{1}{2}\right).
\]
We can also write the partition $\pi_2$ in the notation $\pi_1 =\left(\left(\frac{9}{2}\right)^3, \left(\frac{5}{2}\right)^3, \left(\frac{1}{2}\right)^1\right)$.
\end{example}
\begin{notation}
Denote by $\# \pi$  the number of nonzero parts in the partition $\pi$.   $|(\pi_1 \ | \  \pi_2)|$ will denote the weight of the bipartition $(\pi_1 \ | \  \pi_2)$, i.e.,
$|(\pi_1 \ | \  \pi_2)|:= |\pi_1 | + |\pi_2|$.
\end{notation}
In the example above $\# \pi_1 =1$, $\# \pi_2 =6$, $|(\pi_1 \ | \  \pi_2)| =3\cdot \frac{9}{2} +3\cdot \frac{5}{2} +\frac{1}{2} +\frac{3}{2}$.

\begin{defn} (\cite{HL},  \cite{Garvan})
 The Hammond-Lewis birank of a bipartition $(\pi_1 \ | \  \pi_2)$ is  defined to be
 \begin{equation}
 birank =\# \pi_2 -\# \pi_1.
 \end{equation}
 \end{defn}
 \begin{remark}
 Observe that the Hammond-Lewis birank  of a bipartition $(\pi_1 \ | \  \pi_2)$ corresponding to a monomial in the basis \eqref{eqn:monomials} is precisely the charge of the monomial.
 \end{remark}
\begin{remark}
Originally, the Hammond-Lewis birank was  defined  for bipartitions with integer parts, with a (immaterial) minus sign difference . We extended the definition as above because of the following consideration:
 In \cite{AngMLB} we proved that  the field  $\chi(z)$,  and its descendant field $\chi(-z)$, generate a twisted vertex algebra on the space $\mathit{F_\chi}$, which is twisted-vertex-algebra-isomorphic to the $\beta-\gamma$ system and its Fock space (i.e., with singularities both at $z=w$ and $z=-w$ formally allowed). If we consider then the isomorphic $\beta-\gamma$ system,  the monomials corresponding to the basis \eqref{eqn:monomials} are indexed there with  an integer indexing set. And so if we use the image of the  untwisted Heisenberg field  in the  $\beta-\gamma$ system, and  the image of the monomial  basis \eqref{eqn:monomials}, we would get a usual type of  bipartition assigned to each monomial-- a bipartition with integer parts, and the charge of the monomial will correspond exactly to the originally defined Hammond-Lewis birank as in \cite{HL}, \cite{Garvan}.
\end{remark}
Consider now the subspace of the highest weight vectors $\mathit{F^{hwv}_\chi}$ of $\mathit{F_\chi}$.  We know that it has a basis indexed by $\mathfrak{P}_{tdo}$, where  $\mathfrak{P}_{tdo}$ denoted the set  of distinct partitions with the first part a triangular number, and the other parts being distinct half-integers from $\mathbb{Z}+1/2$. Namely, to each partition from $\mathfrak{P}_{tdo}$ we can assign a highest weight vector, due to the Heisenberg decomposition Theorem \ref{thm:HeisDecomp}. But the basis of $\mathit{F^{hwv}_\chi}$ can be indexed also in another way:

\begin{defn} Denote by $\mathfrak{BP}_{DI}$ the set of distinct integer bipartitions, namely bipartitions  $(\pi_1 \ | \  \pi_2)$ such that each of  $\pi_1$ and $\pi_2$ is a partition with distinct integer parts:
$\pi_1 =(m_k, \dots, m_2, m_1), \ \pi_2 = (n_s, \dots , n_2, n_1)$, where $m_k>\dots m_2 >m_1;  \ n_s>\dots n_2>n_1$, \   $m_i, n_j \in \mathbb{Z}_{>0}$, for any  \ $i=1, 2, \dots, k; j=1, 2, \dots s$.
Note that the same integer is allowed to occur in both $\pi_1$ and $\pi_2$.
\end{defn}
According to Corollary \ref{cor:FDecomp}, since the highest weight vectors can be considered to be elements of $\mathit{SF}$, we can assign a highest weight vector to each distinct integer bipartition $(\pi_1 \ | \  \pi_2)\in \mathfrak{BP}_{DI}$, as follows: if
\[
(\pi_1 \ | \  \pi_2) =\big( (m_k, \dots, m_2, m_1) \ | \ (n_s, \dots , n_2, n_1)\big),
\]
where $m_k>\dots m_2 >m_1;  \ n_s>\dots n_2>n_1; \ m_i, n_j \in \mathbb{Z}_{>0}, \ i=1, 2, \dots, k; j=1, 2, \dots s$,
then we assign to  $(\pi_1 \ | \  \pi_2)$ the highest weight vector
\begin{align*}
H^{\beta}_{(-m_k)}\dots H^{\beta}_{(-m_2)}&H^{\beta}_{(-m_1)}H^{\gamma}_{(-n_s)}\dots H^{\gamma}_{(-n_2)}H^{\gamma}_{(-n_1)}|0\rangle .
\end{align*}
\begin{prop}\label{prop:counting-formula}
There is a one-to-one correspondence between the set $\mathfrak{P}_{tdo}$ and the set of the distinct integer bipartitions $\mathfrak{BP}_{DI}$, via the basis of $\mathit{F^{hwv}_\chi}$:
\[
\mathfrak{P}_{tdo}   \xleftrightarrow[\ \ \ \ \ ]{\sim} \quad \text{basis \ of}\  \mathit{F^{hwv}_\chi} \quad  \xleftrightarrow[\ \ \ \ \ ]{\sim} \quad  \mathfrak{BP}_{DI}.
\]
Further,
the Hammond-Lewis birank  of a distinct integer bipartition $(\pi_1 \ | \  \pi_2)\in \mathfrak{BP}_{DI}$ is precisely the charge of the highest weight vector corresponding uniquely to the bipartition $(\pi_1 \ | \  \pi_2)\in \mathfrak{BP}_{DI}$.

This one-to-one correspondence is not an isometry, instead  for
$\mathfrak{P}_{tdo} \ni \mathfrak{p}   \longleftrightarrow (\pi_1 \ | \  \pi_2)\in \mathfrak{BP}_{DI}$, the  weights $|\mathfrak{p}|$ and $|(\pi_1 \ | \  \pi_2)|$ are connected via the following function $\mathcal{W}:\mathfrak{BP}_{DI}\to \mathbb{Z}_{\geq 0}+\frac{1}{2}$:
\begin{equation}\label{eqn:weightformula}
|\mathfrak{p}|  = \mathcal{W}\left((\pi_1 \ | \  \pi_2)\right) : = 2|(\pi_1 \ | \  \pi_2)| + 2ns - \frac{n(2n-1)}{2} - \frac{s(2s+1)}{2},
\end{equation}
Here $n= \# \pi_1$,   the number of nonzero parts in the partition $\pi_1$,  $s= \# \pi_2$,   the number of nonzero parts in the partition $\pi_2$,  $T_n$ (correspondingly $T_s$) is the $n$-th (correspondingly $s$-th) triangular number.
\end{prop}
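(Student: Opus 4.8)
I would establish the three assertions of the proposition in turn, the first two being essentially formal and the third carrying all the work.

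For the bijection $\mathfrak{P}_{tdo}\leftrightarrow \text{basis}(\mathit{F^{hwv}_\chi})\leftrightarrow \mathfrak{BP}_{DI}$, I would observe that both index sets describe one and the same basis of $\mathit{F^{hwv}_\chi}$. Theorem \ref{thm:HeisDecomp}(II) attaches to every $\mathfrak{p}\in\mathfrak{P}_{tdo}$ a highest weight vector, and these exhaust a basis; Corollary \ref{cor:FDecomp}(I) presents the same space as $\mathit{SF}$ with the monomial basis $H^{\beta}_{(m_k)}\cdots H^{\gamma}_{(n_1)}|0\rangle$, whose strictly decreasing negative index sequences are exactly the data of a distinct integer bipartition. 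The correspondence is then the composite of these two labelings, and there is nothing more to prove. For birank $=$ charge, I would use that the charge operator $h^{\mathbb{Z}}_0$ is a derivation of the super vertex algebra structure of Theorem \ref{thm:symplVA}(I), so that the modes of a charge-$q$ state shift charge by $q$. Since $H^{\gamma}=Y(\chi_{-1/2}|0\rangle,z)$ and $H^{\beta}=Y(\chi_{-3/2}|0\rangle,z)$ with $chg(\chi_{-1/2}|0\rangle)=+1$ and $chg(\chi_{-3/2}|0\rangle)=-1$, I get $[h^{\mathbb{Z}}_0,H^{\gamma}_{(m)}]=H^{\gamma}_{(m)}$ and $[h^{\mathbb{Z}}_0,H^{\beta}_{(m)}]=-H^{\beta}_{(m)}$. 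Hence the vector attached to $(\pi_1\mid\pi_2)$, built from $s=\#\pi_2$ factors $H^{\gamma}$ and $n=\#\pi_1$ factors $H^{\beta}$, has charge $s-n=\#\pi_2-\#\pi_1=\text{birank}$.

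The weight formula \eqref{eqn:weightformula} is the substantial point, and the key input is the commutation of the degree operator with the generating modes $H^{\gamma}_{(m)},H^{\beta}_{(m)}$. I would first record that $\gamma_\chi(z^2)$ and $\beta_\chi(z^2)$ are primary for $L^{-\frac14}(z^2)$ (the $\lambda=-\frac14$ member of \eqref{eqn:Vir-lambda}) of conformal weights $\frac14$ and $\frac34$, in agreement with $2L_0(\chi_{-1/2}|0\rangle)=\frac12$ and $2L_0(\chi_{-3/2}|0\rangle)=\frac32$. Then I would conjugate $L_0$ through the dressing \eqref{eqn:Hbeta-Hgamma}, using $[L_0,h^{\mathbb{Z}}_{\pm n}]=\mp n\,h^{\mathbb{Z}}_{\pm n}$, $[L_0,h^{\mathbb{Z}}_0]=0$, together with $[h^{\mathbb{Z}}_0,\gamma_\chi]=\gamma_\chi$ and $[h^{\mathbb{Z}}_0,\beta_\chi]=-\beta_\chi$. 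The $V^{\pm}$ and $z^{\pm 2h^{\mathbb{Z}}_0}$ factors turn the primarity of $\gamma_\chi,\beta_\chi$ into an anomalous, current-dependent transformation of $H^{\gamma},H^{\beta}$; explicitly I expect to obtain
\[
[L_0,H^{\gamma}_{(m)}]=(\tfrac14-m)H^{\gamma}_{(m)}-h^{\mathbb{Z}}_0H^{\gamma}_{(m)},\qquad [L_0,H^{\beta}_{(m)}]=(\tfrac34-m)H^{\beta}_{(m)}+h^{\mathbb{Z}}_0H^{\beta}_{(m)}.
\]
The extra terms $\mp h^{\mathbb{Z}}_0 H^{\gamma/\beta}_{(m)}$ are precisely the failure of $L_0$ to act uniformly on the symplectic fermion side. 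This computation, tracking $L_0$ through the exponential dressing, is the main obstacle; everything afterwards is bookkeeping.

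Given the commutators, the proof finishes by induction on word length. On a homogeneous vector of charge $c$ the relations above show that applying $H^{\gamma}_{(-\ell)}$ raises the degree $2L_0$ by $2\ell-2c-\tfrac32$ and the charge by $1$, while applying $H^{\beta}_{(-\ell)}$ raises the degree by $2\ell+2c-\tfrac12$ and lowers the charge by $1$; in particular each $\mathit{SF}$ monomial built from $|0\rangle$ is $2L_0$-homogeneous, so the degree of the vector attached to $(\pi_1\mid\pi_2)$ is well defined. Applying the $s$ factors $H^{\gamma}$ first (charge running $0\to s$) and then the $n$ factors $H^{\beta}$ (charge running $s\to s-n$), and summing the resulting arithmetic (triangular) series, gives
\[
|\mathfrak{p}|=\Big(2|\pi_2|-s^2-\tfrac{s}{2}\Big)+\Big(2|\pi_1|+2ns-n^2+\tfrac{n}{2}\Big)=2|(\pi_1\mid\pi_2)|+2ns-\tfrac{n(2n-1)}{2}-\tfrac{s(2s+1)}{2},
\]
which is exactly $\mathcal{W}((\pi_1\mid\pi_2))$. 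As a consistency check I would note that this correction collapses to $-c^2-\tfrac{c}{2}$ with $c=s-n$ the charge; equivalently it equals $2L^h_0-\tfrac12 h^{\mathbb{Z}}_0$ evaluated on $\mathit{F^{hwv}_\chi}$ (since by \eqref{eqn:VirGrading-h} the operator $L^h_0$ acts by $-\tfrac12(h^{\mathbb{Z}}_0)^2$ on highest weight vectors), which independently confirms that the non-isometry defect depends only on the charge.
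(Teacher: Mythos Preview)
Your proof is correct, and in fact cleaner than the paper's own argument, though the two reach the same formula by quite different routes.

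For the bijection and the birank $=$ charge claim you do exactly what the paper does: invoke Theorem~\ref{thm:HeisDecomp}(II) and Corollary~\ref{cor:FDecomp}(I), and use the charge-shifting commutators of $h^{\mathbb{Z}}_0$ with the $H^{\gamma},H^{\beta}$ modes. There is no difference here.

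For the weight formula the approaches diverge. The paper establishes Lemma~\ref{lem:degree-calculation} by direct computation on a sequence of representative examples: it expands $H^{\beta}(z^2)$ and $H^{\gamma}(z^2)$ acting on $|0\rangle$, $\chi_{-3/2}|0\rangle$, $\chi_{-1/2}|0\rangle$, $\chi_{-1/2}^2|0\rangle$, etc., reads off the degree increments term by term, and abstracts two heuristic principles (``resetting the counter'' and ``skipping steps'') which it then uses to write down \eqref{eqn:degree3}. Your argument replaces this case analysis by a single algebraic fact --- the anomalous commutators $[L_0,H^{\gamma}_{(m)}]=(\tfrac14-m-h^{\mathbb{Z}}_0)H^{\gamma}_{(m)}$ and $[L_0,H^{\beta}_{(m)}]=(\tfrac34-m+h^{\mathbb{Z}}_0)H^{\beta}_{(m)}$ --- and then sums a pair of arithmetic series. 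This is more systematic: the charge-dependent shift that the paper discovers empirically is encoded once in the $\pm h^{\mathbb{Z}}_0$ term, and your observation that the defect $\mathcal{W}-2|(\pi_1\mid\pi_2)|=-c^2-\tfrac{c}{2}$ depends only on the charge $c=s-n$ is a genuine simplification not visible in the paper's formulation. The trade-off is that you have only sketched the derivation of the commutators (``I expect to obtain''); conjugating $L_0$ through the dressing \eqref{eqn:Hbeta-Hgamma} is indeed routine but not entirely trivial, and your proof stands or falls on that computation. The paper's example-based approach avoids this step at the cost of being less transparent about why the degree increments follow the pattern they do.
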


The first part of the proposition follows directly as a corollary to Theorem \ref{thm:HeisDecomp} and Corollary \ref{cor:FDecomp}.
Formula \eqref{eqn:weightformula}  will hinge on the following Lemma that derives the degree of the highest weight vector corresponding to a given  distinct integer bipartition.
\begin{lem}\label{lem:degree-calculation}
 I. Let  $v\in \mathit{F^{hwv}_\chi}$ corresponds to a bipartition $(  \pi_1 \ | \   \emptyset )$, where  $\pi_2$ is the empty partition, and $\pi_1 =(m_n, \dots, m_2, m_1)$,  $m_n>\dots m_2 >m_1$, \   $m_i \in \mathbb{Z}_{>0}$, for any  \ $i=1, 2, \dots, n$. Then
\begin{equation}\label{eqn:degree1}
deg(v) = \frac{3n}{2} + 2|\pi_1|  -2T_n = 2|\pi_1| - \frac{n(2n-1)}{2}.
\end{equation}
Here $n= \# \pi_1$,   the number of nonzero parts in the partition $\pi_1$,   $T_n$ is the $n$-th triangular number.
 \\
II. Let  $v\in \mathit{F^{hwv}_\chi}$ corresponds to a bipartition $(\emptyset \ | \  \pi_2   )$, where  $\pi_1$ is the empty partition, and $\pi_2 =(m_n, \dots, m_2, m_1)$,  $m_n>\dots m_2 >m_1$, \   $m_i \in \mathbb{Z}_{>0}$, for any  \ $i=1, 2, \dots, n$. Then
\begin{equation}\label{eqn:degree2}
deg(v) = \frac{n}{2} + 2|\pi_1|  -2T_n = 2|\pi_1| - \frac{n(2n+1)}{2}.
\end{equation}
Here $n= \# \pi_2$,  $T_n$ is the $n$-th triangular number.\\
III. Let  $v\in \mathit{F^{hwv}_\chi}$ corresponds to a bipartition $(  \pi_1 \ | \  \pi_2 )$, where $\pi_1 =(m_n, \dots, m_2, m_1)$,  $m_n>\dots m_2 >m_1$, \   $m_i \in \mathbb{Z}_{>0}$, for any  \ $i=1, 2, \dots, n$; and   $\pi_2 =(l_s, \dots, l_2, l_1)$,  $l_s>\dots l_2 >l_1$, \   $l_i \in \mathbb{Z}_{>0}$, for any  \ $i=1, 2, \dots, s$. Then
\begin{equation}\label{eqn:degree3}
deg(v) = \frac{3n}{2} + 2|\pi_1|  -2T_n +2ns +  \frac{s}{2} + 2|\pi_2|  -2T_s= 2|\pi_1| + 2|\pi_2| + 2ns - \frac{n(2n-1)}{2} - \frac{s(2s+1)}{2},
\end{equation}
Here $n= \# \pi_1$,   the number of nonzero parts in the partition $\pi_1$,  $s= \# \pi_2$,   the number of nonzero parts in the partition $\pi_2$,  $T_n$ (correspondingly $T_s$) is the $n$-th (correspondingly $s$-th) triangular number.
\end{lem}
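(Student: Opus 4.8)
The organizing principle of the proof is that, although the degree operator $L_0$ does not act as a conformal weight on the symplectic fermion generators $H^{\beta}$ and $H^{\gamma}$, it differs from the natural conformal grading $L^{hwv}_0$ of $\mathit{SF}$ (the zero mode of the symplectic-fermion Virasoro field, for which the OPEs of Theorem \ref{thm:symplVA} make both $H^{\beta}(z)$ and $H^{\gamma}(z)$ primary of weight $1$) only by a correction built from the charge operator $h^{\mathbb{Z}}_0$. Concretely, I would first isolate the operator identity
\[
2L_0 = 2L^{hwv}_0 - (h^{\mathbb{Z}}_0)^2 - \tfrac12\, h^{\mathbb{Z}}_0 \qquad \text{on } \mathit{F^{hwv}_\chi}.
\]
Granting this, the lemma reduces to two bookkeeping facts immediate from Corollary \ref{cor:FDecomp}: the highest weight vector attached to $(\pi_1\mid\pi_2)$ has $L^{hwv}_0$-eigenvalue $|\pi_1|+|\pi_2|$ (each creation mode $H^{\beta}_{(-m)}$ or $H^{\gamma}_{(-m)}$ raises the weight-$1$ grading by $m$), and its charge is $\#\pi_2-\#\pi_1$ (each $H^{\beta}_{(-m)}$ lowers $h^{\mathbb{Z}}_0$ by $1$, each $H^{\gamma}_{(-m)}$ raises it by $1$, as follows from $Y(\chi_{-3/2}|0\rangle,z)=H^\beta(z)$ and $Y(\chi_{-1/2}|0\rangle,z)=H^\gamma(z)$).

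To establish the identity I would compute the commutators of $2L_0$ with the creation modes directly. Because $H^{\beta}_{(-m)}$ shifts charge by $-1$ and $H^{\gamma}_{(-m)}$ by $+1$, the quadratic term $(h^{\mathbb{Z}}_0)^2$ contributes a charge-dependent commutator, and one arrives at the two shift rules: for a homogeneous $u\in\mathit{F^{hwv}_\chi}$ of degree $d$ and charge $c$,
\[
deg\big(H^{\beta}_{(-m)}u\big) = d + 2m + 2c - \tfrac12, \qquad deg\big(H^{\gamma}_{(-m)}u\big) = d + 2m - 2c - \tfrac32.
\]
These shift rules are the analytic heart of the argument: they encode precisely the failure of $H^{\beta}$ and $H^{\gamma}$ to be $L_0$-primary. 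They can be obtained either from the operator identity above, or self-containedly from the OPE of the Virasoro field $L^{-\frac14}(z^2)$ of \eqref{eqn:Vir2} with the dressed fields $H^{\beta}(w^2)$ and $H^{\gamma}(w^2)$ of \eqref{eqn:Hbeta-Hgamma}, using the $L^{-\frac14}$-OPE of the neutral field together with the fact that the Heisenberg dressing $V^{\pm}$ and the factor $z^{\pm 2h^{\mathbb{Z}}_0}$ supply the charge term.

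With the shift rules in hand, Parts I--III follow by induction on the number of creation modes, building $v$ from the vacuum by applying the rightmost mode first. For Part II I apply $H^{\gamma}_{(-m_1)},H^{\gamma}_{(-m_2)},\dots$ with $m_1<m_2<\cdots$; before the $j$-th step the charge equals $j-1$, so that step raises the degree by $2m_j - 2(j-1) - \tfrac32 = 2m_j - (2j-\tfrac12)$, and summing gives $2|\pi_2| - (2T_n - \tfrac{n}{2}) = 2|\pi_2| - 2T_n + \tfrac{n}{2}$, which is \eqref{eqn:degree2}. Part I is identical with $H^{\beta}$ in place of $H^{\gamma}$: the charge before the $i$-th step is $-(i-1)$, the increment is $2m_i - (2i-\tfrac32)$, and summation yields \eqref{eqn:degree1}. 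For Part III I apply all $s$ of the $H^{\gamma}$ modes first, recovering the Part II degree and leaving charge $s$, and then the $n$ of the $H^{\beta}$ modes; since the charge is now shifted up by $s$ throughout the $\beta$-stage, each $\beta$-step carries an extra $+2s$, so the $n$ steps contribute the cross term $2ns$ on top of the Part I total, giving exactly \eqref{eqn:degree3}.

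The main obstacle is the second paragraph, namely pinning down the charge-dependent correction $-(h^{\mathbb{Z}}_0)^2 - \tfrac12 h^{\mathbb{Z}}_0$ with the correct coefficients; everything afterward is summation of arithmetic progressions. The delicate point is tracking the contribution of the dressing $V^{\pm}(z)$ and of $z^{\pm 2h^{\mathbb{Z}}_0}$ in \eqref{eqn:Hbeta-Hgamma} when commuting $L_0$ past $H^{\beta}(w^2)$ and $H^{\gamma}(w^2)$. A convenient independent check is that both shift rules must reproduce the minimal cases $deg\big(\chi_{-3/2}|0\rangle\big)=deg\big(H^{\beta}_{(-1)}|0\rangle\big)=\tfrac32$ and $deg\big(\chi_{-1/2}|0\rangle\big)=deg\big(H^{\gamma}_{(-1)}|0\rangle\big)=\tfrac12$, as well as the degrees $3$ of the minimal-degree vectors $\chi_{-3/2}^2|0\rangle$ and $1$ of $\chi_{-1/2}^2|0\rangle$.
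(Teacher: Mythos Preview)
Your approach is correct and is a genuinely cleaner repackaging of what the paper does. The paper's own proof proceeds by working through representative examples of $H^{\beta}(z^2)$ and $H^{\gamma}(z^2)$ acting on $|0\rangle$, $\chi_{-3/2}|0\rangle$, $\chi_{-1/2}|0\rangle$, $\chi_{-1/2}^2|0\rangle$, etc., reading off the degree shifts term by term from the expansion of \eqref{eqn:Hbeta-Hgamma}, and then abstracting two heuristic principles it calls ``resetting of the counter'' (the degree increment of $H^{\beta}_{(-k)}$ depends on how many $\beta$-modes have already acted) and ``skipping $s$ steps'' (when acting on a vector of charge $s$, the increment is shifted by $2s$). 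These are exactly the two effects encoded in your single pair of shift rules
\[
deg\big(H^{\beta}_{(-m)}u\big) = d + 2m + 2c - \tfrac12, \qquad deg\big(H^{\gamma}_{(-m)}u\big) = d + 2m - 2c - \tfrac32,
\]
and your operator identity $2L_0 = 2L^{hwv}_0 - (h^{\mathbb{Z}}_0)^2 - \tfrac12 h^{\mathbb{Z}}_0$ on $\mathit{F^{hwv}_\chi}$ is the structural statement underlying both principles at once. What you gain is a transparent, uniform induction in place of pattern-matching from examples; what the paper's example-driven argument gains is that it never has to isolate and justify the operator identity as a separate step, since the needed increments are read off directly from the series expansions. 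The one place where your write-up is not yet self-contained is the derivation of the shift rules themselves (equivalently, of the operator identity); you correctly flag this as the delicate point, and the sanity checks you list do pin down the coefficients, but in a final version you would want to actually carry out the OPE computation of $[L_0, H^{\beta}(w^2)]$ and $[L_0, H^{\gamma}(w^2)]$ from \eqref{eqn:Vir2} and \eqref{eqn:Hbeta-Hgamma}.
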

The cases I and II above are actually special cases of III (when $s$ =0, or correspondingly $n$=0), but we chose to state the formulas on their own for clarity.
\begin{proof}
We will look  at  three types of  representative examples which will illustrate  two important principles when counting the degrees,  without more of the already excessive indexing.
We start with the basic example of action on the vacuum vector $|0\rangle$. By using directly the  formulas  \eqref{eqn:Hbeta-Hgamma}  for the fields   $H^{\beta}(z^2)$ and $H^{\gamma}(z^2)$ we can calculate
\begin{align*}
H^{\beta} (z^2) |0\rangle &= H^{\beta}_{(-1)}|0\rangle +H^{\beta}_{(-2)}|0\rangle z^2 +\dots +H^{\beta}_{(-k)}|0\rangle z^{2k-2} +\dots \\
& =  V^+(z)^{-1}\beta_\chi(z^2) |0\rangle =\big(1-h_{-1}^{\mathbb{Z}} z^2 +\dots\big)\big(\chi _{-3/2}|0\rangle + \chi _{-7/2}|0\rangle z^2 +\dots \big)\\
&=\chi _{-3/2}|0\rangle +\big(2\chi _{-7/2}|0\rangle -\chi _{-3/2}^2\chi _{-1/2}|0\rangle\big) z^2 +\dots
\end{align*}
Thus we see that acting on $|0\rangle$,  $H^{\beta}_{(-1)}$ raises the degree by $\frac{3}{2}$,  $H^{\beta}_{(-2)}$ raises the degree by $\frac{7}{2}$, and in general, acting on $|0\rangle$  $H^{\beta}_{(-k)}$ raises the degree by $2k-\frac{1}{2} $. We have
\[
deg\big(H^{\beta}_{(-1)}|0\rangle\big) =\frac{3}{2}, \quad deg\big(H^{\beta}_{(-2)}|0\rangle\big) =\frac{7}{2}=\frac{3}{2} +2, \quad \dots, \quad  deg\big(H^{\beta}_{(-k)}|0\rangle\big) =2k-\frac{1}{2}=\frac{3}{2} +2(k-1).
\]
Similarly, acting on $|0\rangle$,   $H^{\gamma}_{(-n)}$ raises the degree by $2n-\frac{3}{2}$.

But, on elements of other charges this is not  the case, and so next we consider the example of charge -1:
\begin{align*}
H^{\beta} (z^2)\big(\chi _{-3/2} |0\rangle\big) &= H^{\beta} (z^2)\big(H^{\beta}_{(-1)} |0\rangle\big) \\
&=H^{\beta}_{(-2)}\big(H^{\beta}_{(-1)} |0\rangle\big) z^2 +H^{\beta}_{(-3)}\big(H^{\beta}_{(-1)} |0\rangle\big) z^4 +\dots +H^{\beta}_{(-k)}H^{\beta}_{(-1)}|0\rangle z^{2k-2} +\dots \\
&=\chi _{-3/2}^2 |0\rangle z^2 + \big(\chi _{-7/2}\chi _{-3/2}|0\rangle -h^{\mathbb{Z}}_{-1}\chi _{-3/2}^2|0\rangle\big) z^4 +\dots.
\end{align*}
So we see that acting on $\chi _{-3/2} |0\rangle =H^{\beta}_{(-1)}|0\rangle$,   $H^{\beta}_{(-1)}$ acts as 0 (as  it is fermionic),  but instead $H^{\beta}_{(-2)}$ raises the degree by $\frac{3}{2}$ (and not as before by $\frac{7}{2}$), and in general, acting on $\chi _{-3/2}|0\rangle$,   $H^{\beta}_{(-k)}, \  k\geq 2, $ raises the degree by $2k-\frac{5}{2}$. Thus we have a "resetting of the counter" principle, where each  $H^{\beta}_{(-k)}$ adds a  degree contribution depending on the charge of the vector it is acting on (or equivalently, the number of nonzero partition parts in the bipartition describing that highest weight vector). Specifically, we have
\begin{align*}
&deg\big(H^{\beta}_{(-2)}H^{\beta}_{(-1)}|0\rangle\big) =\frac{2\cdot 3}{2}, \quad deg\big(H^{\beta}_{(-3)}H^{\beta}_{(-1)}|0\rangle\big) =\frac{7}{2} +\frac{3}{2} =\frac{2\cdot 3}{2} +2,\\
& deg\big(H^{\beta}_{(-k)}H^{\beta}_{(-1)}|0\rangle\big) =\frac{3}{2} + 2k-\frac{5}{2}=\frac{2\cdot 3}{2} +2(k-2).
\end{align*}
"Resetting of the counter" refers to the fact that  when we considering adding new contributions   to the degree, we always start at the lowest possible addition (in the example above,  we started at $\frac{3}{2}$, even though it was  $H^{\beta}_{(-2)}$ that acted).

At each fixed charge $-n<0$ (here $n>0$),  the vector $\chi _{-3/2}^n |0\rangle$ is the vector with minimal degree ,  $\frac{3n}{2}$,  with that charge $-n$. This vector is given by
 \begin{equation}
\chi_{-\frac{3}{2}}^n|0\rangle = H^{\beta}_{(-n)}\dots H^{\beta}_{(-2)} H^{\beta}_{(-1)}|0\rangle,
\end{equation}
and  corresponds to the standard distinct partition with $n$ parts $(n, n-1, n-2, \dots 1)$. Acting on $\chi _{-3/2}^n |0\rangle$, the first nonzero possible action is by $H^{\beta}_{(-n-1)}$, and that will add  $\frac{3}{2}$ to the degree, by the "resetting of the counter" principle.

Let us consider the highest weight vector
\[
v = H^{\beta}_{(-m_n)}\dots H^{\beta}_{(-m_2)}\dots H^{\beta}_{(-m_1)}|0\rangle,
\]
corresponding to the bipartition $(  \emptyset \ | \  \pi_2 )$, where  $\pi_1$ is the empty partition, and $\pi_2 =(m_k, \dots, m_2, m_1)$.  The same  "resetting of the counter" principle as above applies. And so to calculate its degree we will have to see "how much" this vector differs from the vector with minimal degree of that same charge  (the charge of that vector equals  $-n$, with $n$ the number of nonzero parts in the partition $\pi_2$). Thus we see that the degree  is
\[
deg(v) = \frac{3n}{2} + 2(m_n -n) + \dots + 2(m_2-2) + 2(m_1-1),
\]
which gives  the formula \eqref{eqn:degree1}. Formula \eqref{eqn:degree2} is proved the same way.

To  prove the formula \eqref{eqn:degree3}, we need to discuss the second important principle when counting the degrees of the highest weight vectors. Consider the following example, it is
  the easiest example of a "mixture" between the $\beta$ and $\gamma$ fields, i.e., neither the $\pi_1$, nor the $\pi_2$ are the empty partitions:
\begin{align*}
H^{\beta} (z^2)\big(\chi _{-1/2} |0\rangle\big)
&=H^{\beta} (z^2)H^{\gamma}_{(-1)} |0\rangle\big) \\
&= H^{\beta}_{(1)}H^{\gamma}_{(-1)} |0\rangle\cdot \frac{1}{z^4} +H^{\beta}_{(0)}H^{\gamma}_{(-1)} |0\rangle\cdot \frac{1}{z^2} +H^{\beta}_{(-1)}H^{\gamma}_{(-1)} |0\rangle  +\dots +H^{\beta}_{(-k)}H^{\gamma}_{(-1)} |0\rangle z^{2k-2} +\dots \\
&=\frac{|0\rangle}{z^4} +0\cdot \frac{1}{z^2} + v_{4;  0}  + v_{6;  0}z^2 + \dots  + v_{2k+4;  0}z^{2k} +\dots ,
\end{align*}
By direct comparison with the formula \eqref{eqn:Hbeta-Hgamma}:
\begin{align*}
H^{\beta} (z^2)\big(\chi _{-1/2} |0\rangle\big) &=  V^+(z)^{-1}\beta_\chi(z^2) \cdot \frac{1}{z^2} \big(\chi _{-1/2} |0\rangle\big) \\
& =  \frac{1}{z^2} \big(1-h_{-1}^{\mathbb{Z}} z^2 +\dots\big)\Big(\frac{ |0\rangle}{z^2} +\chi _{-3/2}\chi _{-1/2} |0\rangle + \chi _{-7/2}\chi _{-1/2}|0\rangle z^2 +\dots \Big)\\
&= \frac{|0\rangle}{z^4} + \big(-h_{-1}^{\mathbb{Z}}|0\rangle +\chi _{-3/2}\chi _{-1/2} |0\rangle\big)\cdot \frac{1}{z^2} +\\
&\hspace{2cm} + \Big(\frac{1}{2}\big((h_{-1}^{\mathbb{Z}})^2 -h_{-2}^{\mathbb{Z}}\big)|0\rangle -h_{-1}^{\mathbb{Z}}\chi _{-3/2}\chi _{-1/2} |0\rangle  + \chi _{-7/2}\chi _{-1/2}|0\rangle\Big) +\dots,
\end{align*}
 we see that  the vector $v_{4;  0}$ is of charge 0  and degree 4. One can calculate that it  is given by
\[
v_{4;  0} =\chi_{-\frac{3}{2}}^2 \chi_{-\frac{1}{2}}^2|0\rangle  -2\chi_{-\frac{7}{2}}\chi_{-\frac{1}{2}} |0\rangle +2\chi_{-\frac{5}{2}}\chi_{-\frac{5}{2}}|0\rangle.
\]
More generally, in the above formula $v_{2k+ 4;  0}$ is of charge 0 and degree $2k+ 4$. Thus we see that action of $H^{\beta}_{(1)}$ on $H^{\gamma}_{(-1)} |0\rangle =\chi _{-1/2} |0\rangle$ lowers the degree by $\frac{1}{2}$, $H^{\beta}_{0}$ acts as 0, $H^{\beta}_{(-1)}$  raises the degree by $\frac{7}{2}$,  $H^{\beta}_{(-2)}$ raises the degree by $\frac{11}{2}$, and in general, acting on $\chi _{-1/2} |0\rangle$  $H^{\beta}_{(-k)}$ raises the degree by $2k+\frac{3}{2}$. Now if the partition $\pi_2$ was empty, the  $H^{\beta}_{(-1)}$  would have raised the degree by $\frac{3}{2}$, but instead, it now raised the degree by $\frac{7}{2}$. This is due to the  positively indexed $H^{\beta}_{(1)}$ acting nontrivially, and thus the presence of the  $H^{\beta}_{0}$, which always acts as 0, but nominally  should have raised the degree by $\frac{3}{2}$. That forces the $H^{\beta}_{(-1)}$  to "skip"  $\frac{3}{2}$, but instead go to the next available  $\frac{7}{2} =\frac{3}{2} +2\cdot 1$.

Similarly, consider one more example of the same kind:
\begin{align*}
H^{\beta} (z^2&)\big(\chi _{-1/2}^2 |0\rangle\big)
= H^{\beta} (z^2)\big(H^{\gamma}_{(-2)}H^{\gamma}_{(-1)} |0\rangle\big) \\
&= H^{\beta}_{(2)}\big(H^{\gamma}_{(-2)}H^{\gamma}_{(-1)} |0\rangle\big)\frac{1}{z^6} +H^{\beta}_{(1)}\big(H^{\gamma}_{(-2)}H^{\gamma}_{(-1)} |0\rangle\big)\frac{1}{z^4}  +H^{\beta}_{(0)}\big(\chi _{-1/2}^2 |0\rangle\big)\frac{1}{z^2} +H^{\beta}_{(-1)}\big(\chi _{-1/2}^2 |0\rangle\big)  +\dots  \\
&=\frac{H^{\gamma}_{(-1)} |0\rangle}{z^6} + \frac{-H^{\gamma}_{(-2)} |0\rangle}{z^4} +0\cdot \frac{1}{z^2} + v_{\frac{13}{2};  0}  + v_{\frac{17}{2};  0}z^2 + \dots  +  \\
&= \frac{\chi _{-1/2}|0\rangle}{z^6} - \frac{2\chi _{-5/2}|0\rangle +\chi _{-3/2}\chi _{-1/2}^2|0\rangle}{z^4} + 0\cdot \frac{1}{z^2}  +  v_{\frac{13}{2};  0} + \dots
\end{align*}
Hence action of $H^{\beta}_{(2)}$ on $\chi _{-1/2}^2 |0\rangle$ lowers the degree by $\frac{1}{2}$,  $H^{\beta}_{(1)}$ raises the degree by $\frac{3}{2}$,  $H^{\beta}_{0}$ acts as 0 (but nominally  should have raised the degree by $\frac{7}{2}$), $H^{\beta}_{(-1)}$  raises the degree by $\frac{11}{2}$,  $H^{\beta}_{(-2)}$ raises the degree by $\frac{15}{2}$, and in general, acting on $\chi _{-1/2}^2 |0\rangle$,  $H^{\beta}_{(-k)}$ raises the degree by $2k+\frac{7}{2}$. Hence, we again see that $H^{\beta}_{(-1)}$  which would have raised the degree by $\frac{3}{2}$,   now raised the degree instead by $\frac{11}{2}$. This is due to the fact that since the charge of $\chi _{-1/2}^2 |0\rangle$ is 2,  $H^{\beta}_{(-1)}$ now has to "skip" two steps, from  $\frac{3}{2}$ to   $\frac{11}{2}=\frac{3}{2} +2\cdot2$.
One can easily see that the "skipping $s$ steps" principle will hold when $H^{\beta}_{(-1)}$ acts on any vector of charge $s$.

Consider then the bipartition $(\pi_1 \ | \  \pi_2)\in \mathfrak{BP}_{DI}$
\[
(\pi_1 \ | \  \pi_2) =\big( (m_k, \dots, m_2, m_1) \ | \ (l_s, \dots , l_2, l_1)\big),
\]
where $m_k>\dots m_2 >m_1;  \ l_s>\dots l_2>l_1; \ m_i, l_j \in \mathbb{Z}_{>0}, \ i=1, 2, \dots, k; j=1, 2, \dots s$,
and the highest weight vector $v$ assigned to it:
\begin{align*}
v= H^{\beta}_{(-m_k)}\dots H^{\beta}_{(-m_2)}&H^{\beta}_{(-m_1)}H^{\gamma}_{(-l_s)}\dots H^{\gamma}_{(-l_2)}H^{\gamma}_{(-l_1)}|0\rangle .
\end{align*}
Applying  both the principle of "skipping" $s$ steps (because the charge of the vector $H^{\gamma}_{(-l_s)}\dots H^{\gamma}_{(-l_2)}H^{\gamma}_{(-l_1)}|0\rangle$ is $s$), and then the  "resetting the counter" principle, we see that
\[
deg(v) = n\left(\frac{3}{2} +2s\right) + 2(m_n -n) + \dots + 2(m_2-2) + 2(m_1-1) + \frac{s}{2} + 2(l_s -s) + 2\dots (l_2-2) + 2(l_1-1),
\]
which gives  the formula \eqref{eqn:degree3}.
\end{proof}
The  Proposition \ref{prop:counting-formula} then follows directly.
This leads to the following:
\begin{thm}\label{thm:triplecharacter}
\begin{equation}
dim_{q, z, r}^{L_0, h^{\mathbb{Z}}_0, L^h_0} \mathit{F_{\chi}}= \sum_{(\pi_1  | \pi_2) \in \mathfrak{BP}_{DI}} \frac{q^{\mathcal{W}\left((\pi_1 | \pi_2)\right)}z^{birank\left((\pi_1 | \pi_2)\right)} r^{-birank\left((\pi_1 | \pi_2)\right)^2/2}}{\prod_{n\geq 1} (1-q^{2n}r^n)}.
\end{equation}
Here $\mathcal{W}\left((\pi_1 | \pi_2)\right)$ is the weight function from \eqref{eqn:weightformula}.
\end{thm}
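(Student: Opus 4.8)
The plan is to reduce the trace to the untwisted Heisenberg decomposition of Theorem~\ref{thm:HeisDecomp}(II), and then to exhibit an explicit joint eigenbasis for the three commuting operators $L_0$, $h^{\mathbb{Z}}_0$, $L^h_0$ on which all three act diagonally. Concretely, I would fix the basis $\{v\}$ of $\mathit{F^{hwv}_\chi}$ indexed by $\mathfrak{BP}_{DI}$ furnished by Proposition~\ref{prop:counting-formula} and Corollary~\ref{cor:FDecomp}, and take as a basis of $\mathit{F_{\chi}}$ the Heisenberg monomials $h^{\mathbb{Z}}_{-n_1}\cdots h^{\mathbb{Z}}_{-n_j}v$ with $n_1\geq\cdots\geq n_j\geq 1$. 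Since the three operators pairwise commute and are individually diagonalizable, it suffices to verify that these monomials are joint eigenvectors and to record the corresponding three eigenvalues; the trace then splits as a sum over $v$ of the graded dimensions of the irreducible modules $\mathbb{C}[h^{\mathbb{Z}}_{-1},h^{\mathbb{Z}}_{-2},\dots]\cdot v$.

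The eigenvalues on a highest weight vector $v\leftrightarrow(\pi_1\,|\,\pi_2)$ can be read off directly from the structure already established. By the remark identifying the Hammond--Lewis birank with the charge, $h^{\mathbb{Z}}_0 v=birank\left((\pi_1\,|\,\pi_2)\right)v$, contributing the factor $z^{birank}$. By Lemma~\ref{lem:degree-calculation} and the weight formula \eqref{eqn:weightformula}, $2L_0 v = deg(v)\,v = \mathcal{W}\left((\pi_1\,|\,\pi_2)\right)v$, contributing $q^{\mathcal{W}}$. Finally, using that $v$ is a highest weight vector, so $h^{\mathbb{Z}}_n v=0$ for $n>0$, the Sugawara form $L^h_0=-\tfrac12(h^{\mathbb{Z}}_0)^2-\sum_{n>0}h^{\mathbb{Z}}_{-n}h^{\mathbb{Z}}_n$ from \eqref{eqn:VirGrading-h} (with $a=b=0$) collapses to $L^h_0 v = -\tfrac12\,birank\left((\pi_1\,|\,\pi_2)\right)^2 v$, contributing $r^{-birank^2/2}$. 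Together these produce exactly the numerator of the summand in the theorem.

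Next I would compute the two commutators that control how the creation operators $h^{\mathbb{Z}}_{-n}$ shift the gradings. From the Heisenberg relation $[h^{\mathbb{Z}}_m,h^{\mathbb{Z}}_n]=-m\delta_{m+n,0}$ one checks by a short direct calculation that $[L^h_0,h^{\mathbb{Z}}_{-n}]=n\,h^{\mathbb{Z}}_{-n}$, while $[h^{\mathbb{Z}}_0,h^{\mathbb{Z}}_{-n}]=0$ shows the charge is constant along each module. For the degree, the OPE of the Virasoro field $L^{-\frac14}(z^2)$ from \eqref{eqn:Vir2} with $h^{\mathbb{Z}}(z^2)$ — equivalently, that $h^{\mathbb{Z}}$ is a weight-one current for this field in the variable $z^2$ — gives $[2L_0,h^{\mathbb{Z}}_{-n}]=2n\,h^{\mathbb{Z}}_{-n}$ (one can sanity-check this on $h^{\mathbb{Z}}_{-1}|0\rangle$, which has degree $2$). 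Thus each application of $h^{\mathbb{Z}}_{-n}$ multiplies the contribution by $q^{2n}r^n$ and leaves $z$ unchanged, so since the $h^{\mathbb{Z}}_{-n}$ generate a free bosonic polynomial algebra the graded dimension of $\mathbb{C}[h^{\mathbb{Z}}_{-1},h^{\mathbb{Z}}_{-2},\dots]\cdot v$ is $q^{\mathcal{W}}z^{birank}r^{-birank^2/2}\prod_{n\geq1}(1-q^{2n}r^n)^{-1}$. Summing over $v\in\mathfrak{BP}_{DI}$ yields the claimed identity.

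I expect the genuinely delicate ingredient to be entirely upstream: it is the degree eigenvalue $deg(v)=\mathcal{W}\left((\pi_1\,|\,\pi_2)\right)$, whose proof via the ``resetting the counter'' and ``skipping $s$ steps'' bookkeeping is the hard analytic content, and which is exactly the statement of Lemma~\ref{lem:degree-calculation}. Once that is invoked, the remaining steps are the two routine commutator computations above plus the observation that $L^h_0$ degenerates on highest weight vectors. The only points requiring care are to confirm that the non-commuting fourth operator $L^t_0$ genuinely plays no role here (we use only $L_0$, $h^{\mathbb{Z}}_0$, $L^h_0$, which commute), and that the Heisenberg-monomial basis does diagonalize all three simultaneously, which follows from the three eigenvalue computations being additive over the ladder of $h^{\mathbb{Z}}_{-n}$'s.
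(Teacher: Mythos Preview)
Your proposal is correct and follows essentially the same approach as the paper: the paper's proof is a two-line sketch stating that the argument is ``similar to the proof of Proposition~\ref{prop:q-t-trace}'' and hinges on the commutator $[L^h_0, h^{\mathbb{Z}}_n]=-n\,h^{\mathbb{Z}}_n$, equation~\eqref{eqn:VirGrading-h}, and Lemma~\ref{lem:degree-calculation}. You have spelled out exactly these ingredients---the Heisenberg decomposition indexed by $\mathfrak{BP}_{DI}$, the eigenvalue of $L^h_0$ on highest weight vectors via the Sugawara collapse, the degree eigenvalue via Lemma~\ref{lem:degree-calculation}, and the ladder commutators---with more detail than the paper provides, but the structure is identical.
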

\begin{proof}
The proof of this theorem is similar to the proof of  Proposition \ref{prop:q-t-trace}, and hinges on the fact that
\begin{equation}
[ L^h_0, h^{\mathbb{Z}}_n ] =-n h^{\mathbb{Z}}_n, \quad \text{for\ any\ }\ n\in \mathbb{Z},
\end{equation}
as well as  \eqref{eqn:VirGrading-h}, and the Lemma above.
\end{proof}

An important consequence of any bosonization is that  by calculating the character (graded dimension) of both the fermionic and the bosonic side of the correspondence one can obtain identities relating certain product formulas to certain sum formulas. Such a sum-vs-product  identity perfectly illustrates the equality between the fermionic side (the product formulas) and the bosonic side (the sum formulas). For example, from   the classical boson-fermion correspondence (of type A) one can directly obtain the Jacobi triple product identity --- as was done in \cite{Kac}, and in \cite{AngD-A}  for the bosonization of type D-A. To derive the relevant sum-vs-product formula for the CKP,
we continue by considering just the first two grading operators, $h^{\mathbb{Z}}_0$ and $L_0$, as they are in some sense the most natural grading operators for the Fock space $ \mathit{F_{\chi}}$. In the CKP case, as we noticed, nothing is quite as straightforward as in the usual boson-fermion correspondence case.  Here the "fermionic side" is not purely fermionic, but instead one can view it as a fermion times a boson (see Corollary \ref{cor:FDecomp}) . In  \cite{AngCKPSecond} we showed the bosonic side of the character:
\[
dim_{q, z}^{L_0, h^{\mathbb{Z}}_0} \mathit{F_{\chi}} =\frac{1}{\prod_{j\in \mathbb{Z}_{+}} \big(1-zq^{2j- \frac{3}{2}}\big) \big(1-z^{-1}q^{2j- \frac{1}{2}}\big)}
\]
Here we obtain the new (and rather complicated formulas):
\begin{thm}\label{thm:sum-vs-product}
\begin{equation}\label{eqn:characterF-hwv}
dim_{q, z}^{L_0, h^{\mathbb{Z}}_0} \mathit{F^{hwv}_{\chi}} =\frac{\prod_{l=1}^{\infty} \big(1+zq^{2l-\frac{3}{2}}\big)  \prod_{l=1}^{\infty} \big(1+z^{-1}q^{2l-\frac{1}{2}}\big)}{2 \prod_{i=1}^{\infty} (1-q^{4i})(1+q^{2i})}\cdot \left(\sum_{k=0}^{\infty} \frac{z^kq^{\frac{k}{2}} }{1+z^{-1}q^{2k+\frac{3}{2}}} + \sum_{k=0}^{\infty} \frac{z^{-k}q^{\frac{3k}{2}} }{1+zq^{2k+\frac{1}{2}}}\right).
\end{equation}
\begin{equation}\label{eqn:charform-ferm}
dim_{q, z}^{L_0, h^{\mathbb{Z}}_0} \mathit{F_{\chi}} =\frac{\prod_{l=1}^{\infty} \big(1+zq^{2l-\frac{3}{2}}\big)  \prod_{l=1}^{\infty} \big(1+z^{-1}q^{2l-\frac{1}{2}}\big)}{2\prod_{l=1}^{\infty} \big(1-q^{4l}\big)^2}\cdot \left(\sum_{k=0}^{\infty} \frac{z^kq^{\frac{k}{2}} }{1+z^{-1}q^{2k+\frac{3}{2}}} + \sum_{k=0}^{\infty} \frac{z^{-k}q^{\frac{3k}{2}} }{1+zq^{2k+\frac{1}{2}}}\right)
\end{equation}
Comparing the two formulas for $dim_{q, z}^{L_0, h^{\mathbb{Z}}_0} \mathit{F_{\chi}}$ we obtain the identity
\begin{equation}\label{eqn:IdentityR}
\sum_{k=0}^{\infty} \frac{z^kq^{\frac{k}{2}} }{1+z^{-1}q^{2k+\frac{3}{2}}} + \sum_{k=0}^{\infty} \frac{z^{-k}q^{\frac{3k}{2}} }{1+zq^{2k+\frac{1}{2}}} =\frac{2\prod_{l=1}^{\infty} \big(1-q^{4l})^2}{\prod_{l=1}^{\infty} \big(1-z^2q^{4l-3}\big)  \prod_{l=1}^{\infty} \big(1-z^{-2}q^{4l-1}\big)}
\end{equation}
\end{thm}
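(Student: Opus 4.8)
The plan is to treat \eqref{eqn:characterF-hwv} as the one substantive computation and to deduce \eqref{eqn:charform-ferm} and \eqref{eqn:IdentityR} from it, together with the known bosonic character $dim_{q, z}^{L_0, h^{\mathbb{Z}}_0} \mathit{F_{\chi}}=\big(\prod_{j\geq 1}(1-zq^{2j-\frac32})(1-z^{-1}q^{2j-\frac12})\big)^{-1}$ established in \cite{AngCKPSecond}.

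First I would set up the skeleton. The vector space isomorphism $\mathit{F_{\chi}}\cong \mathit{F^{hwv}_{\chi}}\ten \mathbb{C}[x_1, x_2,\dots]$ of Corollary \ref{cor:FDecomp} is compatible with both of the commuting gradings: the Heisenberg generators $h^{\mathbb{Z}}_{-n}$ carry charge $0$ (they commute with $h^{\mathbb{Z}}_0$) and raise the $2L_0$-degree by $2n$, exactly as recorded by the factor $\prod_{n\geq 1}(1-q^{2n}r^n)$ in Theorem \ref{thm:triplecharacter}. Hence the bigraded trace factors as $dim_{q,z}^{L_0,h^{\mathbb{Z}}_0}\mathit{F_{\chi}} = dim_{q,z}^{L_0,h^{\mathbb{Z}}_0}\mathit{F^{hwv}_{\chi}}\cdot\prod_{n\geq 1}(1-q^{2n})^{-1}$. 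Multiplying \eqref{eqn:characterF-hwv} by $\prod_{n\geq 1}(1-q^{2n})^{-1}$ and using $\prod_{i\geq 1}(1-q^{4i})(1+q^{2i})\prod_{n\geq 1}(1-q^{2n}) = \prod_{i\geq 1}(1-q^{4i})^2$ (immediate from $(1+q^{2i})(1-q^{2i})=1-q^{4i}$) turns the denominator into $2\prod_{l\geq 1}(1-q^{4l})^2$ and yields \eqref{eqn:charform-ferm}. Then \eqref{eqn:IdentityR} follows by equating \eqref{eqn:charform-ferm} with the bosonic character, cancelling via the conjugate pairings $(1-zq^{2l-\frac32})(1+zq^{2l-\frac32})=1-z^2q^{4l-3}$ and $(1-z^{-1}q^{2l-\frac12})(1+z^{-1}q^{2l-\frac12})=1-z^{-2}q^{4l-1}$, and solving for the bracketed sum. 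Both are routine manipulations of formal infinite products.

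The substance is \eqref{eqn:characterF-hwv}. By Corollary \ref{cor:FDecomp} and Proposition \ref{prop:counting-formula} the space $\mathit{F^{hwv}_{\chi}}\cong\mathit{SF}$ has a basis indexed by the distinct integer bipartitions $(\pi_1\mid\pi_2)\in \mathfrak{BP}_{DI}$, on which $h^{\mathbb{Z}}_0$ acts by the birank $\#\pi_2-\#\pi_1$ and $2L_0$ acts by $\mathcal{W}\big((\pi_1\mid\pi_2)\big)$ from \eqref{eqn:weightformula} (via Lemma \ref{lem:degree-calculation}). Summing each colour through the standard count $\sum_{\#\pi=m}q^{2|\pi|}=q^{m(m+1)}/\prod_{i=1}^m(1-q^{2i})$ of distinct partitions with $m$ parts, writing $n=\#\pi_1$, $s=\#\pi_2$, and combining the triangular corrections in $\mathcal{W}$ with the resulting minimal weights $q^{n(n+1)}$, $q^{s(s+1)}$, the character collapses to
\[
dim_{q, z}^{L_0, h^{\mathbb{Z}}_0} \mathit{F^{hwv}_{\chi}} = \sum_{n,s\geq 0}\frac{z^{s-n}\,q^{2ns+\frac{3n}{2}+\frac{s}{2}}}{\prod_{i=1}^n(1-q^{2i})\,\prod_{i=1}^s(1-q^{2i})}.
\]
This double sum is a shifted, asymmetric analogue of the Dyson crank generating function, which is the crank connection announced in the introduction. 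The goal is to dissect it by the charge $m=s-n$: the terms with $m\geq 0$ should assemble into $\sum_{k\geq 0}z^kq^{\frac{k}{2}}/(1+z^{-1}q^{2k+\frac32})$ and those with $m<0$ into $\sum_{k\geq 0}z^{-k}q^{\frac{3k}{2}}/(1+zq^{2k+\frac12})$ --- their leading terms $z^kq^{\frac{k}{2}}$ and $z^{-k}q^{\frac{3k}{2}}$ are precisely the degrees of the minimal representatives $\chi_{-\frac12}^k|0\rangle$ (charge $k$) and $\chi_{-\frac32}^k|0\rangle$ (charge $-k$) --- while the summation over the remaining free partition data produces the product prefactor.

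The main obstacle is exactly this $q$-series evaluation: the factor $q^{2ns}$ entangles the two colours, so the double sum does not factor term by term and must be reorganised into the product-times-two-sums shape of \eqref{eqn:characterF-hwv}. I expect to do it by summing the inner colour with Euler's identity $\sum_{s\geq 0}t^s/\prod_{i=1}^s(1-q^{2i})=\prod_{l\geq 0}(1-tq^{2l})^{-1}$ to telescope the entanglement, and then performing a crank-type Lambert-series dissection of the resulting single sum according to the sign of the charge. As an independent check that the closed forms agree, the same double sum collapses at once, by Euler together with the $q$-binomial theorem, to the clean product $\prod_{n\geq 1}(1-q^{2n})\big/\prod_{j\geq 1}(1-zq^{2j-\frac32})(1-z^{-1}q^{2j-\frac12})$, which is the bosonic character times the Heisenberg factor; it is the reconciliation of this clean product with the charge-dissected form that makes \eqref{eqn:IdentityR} a genuine (and, as the introduction notes, Ramanujan ${}_1\psi_1$-type) identity rather than a triviality.
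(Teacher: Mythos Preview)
Your skeleton is right and matches the paper: both reduce everything to \eqref{eqn:characterF-hwv}, and your derivation of \eqref{eqn:charform-ferm} and \eqref{eqn:IdentityR} from it is exactly the paper's. Your exact double sum
\[
dim_{q, z}^{L_0, h^{\mathbb{Z}}_0} \mathit{F^{hwv}_{\chi}}=\sum_{n,s\ge 0}\frac{z^{s-n}q^{2ns+\frac{3n}{2}+\frac{s}{2}}}{(q^2;q^2)_n(q^2;q^2)_s}
\]
is correct (it follows cleanly from Lemma~\ref{lem:degree-calculation}), and your Euler/$q$-binomial collapse to the clean product is also correct --- that product is precisely the second expression in Corollary~\ref{cor:threecharacters}.

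The gap is step (b), the ``crank-type Lambert dissection''. First, the target \eqref{eqn:characterF-hwv} is \emph{not} a dissection by charge: each term $z^kq^{k/2}/(1+z^{-1}q^{2k+3/2})$ expands into contributions of charges $k,k-1,k-2,\dots$, so ``assemble the $m\ge 0$ terms into the first sum and the $m<0$ terms into the second'' cannot literally be what happens. Second, once you have the clean product, the statement that it equals the right side of \eqref{eqn:characterF-hwv} \emph{is} the identity \eqref{eqn:IdentityR}; so your route, as written, needs \eqref{eqn:IdentityR} as an input (via the Ramanujan ${}_1\psi_1$ derivation of the Appendix) rather than producing it as an output. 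That is a legitimate alternative proof of the theorem, but with the logical flow reversed.

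The paper's argument for \eqref{eqn:characterF-hwv} is genuinely different and does not pass through the clean product. It does not compute an exact generating function at all; instead it writes down, for each minimal-degree vector $\chi_{-1/2}^n|0\rangle$ or $\chi_{-3/2}^n|0\rangle$, a ``morally correct but factually wrong'' product \eqref{eqn:actualproducts} encoding the $H^\beta,H^\gamma$ actions on it, with the factor corresponding to $H^\beta_{(0)}$ or $H^\gamma_{(0)}$ deliberately removed at a shifted position. Summing these products over all $n\ge 0$ overcounts every basis vector of $\mathit{F^{hwv}_\chi}$ infinitely often, but in a controlled way: each vector appears exactly twice with weight $q^{2T_m}$ for every triangular number $T_m$. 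Dividing by $2\sum_{m\ge 0}q^{2T_m}=2\prod_{i\ge 1}(1-q^{4i})(1+q^{2i})$ (Jacobi) gives \eqref{eqn:characterF-hwv} directly, and \eqref{eqn:IdentityR} then falls out as a consequence. What this buys is an honest representation-theoretic derivation of the Lambert-sum form; what your route buys is a clean, rigorous double sum and product, at the cost of importing the $q$-series identity from outside.
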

\begin{remark}
The character \eqref{eqn:characterF-hwv}, besides being necessary for deriving \eqref{eqn:charform-ferm}, is interesting on its own, because the vector space $\mathit{F^{hwv}_\chi}$ can be identified as the vector space of what  we are told  physicists would  regard as the coset symmetry algebra of the $\beta-\gamma$ system by the Heisenberg field $h(z)$, see \cite{Ridout} and \cite{Creutzig-Ridout} (as opposed to the coset vertex subalgebra as considered by mathematicians, see e.g. \cite{Kac}, \cite{LiLep}, which would be the charge 0 subspace of  $\mathit{F^{hwv}_\chi}$ here).
\end{remark}
\begin{remark}
The last identity \eqref{eqn:IdentityR} is somewhat surprising, as the right-hand side manifestly only includes integral powers of $q$ and even powers of $z$, which implies that the left-hand side must do too (not obvious from  its form).  We show in the Appendix that this identity can be obtained also as a specialization of the Ramanujan Psi summation formula (see e.g. \cite{hardy1940ramanujan}, \cite{AndrewsRam-1}, \cite{AndrewsRam})
\[
\sum_{k=-\infty}^{\infty} \frac{(a; q)_k}{(b; q)_k}x^k = \frac{\left(q; q\right)_{\infty}\left(\frac{b}{a}; q\right)_{\infty}\left(\frac{q}{ax}; q\right)_{\infty}\left(ax; q\right)_{\infty}}{\left(b; q\right)_{\infty} \left(\frac{b}{ax}; q\right)_{\infty} \left(\frac{q}{a}; q\right)_{\infty} \left(x; q\right)_{\infty}}.
\]
Here as usual we denote
\[
 (b; q)_{\infty} :=\prod_{i=0}^{\infty} (1-bq^i), \quad (b; q)_n :=\frac{(b; q)_{\infty}}{(bq^n; q)_{\infty}}.
\]
\end{remark}
\begin{proof}  Since all elements of each fixed Heisenberg irreducible module  have the same charge, the following holds  for any Heisenberg irreducible module $V$ in  the decomposition into irreducible modules:
\[
dim_{q, z}^{L_0, h^{\mathbb{Z}}_0} V =\frac{z^{\lambda}q^{deg (v)}}{\prod_{l=1}^{\infty} (1-q^{2l})}.
\]
From  \eqref{eqn:isoHWV-Polyn} we have
\begin{equation}\label{eqn:charproduct}
dim_{q, z}^{L_0, h^{\mathbb{Z}}_0} \mathit{F_{\chi}} = \frac{dim_{q, z}^{L_0, h^{\mathbb{Z}}_0} \mathit{F^{hwv}_{\chi}}}{\prod_{l=1}^{\infty} \big(1-q^{2l})}.
\end{equation}
The character formula \eqref{eqn:charform-ferm} will then result from \eqref{eqn:characterF-hwv}. Hence we now proceed to prove \eqref{eqn:characterF-hwv},  by  using the fermionic description of the vector space $\mathit{F^{hwv}_\chi}$. From  Corollary \ref{cor:FDecomp} we have that  $\mathit{F^{hwv}_\chi}$  is spanned by vectors of the form
\[
H^{\beta}_{(n_1)}H^{\beta}_{(n_2)}\dots H^{\beta}_{(n_k)}H^{\gamma}_{(m_1)}H^{\gamma}_{(m_2)}\dots H^{\gamma}_{(m_l)}|0\rangle,
\]
where   $n_s <0, s=1, 2, \dots k$ and $m_s <0, s=1, 2, \dots l$.
The problem is that although an action by a  $H^{\gamma}_{(m_s)}$ always raises the charge by 1, and action by a  $H^{\beta}_{(n_s)}$ always lowers the charge by 1, those actions are not so uniform with respect to the degrees, as we saw in Lemma \ref{lem:degree-calculation} and its proof.  The $H^{\beta}_{(n)}$ (and similarly $H^{\gamma}_{(n)}$) raise the degree differently depending on the charge  of the element on which they act, but in a predictable manner, via the 'resetting the counter" and the "skipping steps" principles. The formula we need has to take this into account. Thus we have to start by considering how the product type contribution will depend on  the charge of the vector it is generated from. We start from the vector $\chi _{-3/2}^n |0\rangle$, which is the minimal degree  vector of charge $-n$.   We can see from  Lemma \ref{lem:degree-calculation} that acting by all possible $H^{\beta}_{(-m)}$, $-m <-n$, on $\chi _{-3/2}^n |0\rangle$ will produce all the various factors of type $z^{-1}q^{2m-\frac{1}{2}}$, $m\geq 1$. Thus we see that for the various vectors of the type
\[
H^{\beta}_{(-m_1)}H^{\beta}_{(-m_2)}\dots H^{\beta}_{(-m_k)}\chi _{-3/2}^n|0\rangle,
\]
where   $-m_s <-n, s=1, 2, \dots k$,  the following product is a "first iteration" to this part of  the  contribution to the graded dimension:
 \begin{equation}\label{eqn:product1}
z^{-n}q^{\frac{3n}{2}}\prod_{m=1}^{\infty} \left(1+z^{-1}q^{2m-\frac{1}{2}}\right).
\end{equation}
This product is factually wrong, but "morally correct". It is factually wrong  due to the fact that in this product the contribution coming from
 $H^{\beta}_{(-n-1)}\chi _{-3/2}^n|0\rangle$
is correct ($z^{-n}q^{\frac{3n}{2}}\cdot z^{-1}q^{\frac{3}{2}}$), but the contribution corresponding to
 $H^{\beta}_{(-n-2)}H^{\beta}_{(-n-1)}\chi _{-3/2}^n|0\rangle$ is off by a factor of $q^2$: due to the "resetting the counter" principle of  Lemma \ref{lem:degree-calculation} it should be
 \[
 z^{-n}q^{\frac{3n}{2}}\cdot z^{-1}q^{\frac{3}{2}}\cdot z^{-1}q^{\frac{3}{2}},
 \]
but in the product above it is
\[
 z^{-n}q^{\frac{3n}{2}}\cdot z^{-1}q^{\frac{3}{2}}\cdot z^{-1}q^{\frac{7}{2}}.
 \]
Nevertheless, as we will see, these discrepancies  will be  to our advantage, as they  will allow us, when counting  the infinite number of  repetitions,  to count  each repetition with a different factor, and thereby avoid infinite coefficients when summing up infinitely repeated elements. The reason is that  we will have to consider all the varying products generated from the minimal degree vectors, not just a single product.  Thus we will have the same vector $\chi _{-3/2}^{n}|0\rangle$  first appear directly when generating from itself, as above, then appear again when generating  from
$\chi _{-3/2}^{n-1}|0\rangle$ as   $H^{\beta}_{(-n)}\chi _{-3/2}^{n-1}|0\rangle$, and again  from $\chi _{-3/2}^{n-2}|0\rangle$ as  $H^{\beta}_{(-n)}H^{\beta}_{(-n+1)}\chi _{-3/2}^{n-2}|0\rangle$, etc...
The products \eqref{eqn:product1}, when considering all $n\in \mathbb{N}$,  count each element  $\chi _{-3/2}^{n}|0\rangle$ twice with a coefficient 1, then once with a coefficient $q^2$, and in general, once with a coefficient $q^{2T_m}$, where $T_m$ is the $m$th triangular number.

Similarly, at each fixed charge $n>0$, the vector $\chi _{-1/2}^n |0\rangle$ is the vector of minimal degree with that charge $n$. Action by  $H^{\gamma}_{(-m)}$, $-m < -n$, on $\chi _{-1/2}^n |0\rangle$ will produce all  the various factors of type $zq^{2m-\frac{3}{2}}$.  The various vectors  of the type
\[
H^{\gamma}_{(-m_1)}H^{\gamma}_{(-m_2)}\dots H^{\gamma}_{(-m_k)}\chi _{-1/2}^n|0\rangle,
\]
where   $-m_s <-n, s=1, 2, \dots k$, will  result in the following product contributions to the graded dimension:
\begin{equation}\label{eqn:product2}
z^nq^{\frac{n}{2}}\prod_{m=1}^{\infty} \left(1+zq^{2m-\frac{3}{2}}\right),
\end{equation}
of course again taking into account that in these products the "repetitions" are counted with different factors,  not with their exact contributions (i..e, the products as above are again factually wrong, but "morally correct").

The two types of products above, \eqref{eqn:product1} and \eqref{eqn:product2} represent  the actions by the $H^{\beta}_{(n)}$ on highest weight vectors of negative charge, as well as $H^{\gamma}_{(n)}$ on highest weight vectors of poisitve charge (not in an exact, but in a "morally correct",  manner).
Now, let us consider the actions by the $H^{\beta}_{(n)}$ on highest weight vectors of positive charge, as well as $H^{\gamma}_{(n)}$ on highest weight vectors of negative charge. Due to the already heavy indexing, we  will consider  representative examples, instead of working with  generally indexed highest weight vectors. The first  example of a "mixture" $H^{\beta} (z^2)\big(\chi _{-1/2} |0\rangle\big)$  shows, see Lemma \ref{lem:degree-calculation}, that according to the "skipping steps" principle, the factor
$
(1+z^{-1}q^{\frac{3}{2}})
$
is missing in the would-be product contribution, due to the fact that $H^{\beta}_{0}$ always acts as 0 on $\mathit{F^{hwv}_\chi}$. We see that in this "mixture" we would have a product of the type (always keeping in mind that  this product is "morally correct", but factually wrong):
\begin{equation}\label{eqn:prod1}
\frac{zq^{\frac{1}{2}}\prod_{m=0}^{\infty}\left(1+z^{-1}q^{2m-\frac{1}{2}}\right)}{1+z^{-1}q^{\frac{3}{2}}}.
\end{equation}
A final example:
\begin{align*}
H^{\gamma} (z^2)\big(\chi _{-3/2}^2 |0\rangle\big)
&=H^{\gamma} (z^2)\big(H^{\beta}_{(-2)}H^{\beta}_{(-1)} |0\rangle\big) \\
&= H^{\gamma}_{(2)}\big(H^{\beta}_{(-2)}H^{\beta}_{(-1)} |0\rangle\big)\frac{1}{z^6} +H^{\gamma}_{(1)}\big(H^{\beta}_{(-2)}H^{\beta}_{(-1)} |0\rangle\big)\frac{1}{z^4}  +H^{\gamma}_{(0)}\big(\chi _{-3/2}^2 |0\rangle\big)\frac{1}{z^2} +H^{\gamma}_{(-1)}\big(\chi _{-3/2}^2 |0\rangle\big)  +\dots  \\
&=\frac{-H^{\beta}_{(-1)} |0\rangle}{z^6} + \frac{H^{\beta}_{(-2)} |0\rangle}{z^4} +0\cdot \frac{1}{z^2} + v_{\frac{15}{2};  0}  + v_{\frac{19}{2};  0}z^2 + \dots  +  \\
&= \frac{-\chi _{-3/2}|0\rangle}{z^6} + \frac{2\chi _{-7/2}|0\rangle -\chi _{-3/2}^2\chi _{-1/2}|0\rangle}{z^4} + 0\cdot \frac{1}{z^2}  +  v_{\frac{15}{2};  0} + \dots
\end{align*}
Hence action of $H^{\gamma}_{(2)}$ on $\chi _{-3/2}^2 |0\rangle$ lowers the degree by $\frac{3}{2}$,  $H^{\gamma}_{(1)}$ raises the degree by $\frac{1}{2}$,  $H^{\gamma}_{0}$ acts as 0, $H^{\gamma}_{(-1)}$  raises the degree by $\frac{9}{2}$,  $H^{\gamma}_{(-2)}$ raises the degree by $\frac{13}{2}$, and in general, acting on $\chi _{-3/2}^2 |0\rangle$,  $H^{\gamma}_{(-n)}$ raises the degree by $2n+\frac{5}{2}$. Most importantly, the factor
$
(1+zq^{\frac{5}{2}})
$
is missing, as per the "skipping steps" principle,  due to $H^{\gamma}_{0}$ always acting as 0 on $\mathit{F^{hwv}_\chi}$.

Similar calculation can be done in general for  the actions of $H^{\beta}_{(m)}$ on  $\chi _{-1/2}^n |0\rangle$; as well as for   the actions of $H^{\gamma}_{(m)}$ on  $\chi _{-3/2}^n |0\rangle$; always remembering that the factors corresponding to the actions of $H^{\beta}_{0}$ and  $H^{\gamma}_{0}$, are missing.

The next crucial consideration is that any element of $\mathit{F^{hwv}_\chi}$ can be generated  from any of the "minimal degree" vectors of any fixed charge (the $\chi _{-3/2}^n |0\rangle$ if the  charge $-n$ is negative, or $\chi _{-3/2}^2 |0\rangle$ if the charge $n$ is positive), by action of  $H^{\gamma}_{(m)}$ and  $H^{\beta}_{(m)}$, but with $m\in \mathbb{Z}$ (positive and negative indexes allowed).

  Thus, before we even try to calculate  the exact sums of (factually correct) products resulting from  the corresponding actions on each of the "minimal degree" vectors $\chi _{-1/2}^n |0\rangle$ and $\chi _{-3/2}^n |0\rangle$, we need to consider that if we  do just that we would have counted each basis vector infinitely many times, and thus such a sum would be pointless to calculate.  Counting the degree contributions  "by charge" (and therefore  infinite multiple counting)  is  unavoidable,  because  of  the  way the degree  contributions  change  depends  on the charge. Thus we do not have a single infinite product, but instead we have to sum infinitely products generated from each of the minimal degree vectors, which will entail encountering  each basis vector infinitely many times. And so instead of trying to calculate the exact products contributions (we are not sure it is possible), we will consider instead adding the "morally correct" (but factually wrong) product contributions. Not only will this  actually uniformize the products, but using  the not-factually exact, but "morally"-correct products will allow us to count each basis vector with a different coefficient each time it is encountered.  Namely,  we consider the products, for each $n\in \mathbb{Z}_{\geq 0}$:
  \begin{equation}\label{eqn:actualproducts}
  \frac{z^nq^{\frac{n}{2}} \prod_{l=1}^{\infty} \big(1+zq^{2l-\frac{3}{2}}\big)  \prod_{l=1}^{\infty} \big(1+z^{-1}q^{2l-\frac{1}{2}}\big)}{1+z^{-1}q^{2n+\frac{3}{2}}}, \quad \text{and} \quad \quad
    \frac{z^{-n}q^{\frac{3n}{2}}\prod_{l=1}^{\infty} \big(1+zq^{2l-\frac{3}{2}}\big)  \prod_{l=1}^{\infty} \big(1+z^{-1}q^{2l-\frac{1}{2}}\big) }{1+zq^{2n+\frac{1}{2}}}
  \end{equation}
 Observe in the above that we  "shifted",   for instance  compare  the  first product above, for $n=1$, vs the product of  \eqref{eqn:prod1}--- we shifted the "missing factor" from $(1+z^{-1}q^{\frac{3}{2}})$  in \eqref{eqn:prod1} to $(1+z^{-1}q^{\frac{7}{2}})$ in \eqref{eqn:actualproducts}, as well as the starting factor from $(1+z^{-1}q^{-\frac{1}{2}})$ to $(1+z^{-1}q^{\frac{3}{2}})$. Similarly, we
  know a factor would be missing from $\prod_{l=1}^{\infty} \big(1+z^{-1}q^{2l-\frac{1}{2}}\big)$ due to  $H^{\beta}_{(m)}$ always acting by 0, but we shifted  all the factors from  $1+z^{-1}q^{2m-\frac{1}{2}}$  in \eqref{eqn:prod1}  to $1+z^{-1}q^{2m+\frac{3}{2}}$ in the product in \eqref{eqn:actualproducts}. For example, this  will entail that the minimal degree vector $\chi _{-3/2} |0\rangle =H^{\beta}_{(-1)}|0\rangle$, which can be obtained by the consecutive action $H^{\beta}_{(-1)}H^{\beta}_{(1)}\chi _{-3/2}|0\rangle$, is counted with a different factor of $q^{2+4}=q^{2T_2}$ in the first product  in \eqref{eqn:actualproducts}. Thus the shifts are introduced to allow for the "skipping the steps" principle, but also to make sure that we don't sum the same exact term (corresponding to the same exact basis vector)  infinitely many times, instead we sum it with a different degree coefficient each time.

Thus, generalizing the observation from above, we see that if we  sum the products \eqref{eqn:actualproducts} we would have counted each basis vector infinitely many times, but  with a different degree coefficient each time. That coefficient is $q^{2T_m}$, where $T_m$ is the $m$-th triangular number--- in fact we encounter the  coefficient  $q^{2T_m}$ exactly twice among the infinite number of times we encounter  each basis vector. Thus the sum of all the products from  \eqref{eqn:actualproducts} equals
\begin{equation}
\left(2\sum_{m\in \mathbb{Z}_{\geq 0}}q^{2T_m}\right)\cdot dim_{q, z}^{L_0, h^{\mathbb{Z}}_0} \mathit{F^{hwv}_{\chi}},
\end{equation}
and so
\begin{equation}
\left(2\sum_{m\in \mathbb{Z}_{\geq 0}}q^{2T_m}\right)\cdot dim_{q, z}^{L_0, h^{\mathbb{Z}}_0} \mathit{F^{hwv}_{\chi}} =\prod_{l=1}^{\infty} \big(1+zq^{2l-\frac{3}{2}}\big)  \prod_{l=1}^{\infty} \big(1+z^{-1}q^{2l-\frac{1}{2}}\big)\cdot \left(\sum_{n=0}^{\infty} \frac{z^nq^{\frac{n}{2}} }{1+z^{-1}q^{2n+\frac{3}{2}}} + \sum_{n=0}^{\infty} \frac{z^{-n}q^{\frac{3n}{2}} }{1+zq^{2n+\frac{1}{2}}}\right).
\end{equation}
Now we use the following Jacobi formula for the triangular numbers \footnote{This identity  can be found on page 185 of the original manuscript by Jacobi, \cite{Jacobi1829}.}
\[
\sum_{m\in \mathbb{Z}_{\geq 0}}q^{2T_m} = 1+q^2 +q^4 +q^{12} +q^{20} +\dots + q^{2T_m} +\dots = \frac{\prod_{i=1}^{\infty} (1-q^{4i})}{\prod_{i=1}^{\infty} (1-q^{4i-2})} =\prod_{i=1}^{\infty} (1-q^{4i})(1+q^{2i}).
\]
Hence we get
\begin{equation}
dim_{q, z}^{L_0, h^{\mathbb{Z}}_0} \mathit{F^{hwv}_{\chi}} =\frac{\prod_{l=1}^{\infty} \big(1+zq^{2l-\frac{3}{2}}\big)  \prod_{l=1}^{\infty} \big(1+z^{-1}q^{2l-\frac{1}{2}}\big)}{2 \prod_{i=1}^{\infty} (1-q^{4i})(1+q^{2i})}\cdot \left(\sum_{n=0}^{\infty} \frac{z^nq^{\frac{n}{2}} }{1+z^{-1}q^{2n+\frac{3}{2}}} + \sum_{n=0}^{\infty} \frac{z^{-n}q^{\frac{3n}{2}} }{1+zq^{2n+\frac{1}{2}}}\right),
\end{equation}
which is precisely \eqref{eqn:characterF-hwv}.
Thus  using \eqref{eqn:charproduct} we derive \eqref{eqn:charform-ferm}.
\end{proof}
\begin{cor}\label{cor:threecharacters}
The following equalities hold:
\begin{align*}
dim_{q, z}^{L_0, h^{\mathbb{Z}}_0} \mathit{F^{hwv}_{\chi}}  & =  \sum_{(\pi_1  | \pi_2) \in \mathfrak{BP}_{DI}} q^{\mathcal{W}\left((\pi_1 | \pi_2)\right)}z^{birank\left((\pi_1 | \pi_2)\right)}  =\prod_{n\geq 1} {\frac{(1-q^{2n})}{ \big(1-zq^{2n- \frac{3}{2}}\big) \big(1-z^{-1}q^{2n- \frac{1}{2}}\big)}}\\
&=\prod_{l=1}^{\infty} {\frac{\big(1+zq^{2l-\frac{3}{2}}\big) \big(1+z^{-1}q^{2l-\frac{1}{2}}\big)}{2 (1-q^{4l})(1+q^{2l})}}\cdot \left(\sum_{k=0}^{\infty} \frac{z^kq^{\frac{k}{2}} }{1+z^{-1}q^{2k+\frac{3}{2}}} + \sum_{k=0}^{\infty} \frac{z^{-k}q^{\frac{3k}{2}} }{1+zq^{2k+\frac{1}{2}}}\right).
\end{align*}
\end{cor}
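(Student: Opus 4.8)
The plan is to recognize that each of the three expressions computes the single graded dimension $dim_{q, z}^{L_0, h^{\mathbb{Z}}_0} \mathit{F^{hwv}_{\chi}}$, and that every evaluation needed has already been established above; the corollary is then the assembly of these three computations, together with the trivial observation that equating them yields the stated identities.

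First I would establish the combinatorial sum in the first line. By Proposition \ref{prop:counting-formula} the basis of $\mathit{F^{hwv}_\chi}$ is indexed by the distinct integer bipartitions $\mathfrak{BP}_{DI}$. For the highest weight vector $v$ attached to $(\pi_1 | \pi_2)$, Lemma \ref{lem:degree-calculation} (equivalently formula \eqref{eqn:weightformula}) gives $deg(v) = \mathcal{W}\left((\pi_1 | \pi_2)\right)$, while the remark following Proposition \ref{prop:counting-formula} identifies the charge of $v$ with the Hammond-Lewis birank. Since the degree grading is implemented by $2L_0$ and the charge by $h^{\mathbb{Z}}_0$, so that $q^{2L_0}v = q^{\mathcal{W}\left((\pi_1 | \pi_2)\right)}v$ and $z^{h^{\mathbb{Z}}_0}v = z^{birank\left((\pi_1 | \pi_2)\right)}v$, each basis vector contributes $q^{\mathcal{W}\left((\pi_1 | \pi_2)\right)} z^{birank\left((\pi_1 | \pi_2)\right)}$ to the trace, which gives the first equality.

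Next, the product formula in the second line comes from \eqref{eqn:charproduct} in the proof of Theorem \ref{thm:sum-vs-product}: the factorization $\mathit{F_\chi} \cong \mathit{F^{hwv}_\chi} \ten \mathbb{C}[x_1, x_2, \dots]$ of \eqref{eqn:isoHWV-Polyn} gives $dim_{q, z}^{L_0, h^{\mathbb{Z}}_0} \mathit{F_{\chi}} = dim_{q, z}^{L_0, h^{\mathbb{Z}}_0} \mathit{F^{hwv}_{\chi}} \big/ \prod_{l\geq 1}(1-q^{2l})$. Multiplying the bosonic character $dim_{q, z}^{L_0, h^{\mathbb{Z}}_0} \mathit{F_{\chi}}$ recalled from \cite{AngCKPSecond} by $\prod_{l\geq 1}(1-q^{2l})$ produces exactly the product displayed in the second line of the corollary. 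The third, sum-times-product expression is literally formula \eqref{eqn:characterF-hwv} of Theorem \ref{thm:sum-vs-product}, so no further argument is needed there.

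There is essentially no hard computation to overcome, since all the analytic content resides in Theorem \ref{thm:sum-vs-product} and in the degree bookkeeping of Lemma \ref{lem:degree-calculation}. The only point requiring care is the compatibility of the operators with the partition statistics, namely confirming that $q^{2L_0}$ records $q^{\mathcal{W}}$ and $z^{h^{\mathbb{Z}}_0}$ records $z^{birank}$ uniformly across all three descriptions. As a byproduct, equating the combinatorial sum with the product yields a new partition identity summed over $\mathfrak{BP}_{DI}$, while equating the product with the sum-times-product form recovers the identity \eqref{eqn:IdentityR}.
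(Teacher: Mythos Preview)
Your proposal is correct and is precisely the intended assembly: the paper offers no separate proof of this corollary, and each of the three expressions is exactly the output of a result already proved (Proposition~\ref{prop:counting-formula} with Lemma~\ref{lem:degree-calculation} for the bipartition sum, \eqref{eqn:charproduct} together with the bosonic character from \cite{AngCKPSecond} for the product, and \eqref{eqn:characterF-hwv} for the last line). One tiny citation fix: the identification of the charge with the Hammond--Lewis birank is stated inside Proposition~\ref{prop:counting-formula} itself, not in a remark following it.
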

Finally, we would like to underline a  connection between the CKP hierarchy, its bosonization (and thus the $\beta-\gamma$ system), and Dyson's crank of a partition. Recall the Dyson crank of a partition $\lambda$ is defined as follows (\cite{Garvan-crank}): Let $l(\lambda)$ denote the largest part of $\lambda$, $\omega (\lambda)$ denote the number of 1's in $\lambda$, and $\mu(\lambda)$  denote the number of parts of $\lambda$ larger than $\omega (\lambda)$. The crank $c(\lambda)$  is given by
\begin{align*}
     {\displaystyle c(\lambda )={\begin{cases}l(\lambda )&{\text{ if }}\omega (\lambda )=0\\\mu (\lambda )-\omega (\lambda )&{\text{ if }}\omega (\lambda )>0.\end{cases}}}
     \end{align*}

Denote by $N^\prime(m,n)$  the number of partitions of $n$ with crank equal to $m$,  with the exception that for   $n=1$ we set $N^\prime(-1,1) = -N^\prime(0,1) = N^\prime(1,1) = 1$. For convenience we also set $N^\prime(m,n)=0$ whenever $m$ or $n$ is not an integer.
Specifically, the generating function for $N^\prime(m,n)$ is (\cite{Garvan-crank}):
\begin{equation}\label{eqn:crank}
{\displaystyle \sum _{n=0}^{\infty }\sum _{m=-\infty }^{\infty }N^\prime(m,n)z^{m}q^{n}=\prod _{n=1}^{\infty }{\frac {(1-q^{n})}{(1-zq^{n})(1-z^{-1}q^{n})}}}.
\end{equation}
Comparing this generating function  to the identities of Corollary \ref{cor:threecharacters}, we see that:
\begin{cor}\label{cor:dysoncrankcharacter}
\begin{align}
 dim_{q, z}^{L_0, h^{\mathbb{Z}}_0} \mathit{F^{hwv}_{\chi}}  = \sum _{n=0}^{\infty} \sum _{m=-\infty }^{\infty }&\sum _{l=0}^{\infty } N^\prime(m-l,n)z^{m}q^{2n+\frac{m}{2}} = \sum_{(\pi_1  | \pi_2) \in \mathfrak{BP}_{DI}} q^{\mathcal{W}\left((\pi_1 | \pi_2)\right)}z^{birank\left((\pi_1 | \pi_2)\right)} \\
=\prod_{n\geq 1} {\frac{(1-q^{2n})}{ \big(1-zq^{2n- \frac{3}{2}}\big) \big(1-z^{-1}q^{2n- \frac{1}{2}}\big)}}
 &= \prod_{l=1}^{\infty} {\frac{\big(1+zq^{2l-\frac{3}{2}}\big) \big(1+z^{-1}q^{2l-\frac{1}{2}}\big)}{2 (1-q^{4l})(1+q^{2l})}}\cdot \left(\sum_{k=0}^{\infty} \frac{z^kq^{\frac{k}{2}} }{1+z^{-1}q^{2k+\frac{3}{2}}} + \sum_{k=0}^{\infty} \frac{z^{-k}q^{\frac{3k}{2}} }{1+zq^{2k+\frac{1}{2}}}\right).\end{align}
\end{cor}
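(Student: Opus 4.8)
The bulk of the asserted equalities---those relating $dim_{q, z}^{L_0, h^{\mathbb{Z}}_0} \mathit{F^{hwv}_{\chi}}$, the Hammond--Lewis birank sum $\sum_{(\pi_1 | \pi_2)\in\mathfrak{BP}_{DI}} q^{\mathcal{W}((\pi_1|\pi_2))}z^{birank((\pi_1|\pi_2))}$, the product $\prod_{n\geq 1}(1-q^{2n})/\big((1-zq^{2n-3/2})(1-z^{-1}q^{2n-1/2})\big)$, and the final sum-of-two-series expression---are already supplied verbatim by Corollary \ref{cor:threecharacters}. Hence the plan is to treat these as known, and to reduce the corollary to the single new statement that the Dyson crank expression $\sum_{n\geq 0}\sum_{m\in\mathbb{Z}}\sum_{l\geq 0} N^\prime(m-l,n)\,z^m q^{2n+m/2}$ equals the product above; everything then follows by transitivity.

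First I would reindex the triple sum by introducing $m' = m-l$ (the genuine crank value) and retaining $l$ as a free index, so that $z^m q^{2n+m/2} = z^{m'+l}q^{2n+(m'+l)/2}$, and the sum factors as
\[
\Big(\sum_{l\geq 0}(zq^{1/2})^l\Big)\cdot \sum_{n\geq 0}\sum_{m'\in\mathbb{Z}} N^\prime(m',n)\,(zq^{1/2})^{m'}(q^2)^n .
\]
Before performing this I would check well-definedness: for a fixed power $q^{N/2}$ the contributing triples satisfy $4n+m'+l = N$, and since $n,l\geq 0$ and the crank of a partition of $n$ is bounded in absolute value by $n$, only finitely many triples contribute. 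This legitimizes both the reindexing and the interchange of summation.

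Next I would evaluate the geometric factor $\sum_{l\geq 0}(zq^{1/2})^l = 1/(1-zq^{1/2})$ and recognize the remaining double sum as the crank generating function \eqref{eqn:crank} evaluated at $z\mapsto zq^{1/2}$, $q\mapsto q^2$ (the exceptional convention on $N^\prime$ at $n=1$ is already built into \eqref{eqn:crank}, so it carries through automatically), namely
\[
\prod_{n\geq 1}\frac{1-q^{2n}}{(1-zq^{2n+1/2})(1-z^{-1}q^{2n-1/2})} .
\]
Finally I would absorb the prefactor into the first family of factors: since $\{1/2\}\cup\{2n+1/2 : n\geq 1\} = \{2m-3/2 : m\geq 1\}$, we have $(1-zq^{1/2})\prod_{n\geq 1}(1-zq^{2n+1/2}) = \prod_{m\geq 1}(1-zq^{2m-3/2})$, turning the expression into exactly the target product and closing the chain of equalities through Corollary \ref{cor:threecharacters}.

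The computation is short, so the only delicate point---and the one place where the characteristic CKP asymmetry surfaces---is bookkeeping of the half-integer exponents: the geometric factor produced by the cumulative crank sum attaches solely to the $z$-side, shifting it to $2n-3/2$ while the $z^{-1}$-side remains at $2n-1/2$. Getting this shift and the formal well-definedness underlying the reindexing exactly right is the main (if modest) obstacle; no further input beyond \eqref{eqn:crank} and Corollary \ref{cor:threecharacters} is needed.
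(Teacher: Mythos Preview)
Your proposal is correct and follows essentially the same approach as the paper: both arguments substitute $q\mapsto q^2$, $z\mapsto zq^{1/2}$ in the crank generating function \eqref{eqn:crank}, multiply by the geometric series $1/(1-zq^{1/2})=\sum_{l\geq 0}(zq^{1/2})^l$, and reindex via $m\leftrightarrow m-l$ to match the target product, invoking Corollary~\ref{cor:threecharacters} for the remaining equalities. The only difference is the direction---you start from the triple sum and factor it, while the paper starts from the product and expands---and your finiteness check is slightly more explicit than the paper's.
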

\begin{proof}
Note that although there are several apparently infinite  sums in $\sum _{n=0}^{\infty }\sum _{m=-\infty }^{\infty }\sum _{l=0}^{\infty } N^\prime(m-l,n)z^{m}q^{2n+\frac{m}{2}}$, such as  from $m=-\infty$ to $m=\infty$, in fact at each $n$ only finitely many of the $N^\prime(m,n)$ are nonzero. We obtain the first equality  by setting $q\to q^2$ and $z\to zq^{\frac{1}{2}}$ in \eqref{eqn:crank}:
\[
\sum _{n=0}^{\infty }\sum _{m=-\infty }^{\infty }N^\prime(m,n) z^{m}q^{2n+\frac{m}{2}}=\prod _{n=1}^{\infty }{\frac {(1-q^{2n})}{(1-zq^{2n+\frac{1}{2}})(1-z^{-1}q^{2n-\frac{1}{2}})}}.
\]
Thus we have
\[
\frac{1}{1- zq^{\frac{1}{2}}}\left(\sum _{n=0}^{\infty }\sum _{m=-\infty }^{\infty }N^\prime(m,n)z^{m}q^{2n+\frac{m}{2}}\right) = \prod _{n=1}^{\infty }{ \frac{ (1-q^{2n})}{ \big(1-zq^{2n- \frac{3}{2}}\big) \big(1-z^{-1}q^{2n- \frac{1}{2}}\big)}}.
\]
We now expand and re-sum (assuming $|zq^{\frac{1}{2}}|<1$):
\begin{align*}
\frac{1}{1- zq^{\frac{1}{2}}}\left(\sum _{n=0}^{\infty }\sum _{m=-\infty }^{\infty }N^\prime(m,n)z^{m}q^{2n+\frac{m}{2}}\right) &= \left(\sum _{l=0}^{\infty } z^l q^{\frac{l}{2}}\right)\left(\sum _{n=0}^{\infty }\sum _{m=-\infty }^{\infty }N^\prime(m,n)z^{m}q^{2n+\frac{m}{2}}\right)\\
 =\sum _{l=0}^{\infty }\sum _{n=0}^{\infty }\sum _{m=-\infty }^{\infty }N^\prime(m,n)z^{m+l}q^{2n+\frac{m+l}{2}} &=\sum _{n=0}^{\infty }\sum _{m=-\infty }^{\infty }\sum _{l=0}^{\infty } N^\prime(m-l,n)z^{m}q^{2n+\frac{m}{2}}.
\end{align*}
\end{proof}
In \cite{AngCKPSecond} we asked the following questions: It would be interesting to derive a formula giving a correspondence between a partition $\mathfrak{p}\in \mathfrak{P}_{tdo}$ of weight $n$ and the highest weight vector corresponding to that partition, or even the charge of that highest weight vector. As the weights of the partitions grow, the charges are less straightforward to calculate.  For example at weight $\frac{13}{2}$ there are 7 partitions from $\mathfrak{P}_{tdo}$ and one can calculate by brute force that there is a highest weight vector of charge 13, a highest weight vector of charge 9, two highest weight vectors of charge 5,  two highest weight vectors of charge 1 and  a highest weight vector of charge $-3$.

We cannot yet give a direct formula between a $\mathfrak{p}\in \mathfrak{P}_{tdo}$ of weight $n$ and the highest weight vector corresponding to that partition $\mathfrak{p}$, although we have an indirect correspondence in Proposition \ref{prop:counting-formula}. But we can answer the question about the charges of the highest weight vectors with degree $n$:
\begin{cor}
The number of highest weight vectors with degree $n$ and charge $m$ equals $\sum_{l=0}^{\infty} N^\prime(m-l, \frac{2n-m}{4})$.
\end{cor}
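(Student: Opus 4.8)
The plan is to obtain the count by a direct coefficient extraction from the generating-function identity already established in Corollary~\ref{cor:dysoncrankcharacter}. The conceptual starting point is the identification of the graded dimension $dim_{q, z}^{L_0, h^{\mathbb{Z}}_0} \mathit{F^{hwv}_{\chi}} = tr_{\mathit{F^{hwv}_{\chi}}} q^{2L_0} z^{h^{\mathbb{Z}}_0}$ as the bivariate generating function counting highest weight vectors by degree and charge. Since $L_0$ and $h^{\mathbb{Z}}_0$ commute and the highest weight vectors can be chosen to form a joint eigenbasis for $2L_0$ and $h^{\mathbb{Z}}_0$ (each highest weight vector has a definite degree via \eqref{eqn:degree3} and a definite charge equal to its Hammond--Lewis birank, by Proposition~\ref{prop:counting-formula}), the trace is literally $\sum_v q^{deg(v)} z^{chg(v)}$, so the coefficient of $z^m q^n$ equals the number of highest weight vectors with charge $m$ and degree $n$.

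First I would invoke Corollary~\ref{cor:dysoncrankcharacter}, which re-expresses this same graded dimension as the triple sum
\[
dim_{q, z}^{L_0, h^{\mathbb{Z}}_0} \mathit{F^{hwv}_{\chi}} = \sum_{\nu=0}^{\infty} \sum_{j=-\infty}^{\infty} \sum_{l=0}^{\infty} N^\prime(j-l, \nu)\, z^{j} q^{2\nu+\frac{j}{2}},
\]
where $\nu$ denotes the weight of the partition counted by the Dyson crank. Then I would extract the coefficient of a fixed monomial $z^m q^n$. Matching the power of $z$ forces $j = m$, pinning the charge index directly; matching the power of $q$ forces $2\nu + \tfrac{m}{2} = n$, that is $\nu = \tfrac{2n-m}{4}$, which determines the partition weight uniquely in terms of the degree $n$ and charge $m$. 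Substituting both back leaves only the free summation over $l$, so the sought coefficient is $\sum_{l=0}^{\infty} N^\prime(m-l, \tfrac{2n-m}{4})$, which is exactly the claimed count.

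The only point requiring a brief remark is the well-definedness of the inner sum, and this is where one should pause rather than where any genuine difficulty lies. By the conventions recalled before \eqref{eqn:crank}, $N^\prime(k,\nu)$ vanishes unless $\nu\in\mathbb{Z}_{\geq 0}$ and $k$ is an integer whose absolute value is at most the largest part of a partition of $\nu$, hence bounded by $\nu$; consequently for the fixed value $\nu=\tfrac{2n-m}{4}$ only finitely many $l$ contribute, and the whole sum is empty (giving count $0$) unless $\tfrac{2n-m}{4}$ is a nonnegative integer, as it must be for a genuine degree--charge pair. Thus no real obstacle arises: the corollary is an immediate consequence of Corollary~\ref{cor:dysoncrankcharacter}, and the substantive content is entirely the identification of $tr\, q^{2L_0} z^{h^{\mathbb{Z}}_0}$ on $\mathit{F^{hwv}_{\chi}}$ as the degree--charge counting function for highest weight vectors.
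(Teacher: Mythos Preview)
Your proof is correct and matches the paper's approach: the paper states this corollary without a separate proof, treating it as an immediate coefficient extraction from Corollary~\ref{cor:dysoncrankcharacter}, which is exactly what you do. Your explicit remark that the sum over $l$ is finite (and vanishes unless $\tfrac{2n-m}{4}\in\mathbb{Z}_{\geq 0}$) is a helpful addition that the paper only alludes to in the subsequent remark about $2\,deg(v)\equiv chg(v)\pmod 4$.
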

\begin{remark}
We would also note that the above formula confirms that  $2deg(v)\equiv chg(v) \ \ (mod \ 4)$, where $deg(v)$ is  the degree of the  highest weight vector $v$, and $chg(v)$ is its charge.
\end{remark}
We would like to thank Thomas Creutzig for the helpful discussion, in particular he confirmed that such a connection with the Dyson  crank should hold, and explained the physicist concept of a coset space. We are also grateful to  Kailash Misra and Naihuan Jing, for organizing  this AMS Special Session, and  for their invitation.

\section{Appendix}
In this Appendix we give another  derivation of  the identity \eqref{eqn:IdentityR} from the Ramanujan summation formula (see e.g. \cite{hardy1940ramanujan},  \cite{AndrewsRam-1},  \cite{AndrewsRam})   by appropriate changes of variables. It serves as a confirmation of the validity of this  rather curious identity.

We start with the Ramanujan Psi summation formula
\[
_1\psi_1 (a, b; q, x)  = \sum_{k=-\infty}^{\infty} \frac{(a; q)_k}{(b; q)_k}x^k = \frac{\left(q; q\right)_{\infty}\left(\frac{b}{a}; q\right)_{\infty}\left(\frac{q}{ax}; q\right)_{\infty}\left(ax; q\right)_{\infty}}{\left(b; q\right)_{\infty} \left(\frac{b}{ax}; q\right)_{\infty} \left(\frac{q}{a}; q\right)_{\infty} \left(x; q\right)_{\infty}}.
\]
The specialization we need is $b= aq$, which reduces the Ramanujan summation formula to
\[
(1-a)\sum_{k=-\infty}^{\infty} \frac{x^k}{1-aq^k} = \frac{\left(q; q\right)_{\infty}^2\left(\frac{q}{ax}; q\right)_{\infty}\left(ax; q\right)_{\infty}}{\left(aq; q\right)_{\infty} \left(\frac{q}{x}; q\right)_{\infty} \left(\frac{q}{a}; q\right)_{\infty} \left(x; q\right)_{\infty}}.
\]
We now replace $x=qy$  and we get
\[
\frac{1}{1-a}+\sum_{k=1}^{\infty} \frac{y^kq^k}{1-aq^k}  -\frac{1}{a}\sum_{k=1}^{\infty} \frac{y^{-k}}{1-a^{-1}q^k} = \frac{1}{(1-a)}\frac{\left(q; q\right)_{\infty}^2\left((ay)^{-1}; q\right)_{\infty}\left(qay; q\right)_{\infty}}{\left(aq; q\right)_{\infty} \left(y^{-1}; q\right)_{\infty} \left(\frac{q}{a}; q\right)_{\infty} \left(qy; q\right)_{\infty}}.
\]
Now set $q\mapsto q^2$:
\[
\frac{1}{1-a}+\sum_{k=1}^{\infty} \frac{y^kq^{2k}}{1-aq^{2k}}  -\frac{1}{a}\sum_{k=1}^{\infty} \frac{y^{-k}}{1-a^{-1}q^{2k}} = \frac{1}{(1-a)}\frac{\left(q^2; q^2\right)_{\infty}^2\left((ay)^{-1}; q^2\right)_{\infty}\left(q^2ay; q^2\right)_{\infty}}{\left(aq^2; q^2\right)_{\infty} \left(y^{-1}; q^2\right)_{\infty} \left(\frac{q^2}{a}; q^2\right)_{\infty} \left(q^2y; q^2\right)_{\infty}},
\]
followed by  $y =zq^{\frac{-3}{2}}$:
\[
\frac{1}{1-a}+\sum_{k=1}^{\infty} \frac{z^k q^{\frac{k}{2}}}{1-aq^{2k}}  -\frac{1}{a}\sum_{k=1}^{\infty} \frac{z^{-k}q^{\frac{3k}{2}}}{1-a^{-1}q^{2k}} = \frac{1}{(1-a)}\frac{\left(q^2; q^2\right)_{\infty}^2\left(a^{-1}z^{-1}q^{\frac{3}{2}}; q^2\right)_{\infty}\left(azq^{\frac{1}{2}}; q^2\right)_{\infty}}{\left(aq^2; q^2\right)_{\infty} \left(z^{-1}q^{\frac{3}{2}}; q^2\right)_{\infty} \left(\frac{q^2}{a}; q^2\right)_{\infty} \left(zq^{\frac{1}{2}}; q^2\right)_{\infty}}.
\]
Finally, letting $a=-z^{-1}q^{\frac{3}{2}}$ we get
\begin{align*}
\frac{1}{1+z^{-1}q^{\frac{3}{2}}}+\sum_{k=1}^{\infty} \frac{z^k q^{\frac{k}{2}}}{1+z^{-1}q^{2k+\frac{3}{2}}}  & +\sum_{k=1}^{\infty} \frac{z^{-k+1}q^{\frac{3k-3}{2}}}{1+zq^{2k-\frac{3}{2}}}\\
 &= \frac{1}{(1+z^{-1}q^{\frac{3}{2}})}\frac{\left(q^2; q^2\right)_{\infty}^2\left(-1; q^2\right)_{\infty}\left(q^2; q^2\right)_{\infty}}{\left(-z^{-1}q^{\frac{7}{2}}; q^2\right)_{\infty} \left(z^{-1}q^{\frac{3}{2}}; q^2\right)_{\infty} \left(-zq^{\frac{1}{2}}; q^2\right)_{\infty} \left(zq^{\frac{1}{2}}; q^2\right)_{\infty}}.
\end{align*}
Which simplifies to
\begin{align*}
\sum_{k=0}^{\infty} \frac{z^k q^{\frac{k}{2}}}{1+z^{-1}q^{2k+\frac{3}{2}}} & +\sum_{k=0}^{\infty} \frac{z^{-k}q^{\frac{3k}{2}}}{1+zq^{2k+\frac{1}{2}}}\\
 &= \frac{2\prod_{n=1}^{\infty} \left(1-q^{2n}\right)^2 \prod_{n=1}^{\infty} \left(1+q^{2n}\right)^2}{\prod_{n=1}^{\infty} \left(1+z^{-1}q^{2n-\frac{1}{2}}\right)\prod_{n=1}^{\infty} \left(1-z^{-1}q^{2n-\frac{1}{2}}\right)\prod_{n=1}^{\infty} \left(1-zq^{2n-\frac{3}{2}}\right)\prod_{n=1}^{\infty} \left(1+zq^{2n-\frac{3}{2}}\right)}.
\end{align*}

\def\cprime{$'$}

\end{document}